\documentclass{article}
\usepackage{fullpage}%package to reduce margins
\usepackage[usenames]{color}
\usepackage[colorlinks = true]{hyperref}
\usepackage[english]{babel}
\usepackage{bm,bbm,amsmath,amssymb,amsthm,graphicx}
\usepackage{url}
\usepackage[caption=false,font=normalsize,labelfont=sf,textfont=sf]{subfig}
\usepackage[title]{appendix}
\usepackage{algorithm}
\usepackage{algpseudocode}
\usepackage{todonotes}
\usepackage{tikz}
\usetikzlibrary{calc,patterns,decorations.pathmorphing,decorations.markings}

\newcommand{\SL}[1]{\textcolor{black}{#1}}      %orange
\providecommand{\keywords}[1]{\textbf{\textit{Index terms---}} #1}

\numberwithin{equation}{section}

\newcommand{\change}[1]{\ensuremath{\operatorname{#1}}}
\newcommand{\MAT}{\left[ \begin{array}}  %\MAT{ccc}   ... \mat
\newcommand{\mat}{\end{array} \right]}
\newtheorem{Definition}{Definition}[section]

\newtheorem{Lemma}{Lemma}[section]
\newtheorem{Theorem}{Theorem}[section]

\def \st {\operatorname*{s.t. }}

\def \a {\bm{a}}
\def \A {\mathbf{A}}
\def \AA {\mathcal{A}}

\def \b {\bm{b}}
\def \B {\mathbf{B}}
\def \BB {\mathcal{B}}
\def \c {\bm{c}}

\def \CN {\mathcal{CN}}
\def \CCC {\mathbb{C}}

\def \D {\mathbf{D}}

\def \Db {\overline{\mathbf{D}}}
\def \e {\bm{e}}
\def \E {\mathbf{E}}
\def \EE{\mathcal{E}}
\def \EEE{\mathbb{E}}

\def \FF {\mathcal{F}}
\def \g {\bm{g}}

\def \h {\bm{h}}

\def \HH {\mathcal{H}}
\def \I {\mathbf{I}}

\def \K {\mathbf{K}}
\def \KK {\mathcal{K}}

\def \Kt{\widetilde{\mathbf{K}}}
\def \L {\mathbf{L}}

\def \M {\mathbf{M}}

\def \NN {\mathcal{N}}
\def \OO {\mathcal{O}}

\def \PP {\mathcal{P}}
\def \PPP {\mathbb{P}}
\def \q {\bm{q}}
\def \Q {\mathbf{Q}}
\def \QQ {\mathcal{Q}}
\def \QQt {\widetilde{\mathcal{Q}}}
\def \QQb {\overline{\mathcal{Q}}}

\def \r{\bm{r}}

\def \RRR {\mathbb{R}}
\def \S {\mathbf{S}}

\def \Sb {\overline{\mathbf{S}}}
\def \T {\mathbf{T}}
\def \TT {\mathcal{T}}
\def \Th {\widehat{\mathbf{T}}}
\def \u {\bm{u}}
\def \uh {\widehat{\bm{u}}}

\def \v {\bm{v}}
\def \V {\mathbf{V}}

\def \w {\bm{w}}
\def \W {\mathbf{W}}
\def \WW {\mathcal{W}}

\def \x {\bm{x}}
\def \xs {\bm{x}^\star}
\def \xh {\widehat{\bm{x}}}
\def \X {\mathbf{X}}

\def \Xh {\widehat{\mathbf{X}}}
\def \Xs {\mathbf{X}^\star}

\def \y {\bm{y}}

\def \z {\bm{z}}

\def \ZZ {\mathcal{Z}}

\def \sumj{\sum_{j=1}^J}
\def \sumk{\sum_{k=1}^K}
\def \summ{\sum_{m=-2M}^{2M}}

\def \tr{\change{tr}}

\def \balpha {\boldsymbol{\alpha}}

\def \balphab {\overline{\boldsymbol{\alpha}}}
\def \bbeta {\boldsymbol{\beta}}

\def \bbetab {\overline{\boldsymbol{\beta}}}

\def \bmu {\boldsymbol{\mu}}
\def \bmuh {\widehat{\boldsymbol{\mu}}}
\def \bnu {\boldsymbol{\nu}}
\def \bxi {\boldsymbol{\xi}}

\def \sigmat {\widetilde{\sigma}}

\def \deltat {\widetilde{\delta}}
\def \taut {\widetilde{\tau}}
\def \zero {\mathbf{0}}

\begin{document}

\title{Atomic Norm Denoising for Complex Exponentials\\ with Unknown Waveform  Modulations}

\author{Shuang Li, Michael B. Wakin, and Gongguo Tang \thanks{Department of Electrical Engineering, Colorado School of Mines. Email: \{shuangli,mwakin,gtang\}@mines.edu.}}

\date{February 15, 2019 ~~~~Revised: September 04, 2019}

\maketitle

\begin{abstract}
 
Non-stationary blind super-resolution is an extension
%\SL{(Mike: How? Shuang: I think we explained this in the first paragraph of introduction.)}
of the traditional super-resolution problem, which deals with the problem of recovering fine details from coarse measurements. The non-stationary blind super-resolution problem appears in many applications including radar imaging, 3D single-molecule microscopy, computational photography, etc. There is a growing interest in solving non-stationary blind super-resolution task with convex methods due to their robustness to noise and strong theoretical guarantees. Motivated by the recent work on atomic norm minimization in blind inverse problems, we focus here on the signal denoising problem in non-stationary blind super-resolution. In particular, we use an atomic norm regularized least-squares problem to denoise a sum of complex exponentials with unknown waveform modulations. We quantify how the mean square error depends on the noise variance and the true signal parameters. Numerical experiments are also implemented to illustrate the theoretical result.

\end{abstract}

\keywords{Atomic norm denoising, non-stationary blind deconvolution, line spectrum estimation, demodulation, mean
 square error.}

\section{Introduction}
\label{intr}

Super-resolution is the process of recovering high-resolution information of a signal from its coarse-scale measurements~\cite{candes2014towards}. Super-resolution problems arise in a wide variety of applications, including microscopy~\cite{mccutchen1967superresolution}, imaging spectroscopy~\cite{harris1994super}, radar signal demixing~\cite{xie2017radar}, astronomical imaging~\cite{puschmann2005super}, and medical imaging~\cite{greenspan2008super}.
%\note{please add citations to the following sentences}
One specific super-resolution problem has received considerable theoretical study in recent years: recovering positions on the interval $[0,1)$ of unknown point sources given only low-frequency samples~\cite{candes2014towards, candes2013super}. Exchanging the roles of time and frequency, this problem is equivalent to that of line spectral estimation: recovering frequencies on the interval $[0,1)$ of unknown complex exponentials given limited or possibly compressed time-domain samples. The total variation norm has been used to regularize such super-resolution problems, and this is equivalent to the atomic norm that has been used to regularize such line spectral estimation problems.

In this work, we are interested in the non-stationary blind super-resolution scenario, which extends the above super-resolution problem to the setting where each point source is convolved with a unique and unknown point spread function. The term ``non-stationary'' indicates that the point spread functions are potentially different and comes from the field of non-stationary deconvolution~\cite{margrave2011gabor}; the term ``blind'' indicates that the point spread functions are unknown. Non-stationary blind super-resolution problems also appear in applications involving radar imaging~\cite{heckel2016super}, astronomy~\cite{starck2002deconvolution}, photography~\cite{fergus2006removing}, 3D single-molecule microscopy~\cite{quirin2012optimal}, seismology~\cite{margrave2011gabor} and nuclear magnetic resonance (NMR) spectroscopy~\cite{qu2015accelerated, cai2016robust,ying2018vandermonde}.
%\SL{[Check reference~\cite{qu2015accelerated}. It didn't consider the signal model~\eqref{signal_noiseless}. May need to find another place to cite this paper].}
Exchanging the roles of time and frequency, the conventional line spectral estimation problem is modified as follows: each complex exponential is modulated (pointwise multiplied) by an unknown waveform, and this waveform can vary from one complex exponential to the next. Though both problem formulations are equivalent, it is this modified line spectrum estimation problem that we detail in Section~\ref{prob} and refer to throughout this paper.

%Super-resolution is the process of recovering high-resolution information of a signal from its coarse-scale measurements~\cite{candes2014towards}. The super-resolution problems arise in a wide variety of applications, including microscopy~\cite{mccutchen1967superresolution}, imaging spectroscopy~\cite{harris1994super},    radar signal demixing~\cite{xie2017radar}, astronomical imaging~\cite{puschmann2005super}, and medical imaging~\cite{greenspan2008super}.
%The non-stationary blind super-resolution scenario extends traditional super-resolution to address the problem of recovering a sum of unknown complex exponentials from their modulations with unknown waveforms, as is studied in~\cite{yang2016super}. The term ``non-stationary" comes from the field of non-stationary deconvolution~\cite{margrave2011gabor}. \SL{We use ``non-stationary blind" to emphasize that both the modulation waveforms and the point sources contained in the complex exponentials are {\it unknown}, and the {\it unknown} modulation waveforms can {\it vary} with the point sources contained in the {\it unknown} complex exponentials.} Non-stationary blind super-resolution problems also appear in applications involving radar imaging~\cite{heckel2016super}, astronomy~\cite{starck2002deconvolution}, photography~\cite{fergus2006removing}, 3D single-molecule microscopy~\cite{quirin2012optimal}, seismology~\cite{margrave2011gabor} and nuclear magnetic resonance (NMR) spectroscopy~\cite{cai2016robust}.

In recent years, convex methods have been widely used in super-resolution due to their robustness to noise and strong theoretical guarantees. Among them, atomic norm minimization based methods are extremely popular. In~\cite{tang2013compressed}, the authors propose an atomic norm minimization based scheme to super-resolve unknown frequencies of a signal from its random time samples. Yang et al.~\cite{yang2014exact} extend the super-resolution of unknown frequencies in~\cite{tang2013compressed} to the case where multiple measurement vectors are available. Our earlier work~\cite{li2017atomicIEEE} also brings atomic norm minimization into the application of modal analysis for super-resolution of unknown modal parameters of a vibration system from its random and compressed measurements. Meanwhile, the robustness of atomic norm minimization when given noisy data~\cite{candes2013super} has also been widely studied in the past few years. The authors in~\cite{bhaskar2013atomic} apply an atomic norm denoising based technique to line spectral estimation, which is one of the fundamental problems in statistical signal processing. The same authors~\cite{tang2015near} also establish a nearly optimal algorithm, which is called atomic norm soft thresholding, to denoise a mixture of complex sinusoids. In~\cite{li2018approximate}, the authors use atomic norm denoising to investigate the performance of super-resolution line spectral estimation with white noise and provide theoretical guarantees for support recovery. In addition, atomic norm denoising is also studied in~\cite{Li15} for the multiple measurement vector case.

Our work is most closely related to papers~\cite{tang2015near} and~\cite{yang2016super}. The authors in~\cite{tang2015near} focus on a mixture of complex exponentials, while we work on a superposition of complex exponentials with unknown waveform modulations. It can be seen from Section~\ref{prob} that our problem reduces to the problem studied in~\cite{tang2015near} by setting the subspace dimension $K=1$ and selecting the subspace matrix $\B$ as an $N\times 1$ vector with all ones. Both~\cite{tang2015near} and our work establish theoretical guarantees for the mean square error (MSE) with respect to the noise level and true signal parameters. However, since we deal with a more sophisticated scenario in this work, our theory also depends on properties of the subspace matrix $\B$. In addition, we provide an explicit success probability (unlike~\cite{tang2015near}) that increases with the number of signal samples. In~\cite{yang2016super}, the authors study the problem of non-stationary blind super-resolution in a noiseless setting, namely, recovering parameters of a sum of unknown complex exponentials from modulations with unknown waveforms. In contrast, we consider a more practical scenario in which the observed data are contaminated with noise. Therefore, we use different algorithms in this work. In~\cite{yang2016super}, one can exactly recover the unknown parameters with high probability when provided with enough samples. However, it is no longer possible to achieve exact recovery in this work since we only have access to the noisy observations. Thus, the goal of this work is to characterize the MSE as a function of the noise variance and the true signal parameters.
%\SL{(Also we use different algorithms.)}

The main contribution of this work is that we have quantified how the MSE depends on the noise level and the true signal parameters. Namely, we provide a theoretical result to bound the MSE in terms of the noise variance, the total number of uniform samples, the number of true frequencies and the dimension of the subspace in which the unknown waveform modulations live. To be more precise, we have proved both theoretically and numerically that 1) the MSE scales linearly with the noise variance and the subspace dimension, and 2) the MSE is inversely proportional to the total number of uniform samples.
%{\bf [Mike: actually, our upper bound scales like this
%(the true MSE could be better, for example, linear in $J$, although it turns out it does seem to scale like $J^2$).]}
We have proved theoretically that the MSE scales at worst with square of the number of true frequencies but numerical experiments show that it scales linearly with the number of frequencies. We leave the problem of improving our theoretical bound to match the numerical experiments for our future work.

The remainder of this paper is organized as follows. In Section~\ref{prob}, we set up our problem, propose an atomic norm denoising program and introduce its semidefinite program (SDP). In Section~\ref{main}, we present the main theorem that provides the  theoretical guarantee for the atomic norm denoising program. In Section~\ref{nume}, we illustrate the theoretical guarantee with several numerical simulations. The proof for the main theory is presented in Section~\ref{proof_THM_MSE}. Finally, we conclude this work and discuss future direction in Section~\ref{conc}. The Appendix provides some supplementary theoretical results.

\section{Problem Formulation}
\label{prob}

In this work, we consider the following signal
\begin{align}
\xs(m) = \sumj c_j e^{-i2\pi m \tau_j} \g_j(m),~m=-2M,\ldots,0,\ldots,2M,
\label{signal_noiseless}
\end{align}
which can be interpreted as uniform samples of a \textcolor{black}{continuous-time superposition} of complex exponentials with unknown amplitudes and frequencies, each modulated by a different waveform. The requirement for the above sampling indices to be centered around zero is just for technical convenience. Denote $N=4M+1$ as the length of our samples. All the conclusions in this work remain true with appropriate modifications for any $N$ consecutive samples. Without loss of generality, we assume that the unknown coefficients $c_j >0$ and the unknown frequencies $\tau_j$ are normalized, i.e., $\tau_j\in[0,1)$. $\g_j\in\CCC^N$ are unknown waveforms and $J$ is the number of active frequencies in the signal $\xs$.

\textcolor{black}{The signal model introduced in~\eqref{signal_noiseless} appears in a wide range of applications.
%\footnote{\textcolor{black}{We only list a few examples here. The readers can refer to~\cite{yang2016super} for more detailed discussions on some other applications.}}
For example, the authors in~\cite{heckel2016super} consider a radar imaging problem of identifying the relative distances and velocities of targets from a received signal $x(t) = \sumj c_j e^{i2\pi \tau_j t} g(t-\nu_j)$, which can be viewed as a sum of finitely many delays (by $\nu_j$) and Doppler shifts (by $\tau_j$) of a given transmitted signal $g(t)$. The signal model~\eqref{signal_noiseless} can be obtained by sampling the received signal $x(t)$ and defining $\g_j$ to be samples of the $j$th delayed copy of $g(t)$. In addition, the signal in NMR spectroscopy is modeled as $x(t) = \sumj c_j e^{i2\pi \tau_j t} e^{-\nu_j t}$ in~\cite{cai2016robust}. One can again obtain the signal model~\eqref{signal_noiseless} by sampling the NMR spectroscopy signal $x(t)$. Finally, the received signal in multi-user communication systems~\cite{luo2006low} can be modeled as $x(t) = \sum_{j=1}^J c_j g_j(t-\tau_j)$, with each $c_j$ being an unknown coefficient and each $\tau_j$ being an unknown delay of an unknown transmitted signal $g_j(t)$. In this case, the signal model~\eqref{signal_noiseless} can be obtained by sampling the Fourier transform of $x(t)$, namely, $\xh(f) = \sumj c_j e^{-i2\pi f \tau_j}\widehat{g}_j(f)$.}

As noise is ubiquitous in practice, we may only have access to the noisy observations of $\xs$, namely,
\begin{align*}
\y(m) = \xs(m) + \z(m),~m=-2M,\ldots,0,\ldots,2M,
\end{align*}
where $\z$ is the observation noise with i.i.d.\ complex Gaussian entries from the distribution $\CN(0,\sigma^2)$.
To recover the unknown frequencies $\{\tau_j\}$, coefficients $\{c_j\}$ and modulation waveforms $\{\g_j\in\CCC^N\}$ from the noisy observation $\y$, we observe that the number of degrees of freedom ($\OO(JN)$) is much larger than the number of observations ($\OO(N)$), which implies that we need some other assumptions to make the inverse problem well-posed.
%{\bf Mike: We should argue somewhere that this assumption is reasonable in practice, not just that we must make this assumption so that the problem is not ill-posed.}
Therefore, we assume that all the unknown waveforms $\{\g_j\}$ belong to a common and known low-dimensional subspace, which is spanned by the columns of a matrix $\B\in\CCC^{N\times K}$ with $K\leq N$. Let $\b_m\in\CCC^K$ denote the $m$-th column of $\B^H$, i.e.,
\begin{align*}
\B = [\b_{-2M}~\cdots~\b_0~\cdots~\b_{2M}]^H.
\end{align*}
Then, we have $\g_j = \B \h_j$ for some unknown coefficients $\h_j\in\CCC^K$. Without loss of generality, we also assume that $\h_j$ has unit norm, i.e., $\|\h_j\|_2=1$. This is because the coefficients $c_j$ can be scaled as needed. Note that $\g_j$ can be estimated once $\h_j$ is recovered. Therefore, the number of degrees of freedom becomes $\OO(JK)$, which can be smaller than $\OO(N)$ when we have enough measurements, that is, when $N$ is large enough.

\SL{As is illustrated in~\cite{yang2016super}, the assumption that the unknown waveforms $\{\g_j\}$ belong to a common and known low-dimensional subspace appears in many real applications such as super-resolution imaging and multi-user communication systems. For example, the point spread functions in super-resolution imaging can be modeled as Gaussian kernels with unknown widths~\cite{quirin2012optimal,huang2008three}. One can construct a dictionary of Gaussian functions having different widths and apply principal component analysis to this dictionary to discover a low-dimensional subspace that accurately represents the unknown point spread functions. This is also demonstrated in the simulations of~\cite{yang2016super}.
% In addition, it is also reasonable to assume that the unknown waveforms transmitted by different users in a multi-use communication system belong to a subspace~\cite{luo2006low}.
}

For any $\tau\in[0,1)$, define a vector $\a(\tau)\in\CCC^{N}$ as
\begin{align}
\a(\tau)=\left[
e^{i 2\pi \tau (-2M)}~\cdots~1~\cdots~e^{i 2\pi \tau (2M)}\right]^\top.
\label{atom_a}
\end{align}
Then, with the assumption that $\g_j = \B \h_j$, we can rewrite each sample of the signal $\xs$ in~\eqref{signal_noiseless} as
\begin{align*}
\xs(m) &= \sumj c_j \a(\tau_j)^H \e_m \b_m^H \h_j \\
& = \sumj c_j \tr\left( \e_m \b_m^H \h_j \a(\tau_j)^H  \right)\\
& = \tr\left( \e_m \b_m^H \sumj c_j \h_j \a(\tau_j)^H  \right)\\
& = \left\langle \sumj c_j \h_j \a(\tau_j)^H, \b_m \e_m^H    \right\rangle,
\end{align*}
where $\e_m$ is the $(m+2M+1)$-th column of the $N\times N$ identity matrix $\I_N$\footnote{Note that we use $\I$ with a subscript to denote an identify matrix with appropriate size in this work.} and $\tr(\cdot)$ denotes the trace of a matrix. The inner product between two matrices $\X_1$ and $\X_2$ is defined as
\begin{align*}
\langle \X_1,\X_2 \rangle = \tr\left(\X_2^H \X_1\right).
\end{align*}
Define a linear operator $\BB: \CCC^{K\times N} \rightarrow \CCC^N$ and its corresponding adjoint operator $\BB^*: \CCC^N \rightarrow \CCC^{K\times N}$ as
\begin{align}
\left[ \BB(\X) \right]_m &\triangleq \left\langle \X,\b_m \e_m^H  \right\rangle,\label{def_B}\\
\BB^*(\x) &\triangleq \summ  \x(m) \b_m \e_m^H, \label{def_B*}
\end{align}
where $\left[ \BB(\X) \right]_m$ denotes the $m$-th entry of $\BB(\X)$.
Define
\begin{align}
\Xs \triangleq \sumj c_j \h_j \a(\tau_j)^H.
\label{data_noiseless}
\end{align}
Then, we have
\begin{align*}
\xs = \BB(\Xs)
\end{align*}
and the noisy observation vector $\y$ can be rewritten as
\begin{align*}
\y = \xs+\z = \BB(\Xs)+\z.
\end{align*}

When $J$ is small, the noiseless data matrix $\Xs$, defined in~\eqref{data_noiseless}, can be viewed as a sparse combination of elements from the following atomic set
\begin{align*}
\AA \triangleq \left\{\h \a(\tau)^H: \tau\in[0,1), \|\h\|_2=1, \h\in\CCC^K\right\}
\end{align*}
with $\a(\tau)$ defined in~\eqref{atom_a}.
The associated atomic norm is then defined as
\begin{equation}
\begin{aligned}
\|\X\|_{\AA} &\triangleq \inf \{t>0: \X\in t \change{conv}(\AA) \}\\
&= \inf_{c_j,\tau_j,\|\h_j\|_2=1} \left\{ \sumj c_j: \X = \sumj c_j \h_j \a(\tau_j)^H, c_j>0   \right\},
\label{atomic_norm}
\end{aligned}
\end{equation}
where $\change{conv}(\AA)$ is the convex hall of the atomic set $\AA$.

Let $\lambda$ denote a suitably chosen regularization parameter.\footnote{Please refer to Section~\ref{bound_regularization} for guidelines on choosing $\lambda$.}
Inspired by the sparse representation of $\Xs$ with respect to the atomic set $\AA$, we perform denoising by proposing an atomic norm regularized least-squares problem
\begin{align}
\Xh = \arg\min_{\X} \frac 1 2 \|\y-\BB(\X)\|_2^2 + \lambda\|\X\|_{\AA},
\label{atomic_denoising}
\end{align}
which is equivalent to the following semidefinite program (SDP)\footnote{
In practice, one can use the CVX software package to solve this SDP~\cite{grant2008cvx}.}~\cite{yang2014exact,Li15}
\begin{align*}
\{ \Xh, \uh, \Th \} = \arg\min_{\X,\u,\T}&~ \frac 1 2 \|\y-\BB(\X)\|_2^2 + \frac{\lambda}{2}\left(\frac{1}{N} \tr(\change{Toep}(\u))+  \tr(\T)   \right)\\
\st&\left[ \begin{array}{cc}
\change{Toep}(\u)& \X^H\\
\X&\T
\end{array}
\right]\succeq 0.
\end{align*}
Here, $\change{Toep}(\u)\in\CCC^{N\times N}$ denotes the Hermitian Toeplitz matrix with $\u$ being its first column.

In this work, our goal is to analyze the performance of the above denoising scheme by bounding the mean-squared error (MSE) $\frac{1}{N}\|\xh-\xs\|_2^2$ between the solution $\xh = \BB(\Xh)$ and the true signal $\xs = \BB(\Xs)$.
%We use $\Xh$ to denote solution of the atomic norm denoising problem~\eqref{atomic_denoising}.

\section{Theoretical Guarantee for Atomic Norm Denoising}
\label{main}

Motived by~\cite{candes2011probabilistic, chi2016guaranteed, yang2016super}, we assume that the columns of $\B^H$, i.e., $\b_m\in\CCC^K$, are independent and identically distributed (i.i.d.) samples from a distribution that satisfies the isotropy and incoherence properties with coherence parameter $\mu$.
%\footnote{\SL{As is evidenced by the numerical experiments in~\cite{yang2016super}, the randomness assumption on $\B$ does not appear to be critical in pratice.}}
\begin{itemize}
\item \textbf{Isotropy property:} A distribution $\FF$ satisfies the isotropy property if\footnote{Note that this definition of isotropy property is slightly different from the one used in~\cite{candes2011probabilistic, chi2016guaranteed, yang2016super}. To give an example of $\b$ that obeys the isotropy and incoherence properties~\eqref{isotprop} and~\eqref{incoprop} with $\mu = 1$, we can choose the entries of $\b$ from the Rademacher distribution. }
\begin{align}
\EEE \b \b^H = \I_K,~\text{and}~\EEE\left( \frac{\b}{\|\b\|_2} \frac{\b^H}{\|\b\|_2}\right) = \frac 1 K \I_K,~~ \b\in\FF.
\label{isotprop}
\end{align}
\item \textbf{Incoherence property:} A distribution $\FF$ satisfies the incoherence property with coherence $\mu$ if
\begin{align}
\max_{1\leq k \leq K} |\b(k)|^2 \leq \mu,~~ \b\in\FF
\label{incoprop}
\end{align}
holds almost surely. Here, $\b(k)$ denotes the $k$-th entry of $\b$.
\end{itemize}

Next, we present the main result that characterizes the MSE $\frac{1}{N}\|\xh-\xs\|_2^2$ in the following theorem.

\begin{Theorem}
\label{THM_MSE}
Suppose the noiseless signal $\xs$ is given as in~\eqref{signal_noiseless} with the true frequencies satisfying a minimum separation condition
\begin{align}
\min_{j\neq k} d(\tau_j,\tau_k) \geq \frac 1 N,
\label{minsepcon}
\end{align}
where $d(\tau_j,\tau_k) \triangleq |\tau_j-\tau_k|$ denotes the wrap-around distance on the unit circle. Assume that we have $N=4M+1$ noisy measurements $\y(m) = \xs(m) + \z(m),~m=-2M,\ldots,0,\ldots,2M$, where $\z(m)$ is i.i.d.\ complex Gaussian noise with mean 0 and variance $\sigma^2$. We also assume that $\b_m$ are i.i.d.\ samples from a distribution that satisfies the isotropy~\eqref{isotprop} and incoherence~\eqref{incoprop} properties with coherence parameter $\mu$. Then, the estimate $\xh$ obtained by solving the atomic norm regularized least-squares problem~\eqref{atomic_denoising} with $\lambda = 2 \eta \sigma \|\B\|_F \sqrt{\log(N)}$ for some $\eta > 1$ satisfies
\begin{align}
\frac 1 N \|\xh-\xs\|_2^2 \leq C \eta^2 \sigma^2 \mu^2   \frac{J^2K}{N} \log(N) \log\left(NJK\right)
\label{error_prob}
\end{align}
with probability at least $1-cN^{-1}$ if $\eta \!\in\! (1 ,\infty )$ is chosen sufficiently large and $N \geq C \mu J^2 K \log\left(NJK \right)$.
Here, $C$ and $c$ are some numerical constants.\footnote{Note that the numerical constants $C$ and $c$ used in this paper can vary from line to line.}
%\note{\SL{Plugging $\|\B\|_F^2 \leq \mu KN$?}}
\end{Theorem}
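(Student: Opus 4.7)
The proof plan follows the standard atomic-norm denoising recipe adapted to the subspace setting: combine the first-order optimality of~\eqref{atomic_denoising} with a dual certificate, after absorbing the noise into a bound on the dual atomic norm of $\BB^*(\z)$. Starting from
\[
\tfrac12\|\y-\BB(\Xh)\|_2^2+\lambda\|\Xh\|_\AA\le \tfrac12\|\y-\BB(\Xs)\|_2^2+\lambda\|\Xs\|_\AA,
\]
expanding the squares and using $\y-\BB(\Xs)=\z$ yields the basic estimate
\[
\|\BB(\Xh-\Xs)\|_2^2 \le 2\langle \BB^*(\z),\Xh-\Xs\rangle + 2\lambda\bigl(\|\Xs\|_\AA-\|\Xh\|_\AA\bigr).
\]

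The main stochastic step is to show that, with probability at least $1-cN^{-1}$,
\[
\|\BB^*(\z)\|_\AA^* \;=\; \sup_{\tau\in[0,1)}\Bigl\|\summ z(m)\,e^{i2\pi m\tau}\,\b_m\Bigr\|_2 \;\le\; \eta\sigma\|\B\|_F\sqrt{\log N},
\]
so that the prescribed $\lambda$ dominates the dual norm of the noise. I would establish this by first pointwise-bounding $\|\BB^*(\z)\a(\tau)\|_2$ via Gaussian concentration (conditional on $\B$, the inner sum is $\CN(\zero,\sigma^2\B^H\B)$, so the isotropy property~\eqref{isotprop} controls the mean square at $\sigma^2\|\B\|_F^2$ and the incoherence property~\eqref{incoprop} controls the sub-exponential tails through $\mu$), and then extending to a uniform-in-$\tau$ bound by discretizing $[0,1)$ on a polynomially fine grid, applying a union bound, and using Lipschitz continuity of the process in $\tau$ to control the discretization error.

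Given $\lambda\ge 2\|\BB^*(\z)\|_\AA^*$, a standard duality argument gives the slow-rate estimate $\|\BB(\Xh-\Xs)\|_2^2\le 6\lambda\|\Xs\|_\AA$. To turn this into the signal-independent rate~\eqref{error_prob}, I would import the random dual certificate $\Q_0=\BB^*(\q_0)$ of~\cite{yang2016super}: under the minimum-separation condition~\eqref{minsepcon} and the sample complexity $N\ge C\mu J^2K\log(NJK)$, their construction produces, with high probability, a $\Q_0$ that interpolates $\Q_0\a(\tau_j)=\h_j$, satisfies $\|\Q_0\a(\tau)\|_2<1$ off-support with a quadratic curvature margin near each $\tau_j$, and admits an explicit bound on $\|\q_0\|_2$. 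Using $\Q_0$ as a subgradient of $\|\cdot\|_\AA$ at $\Xs$ converts the basic estimate into
\[
\|\BB(\Xh-\Xs)\|_2^2 \;\le\; 2\bigl\langle \z-\lambda\q_0,\ \BB(\Xh-\Xs)\bigr\rangle,
\]
and a tangent-space decomposition of $\Xh-\Xs$ (on-support versus off-support atoms), controlled by the curvature of $\Q_0$ and the near-isometry of $\BB$ on the $J$-dimensional signal tangent, finally yields the claimed rate.

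The hardest step will be this last one: adapting the noiseless certificate of~\cite{yang2016super} to drive a noisy error analysis, producing an explicit norm bound on $\q_0$ that interacts cleanly with the tangent-space decomposition, and carrying out the bookkeeping to obtain the precise polynomial dependence on $J$, $K$, and $\mu$. The $J^2$ rather than $J$ factor traces to one copy of $J$ from $\|\q_0\|_2$ and another from the dimension of the signal tangent space; sharpening either ingredient appears nontrivial, consistent with the authors' own remark that the $J^2$ scaling is likely an artifact of the proof rather than of the algorithm.
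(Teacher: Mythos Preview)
Your stochastic step and your basic inequality match the paper's Lemma~\ref{Lemma_EPBz} and the start of Section~\ref{bound_error_pr}. The divergence, and the gap, is in the last step.

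The route you sketch---use $\Q_0=\BB^*(\q_0)$ as a subgradient and then do a ``tangent-space decomposition'' of $\Xh-\Xs$---does not close as written. From your displayed inequality one gets by Cauchy--Schwarz $\|\e\|_2\le 2\|\z-\lambda\q_0\|_2$, and the term $\|\z\|_2\asymp\sigma\sqrt{N}$ alone already gives $\tfrac{1}{N}\|\e\|_2^2\lesssim\sigma^2$, which misses the target $\sigma^2 J^2K/N$ by a factor $N/(J^2K)$. A tangent/normal split in matrix space does not repair this cleanly for a continuous atomic dictionary: the ``on-support'' part of $\Xh-\Xs$ is not $J$-dimensional (the estimated frequencies need not sit at the $\tau_j$), and there is no restricted isometry of $\BB$ on any fixed low-dimensional subspace that contains the error. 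You have not said how you would avoid projecting the full noise $\z$ onto an $N$-dimensional direction.

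The paper does something different at exactly this point. It never Cauchy--Schwarzes against $\z-\lambda\q_0$. Instead it writes $\|\e\|_2^2=\int_0^1\bnu(\tau)^H\bxi(\tau)\,d\tau$ with $\bxi(\tau)=\BB^*(\e)\a(\tau)$ and $\bnu=\bmuh-\bmu$ the difference of representing measures, then performs a near/far region decomposition of this integral (Lemma~\ref{UBE}), controlling the near-region pieces $I_0,I_1,I_2$ by Taylor expansion and by the curvature of the certificate (Theorems~\ref{dualstability}--\ref{dualstability1}, Lemmas~\ref{I0I1}--\ref{FI2}). The new ingredient, with no analogue in your plan, is the modified polynomial $\QQt(\tau)=\BB^*(\BB\BB^*)^{-1}(\q)\,\a(\tau)$: the identity $\langle\q,\e\rangle=\langle\QQt,\bxi\rangle$ moves the certificate across $\BB$ and lets one bound $\bigl|\int_0^1\bnu^H\QQ\,d\tau\bigr|\le\|\bxi\|_{2,2}\,\|\QQt\|_{2,2}$. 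A matrix-Bernstein argument (Lemma~\ref{Q22}) then shows $\|\QQt\|_{2,2}\lesssim J\sqrt{\tfrac{1}{NK}\log(\cdot)}$, and closing the chain with $\|\bxi\|_{2,2}\le\max_m\|\b_m\|_2\,\|\e\|_2$ yields the rate after squaring. This is where both the $J^2$ and the $K$ in the final bound come from---not from $\|\q_0\|_2$ and a tangent dimension count, as you conjecture.

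If you want to salvage your approach, you would have to replace the global Cauchy--Schwarz by something that localizes $\z-\lambda\q_0$ to a $JK$-dimensional object; that is essentially what the near/far decomposition plus the $\QQt$ trick accomplish, and I do not see a simpler way to do it in this model.
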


\SL{Our use of the isotropy~\eqref{isotprop} and incoherence~\eqref{incoprop} properties parallels the assumptions made for subspace models in several related works. These properties were first defined in~\cite{candes2011probabilistic} for the development of a probabilistic and RIPless theory of compressed sensing, and then used in~\cite{chi2016guaranteed} for the problem of blind spikes deconvolution. Other random subspace assumptions are also used in~\cite{asif2009random, ahmed2013blind} for random channel coding and blind deconvolution. The transmitted signal in multi-user communication systems~\cite{luo2006low} can also be represented in a known low-dimensional random subspace in the case when each of the transmitters sends out a random signal for the sake of security, privacy, or spread spectrum communications.\footnote{\textcolor{black}{Random signals also appear in applications such as noise radar~\cite{axelsson2004noise}. The transmitted random signals can be either directly generated from a noise-generating microwave source or obtained by modulating a sine wave with random noise.}} Note that the matrix $\B$ with randomness assumptions can alternatively be viewed as a sensing measure to obtain random measurements of the data matrix $\Xs$ via~\eqref{def_B}. As stated in~\cite{candes2011probabilistic} and many other compressive sensing works, random measurements are crucial in the development of theoretical results, and can result in better empirical results. As evidenced by the numerical experiments in~\cite{yang2016super}, the randomness assumption on $\B$ does not appear to be critical in practice.} Finally, observe that we have used not only the same isotropy and incoherence properties on the random subspace model as in papers~\cite{chi2016guaranteed, yang2016super}, but also some other randomness assumptions on the subspace, namely,
$\EEE\left( \frac{\b}{\|\b\|_2} \frac{\b^H}{\|\b\|_2}\right) = \frac 1 K \I_K$, where $\b$ denotes arbitrary column of $\B^H$.
\SL{This part of the isotropy property is only an artifact of our proof technique.}
%the normalized vector $\frac{\b_m}{\|\b_m\|_2}$ are uniform on the unit sphere.

%\SL{[Provide a lower bound to evaluate how sharp~\eqref{error_prob} is.]}

\SL{Note that an oracle MSE rate of $\OO(\sigma^2 KJ/N)$ could be achieved if one had enough measurements (i.e., $N\geq KJ$) and perfect knowledge of the well-separated frequencies~$\tau_j$. To be more precise, the noiseless signal $\xs$ in~\eqref{signal_noiseless} can be written as $\xs = \M_{ab} \v_{ch}^\star$, where $\M_{ab}\in\CCC^{N\times KJ}$ is a matrix related to the subspace matrix $\B$ and vectors $\a(\tau_j)$, and $\v_{ch}^\star \in \CCC^{KJ}$ is a vector determined by the coefficients $c_j$ and $\h_j$. If $N\geq KJ$ and the well-separated frequencies~$\tau_j$ are known, the matrix $\M_{ab}$ is then known and is a tall matrix with full column rank due to the randomness of $\B$. Then, recovering $\xs$ from its noisy observation $\y = \xs + \z$ is equivalent to solving a least-squares problem. Elementary calculations then show that $\frac 1 N \EEE_{\z}\|\xh-\xs\|_2^2 = \sigma^2 \frac{KJ}{N}$. Therefore, our proposed MSE bound~\eqref{error_prob} is optimal except for an extra $J$ factor and some logarithmic terms. This extra $J$ factor may be removed by using some other proof strategies instead of the dual analysis of atomic norm. We leave the problem of improving our theoretical bound for future work.
}

Finally, note that we have removed the randomness assumption on $\h$ as is required in~\cite{yang2016super}. However, we bound the sample complexity $N$ with $\OO(J^2K)$ instead of $\OO(JK)$. We also notice that the author in~\cite{chi2016guaranteed} provides a sample complexity bound $\OO(J^2K^2)$ without using any randomness assumptions on $\h$.
%Moreover, due to the fact that $\EEE(\|\B\|_F^2) = NK$, we can conclude that the MSE $\frac 1 N \|\xh-\xs\|_2^2$ is on the order of $\OO\left(\sigma^2 \frac{J^2K}{N}\log(N) \log\left(\frac{K(J+1)}{\delta}\right)\right)$.
It is worth noting that those papers are not focused on signal denoising as we are in this work.
%\SL{[Provide more discussion. e.g, why $J^2$ vs. $J$ arises in this analysis.]}
\SL{The extra $J$ factor in our sample complexity bound is a result of removing the randomness assumption on $\h$ that is used in Lemmas~11 and 13 of reference~\cite{yang2016super}. %When compared with the above lower bound $\OO(\sigma^2 KJ/N)$ obtained in the case when we exactly know the well-separated frequencies, our MSE bound given in~\eqref{error_prob} may also contain an extra $J$ factor. 
}

\section{Numerical Simulations}
\label{nume}
%\note{\SL{ Needs more work!}}

In this section, we conduct four numerical experiments to support our theoretical bound in Theorem~\ref{THM_MSE}. In all of these experiments, we perform denoising by solving the SDP of the atomic norm regularized least-squares problem~\eqref{atomic_denoising} with CVX. As is suggested in Theorem~\ref{THM_MSE}, we set the regularization parameter as $\lambda = 2 \eta \sigma \|\B\|_F \sqrt{\log(N)}$ with $\eta = 0.5$\footnote{Note that we require $\eta > 1$ in Theorem~\ref{THM_MSE}. However, we find that $\eta = 0.5$ can achieve much lower MSE in practice. Thus, we set $\eta = 0.5$ for all of the following experiments.}. We generate the entries of $\B$ as Rademacher random variables and the entries of $\h_j,~j=1,\ldots,J$ as Gaussian random variables satisfying $\NN(0,1)$ and then normalize all the $\h_j$ to make sure $\|\h_j\|_2 = 1$. We generate the observation noise $\z$ with i.i.d.\ complex Gaussian entries satisfying $\CN(0,\sigma^2)$. 50 trials are performed for each of these experiments.

In the first experiment, we examine the relationship between the MSE $\frac 1 N \|\xh-\xs\|_2^2$ and the total number of uniform samples $N$ with $J = 3,~K=4,$ and $\sigma=0.1$. The true frequencies and corresponding amplitudes are set as $\tau_1 = 0.1,~\tau_2=0.15,~\tau_3=0.5$ and $c_1 = 1,~c_2 = 2,~c_3 = 3$. We change $M$ from 10 to 100, namely, $N = 4M+1$ changes from 41 to 401.
Figure~\ref{test_MSE_y}(a) shows the denoising performance of atomic norm regularized least-squares problem~\eqref{atomic_denoising} while Figure~\ref{test_MSE_y}(b) indicates that the MSE does scale with $\OO(\frac{1}{N}\log(N))$, which implies that the MSE can decrease linearly (if we ignore the log term) as we increase the number of uniform samples $N$.

In the second experiment, we characterize the impact of the noise variance $\sigma^2$ on MSE. We fix $M=20$ and set $\sigma = 0.025:0.025:0.25$. Other parameters are set the same as the first experiment. It can be seen from Figure~\ref{test_MSE}(a) that the MSE does scale linearly with $\sigma^2$, as is shown in Theorem~\ref{THM_MSE}.

To see the influence of the number of true frequencies $J$ on the MSE, we repeat the first experiment by fixing $M=20$ and changing $J$ from 1 to 10. We randomly select $J$ true frequencies from a set $\{0:\change{MinSep}:1-\change{MinSep}\}$ with $\change{MinSep}=\frac{1}{M}$ denoting the minimum separation. The corresponding amplitudes are then set as $c_j = j,~j=1,\ldots,J$. Other parameters are the same as those used in the first experiment.
Figure~\ref{test_MSE}(b) implies that the MSE scales with $J\log(J)$, which is better than the one indicated in Theorem~\ref{THM_MSE}. We leave the improvement of Theorem~\ref{THM_MSE} for future work.

Finally, we illustrate the relationship between the MSE and the subspace dimension $K$ in the last experiment. We set $M=20$ and change $K$ from 1 to 10. Other parameters are same as those used in the first experiment.
Figure~\ref{test_MSE}(c) roughly shows a linear relationship between the MSE and $K\log(K)$, which is consistent with the bound in Theorem~\ref{THM_MSE}.

\begin{figure}[t]
\begin{minipage}{0.49\linewidth}
\centering
\includegraphics[width=2.7in]{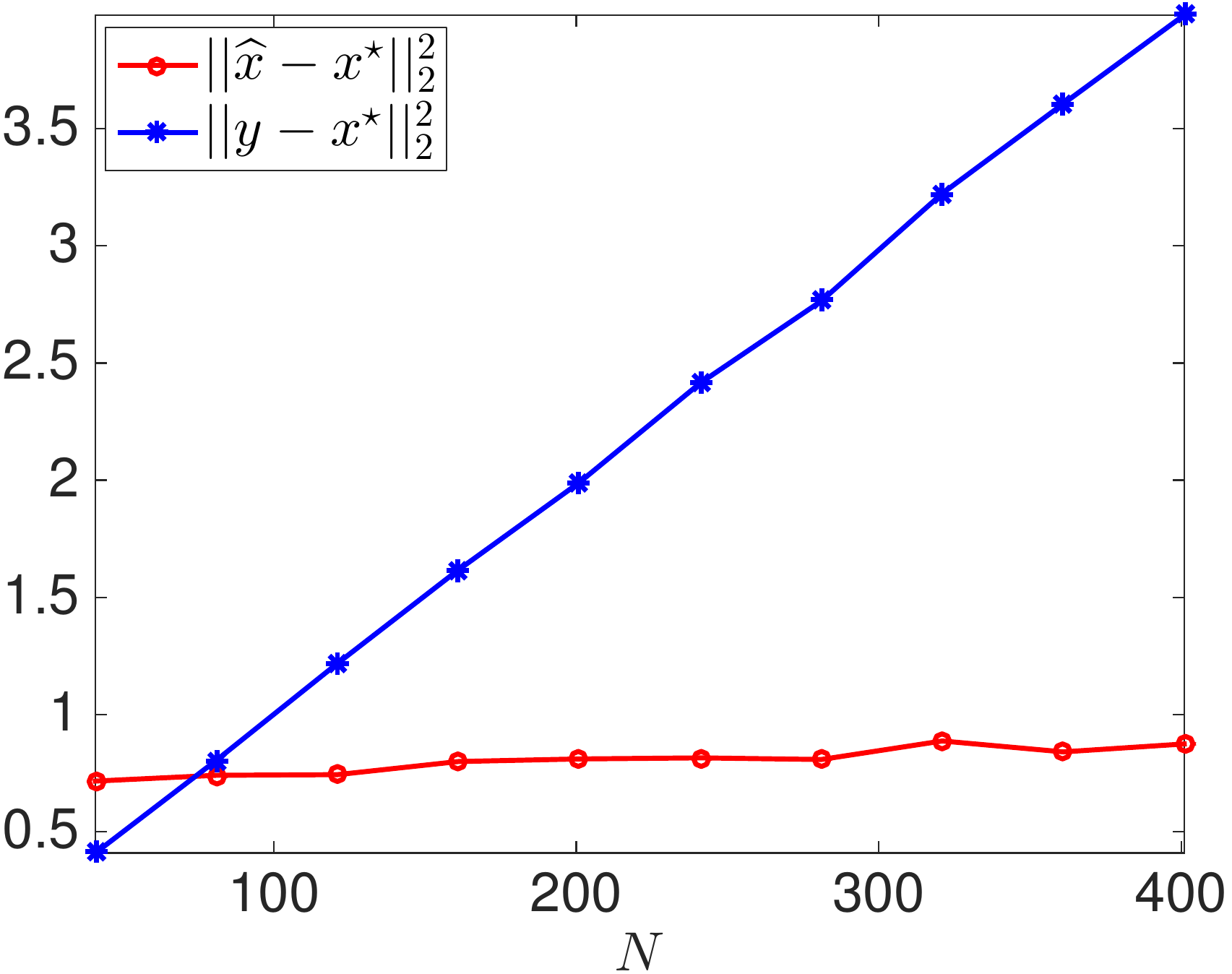}
\centerline{\small{(a) $J=3,~K=4,~\sigma=0.1$}}
\end{minipage}
\hfill
\begin{minipage}{0.49\linewidth}
\centering
\includegraphics[width=2.9in]{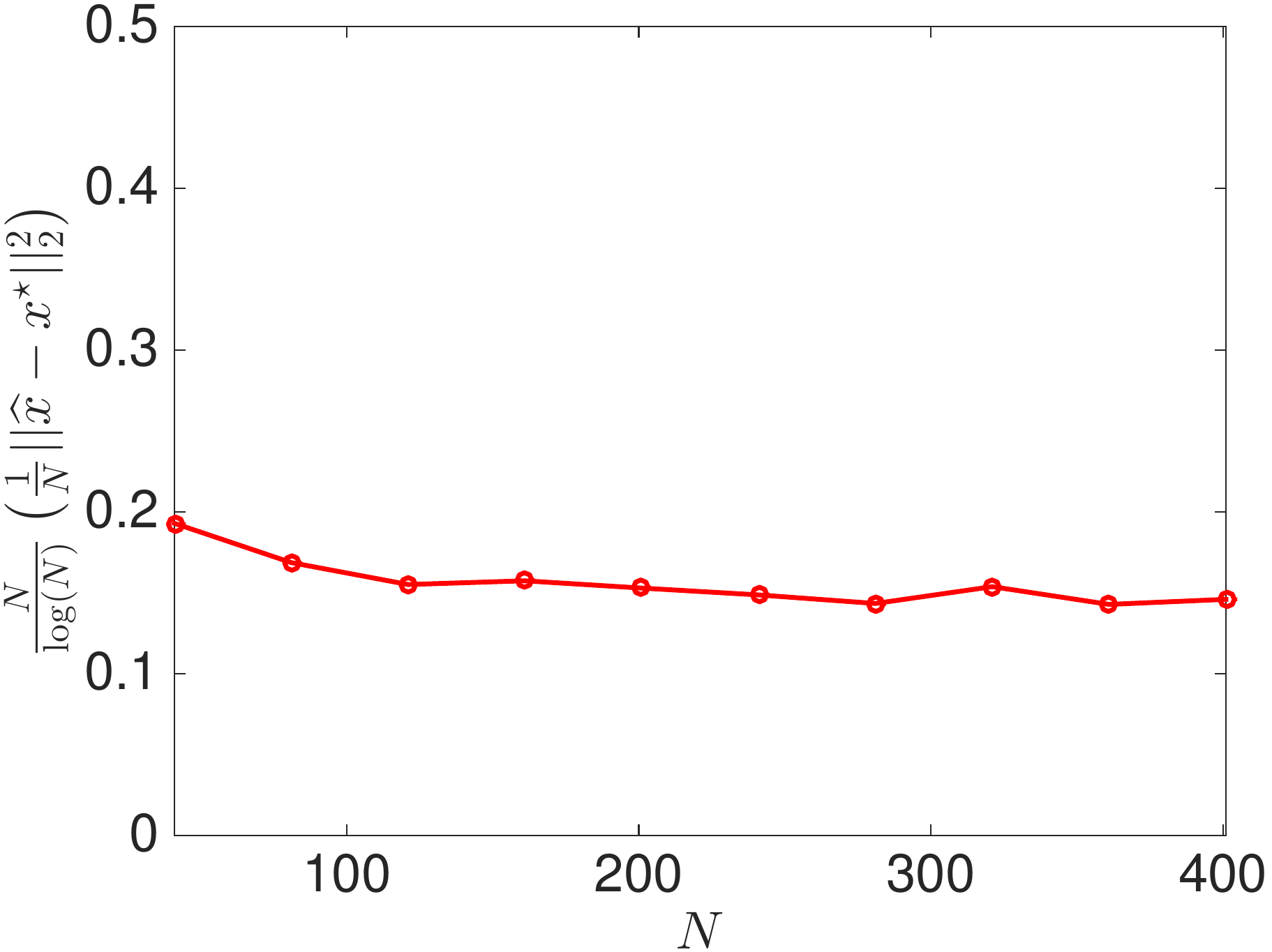}
\centerline{\small{(b) $J=3,~K=4,~\sigma=0.1$}}
\end{minipage}
\caption{(a) the denoising performance of atomic norm regularized least-squares problem~\eqref{atomic_denoising}, (b) the relationship between the scaled MSE $\frac{N}{\log(N)}\left(  \frac 1 N \|\xh-\xs\|_2^2\right)$ and the total number of uniform samples $N$. }
\label{test_MSE_y}
\end{figure}

\begin{figure}[t]

\centering
\includegraphics[width=2.9in]{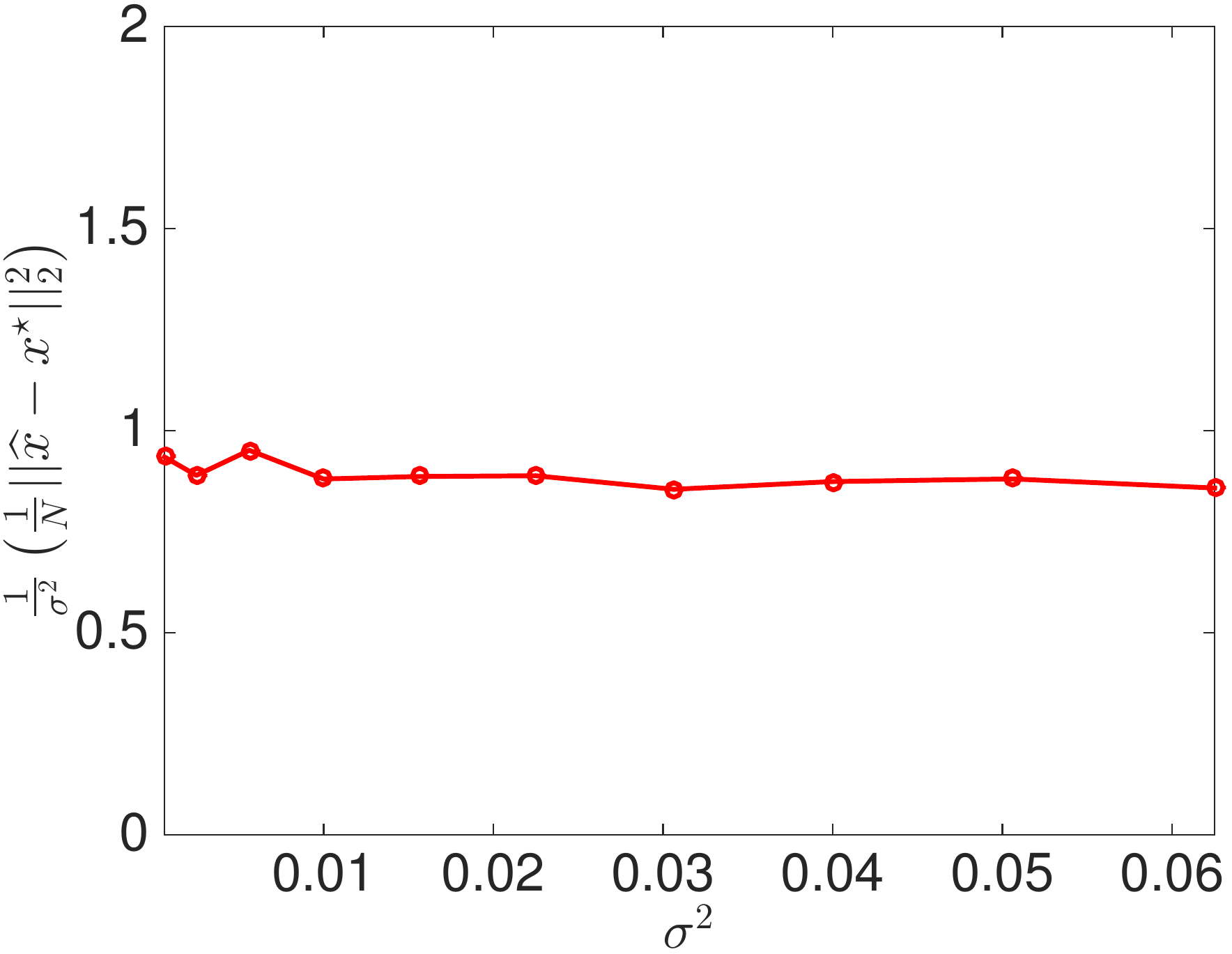}
\centerline{\small{(a) $J=3,~K=4,~M=20$}}
\\
\begin{minipage}{0.49\linewidth}
\centering
\includegraphics[width=2.9in]{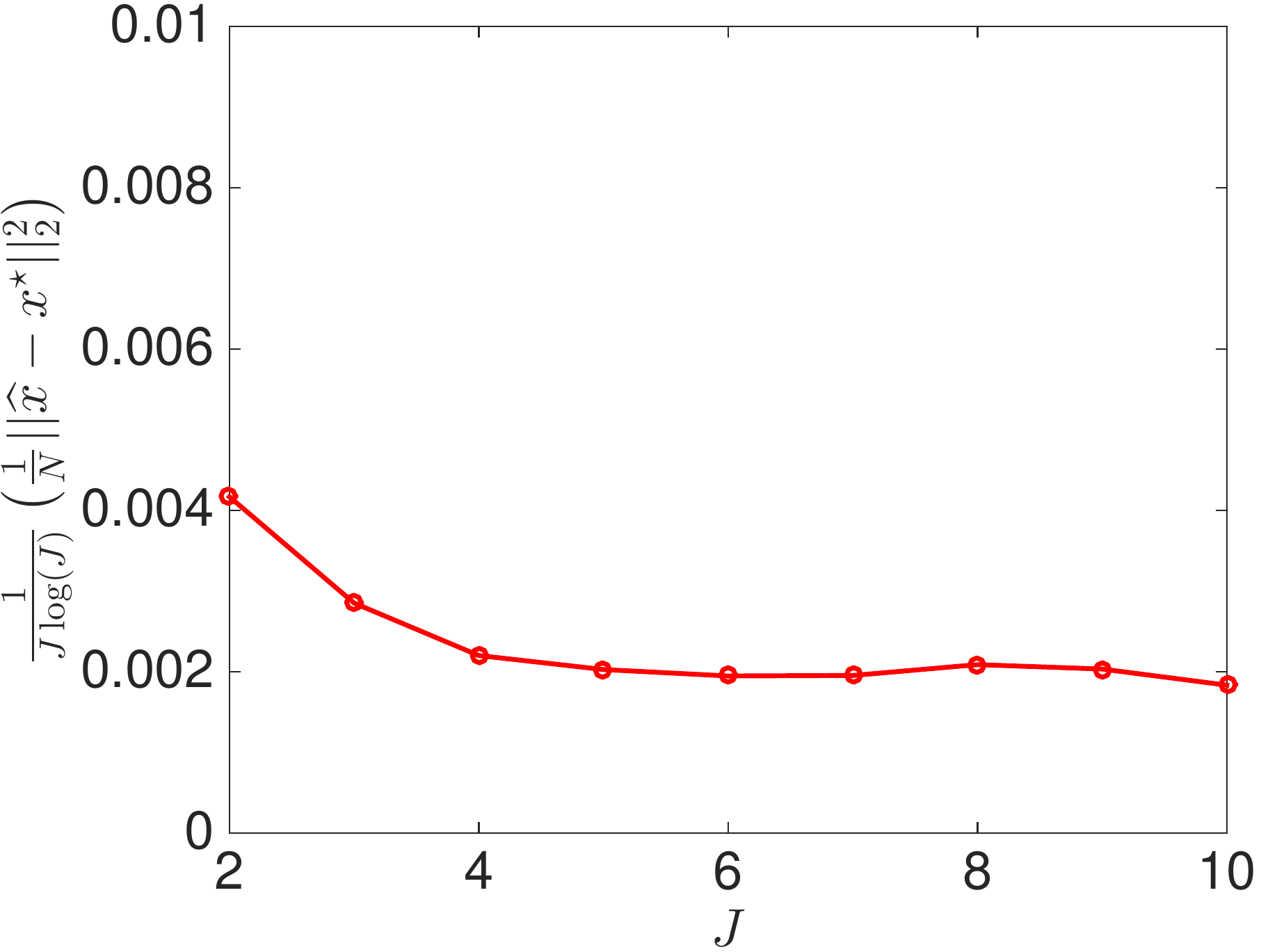}
\centerline{\small{(b) $M=20,~K=4,~\sigma=0.1$}}
\end{minipage}
\hfill
\begin{minipage}{0.49\linewidth}
\centering
\includegraphics[width=2.9in]{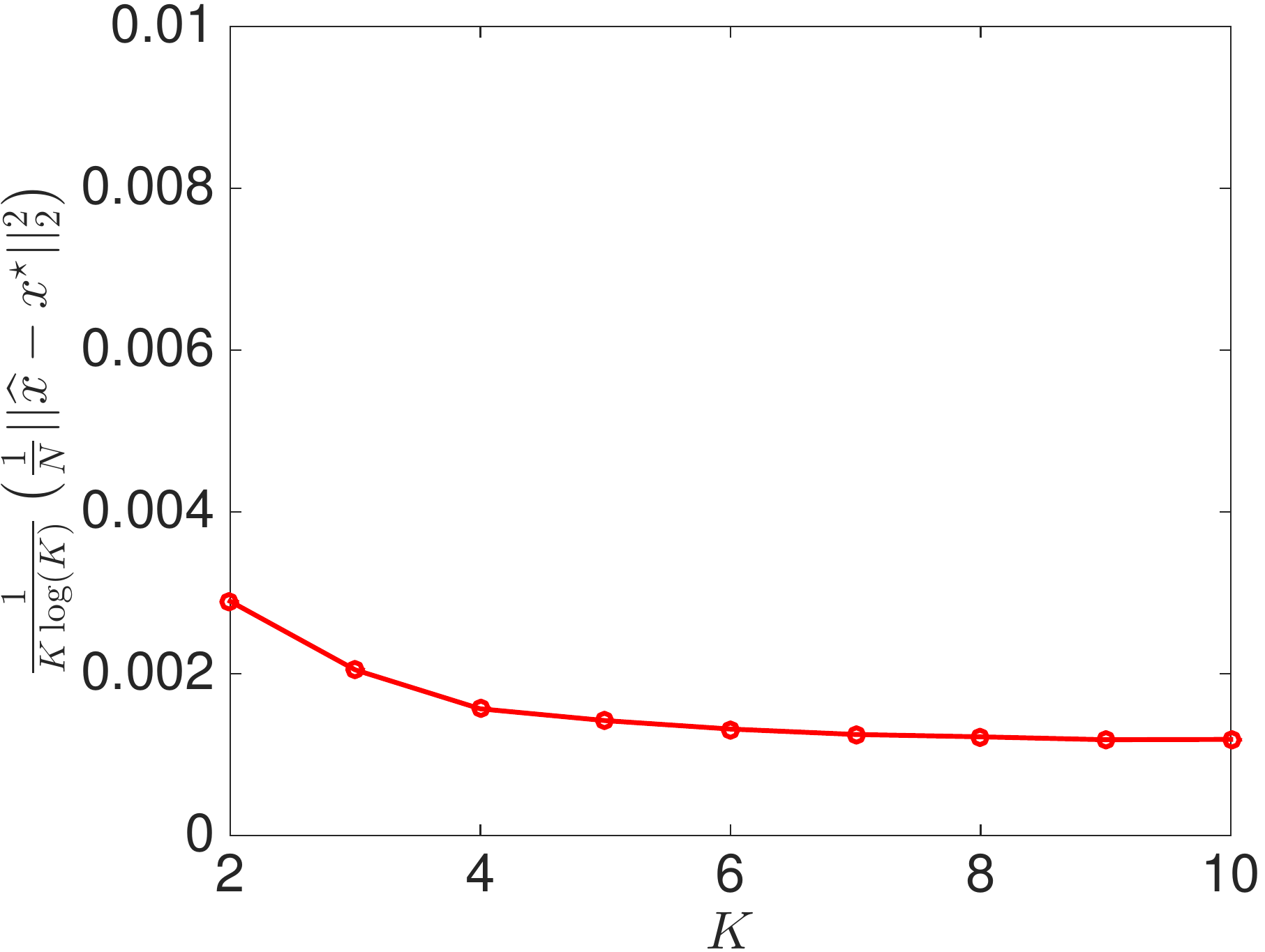}
\centerline{\small{(c) $M=20,~J=3,~\sigma=0.1$}}
\end{minipage}
\caption{The relationship between (a) the scaled MSE $ \frac{1}{\sigma^2} \left( \frac 1 N \|\xh-\xs\|_2^2\right)$ and noise variance $\sigma^2$, (b) the scaled MSE $ \frac{1}{J\log(J)} \left( \frac 1 N \|\xh-\xs\|_2^2\right)$ and the number of true frequencies $J$, (c) the scaled MSE $ \frac{1}{K\log(K)} \left( \frac 1 N \|\xh-\xs\|_2^2\right)$ and the subspace dimension $K$. }
\label{test_MSE}
\end{figure}

%{\bf [Mike: It seems like our decay rate is slower than 0.4/N. maybe we need to look at this more carefully?]  }

\section{Proof of Theorem~\ref{THM_MSE}}
\label{proof_THM_MSE}

%\SL{[Add an overview of the proof here.]}

\SL{The proof of Theorem~\ref{THM_MSE} is presented in two separate subsections. In Subsection~\ref{bound_regularization}, we discuss how to choose the regularization parameter $\lambda$.}
It is well known that a good choice of the regularization parameter $\lambda$ can achieve accelerated convergence rates.\footnote{\SL{The conditions under which an accelerated convergence rate can be obtained for an atomic norm denoising problem is discussed in~\cite{bhaskar2013atomic}.}} \SL{In Subsection~\ref{bound_error_pr}, with the well chosen regularization parameter $\lambda$, we bound the MSE with respect to the noise level and true signal parameters.} We prove Theorem~\ref{THM_MSE} by extending the proof of~\cite[Theorem 1]{tang2015near} and~\cite[Theorem III.6]{li2017atomicIEEE} to our framework. \SL{Note that~\cite[Theorem III.6]{li2017atomicIEEE} is a multiple measurement vector (MMV) extension of~\cite[Theorem 1]{tang2015near}. Due to the random linear operator $\BB$ that appears in our atomic norm denoising problem~\eqref{atomic_denoising}, we will develop a random extension of the two previous results, \cite[Theorem 1]{tang2015near} and~\cite[Theorem III.6]{li2017atomicIEEE}, with respect to $\BB$. The proof idea is borrowed from these prior works. However, the proof there does not extend directly to our framework with the linear operator $\BB$. For example, this linear operator $\BB$ can first affect our choice of the regularization parameter $\lambda$.
}

Inspired by the  regularization parameter used in prior works~\cite{bhaskar2013atomic,tang2015near, li2017atomicIEEE}, we set
\begin{align*}
%$
\lambda \approx \eta \EEE_{\z}\|\BB^*(\z)\|_{\AA}^*
%$
\end{align*}
in the atomic norm regularized least-squares problem~\eqref{atomic_denoising}. Here, $\z$ is a complex Gaussian vector and $\eta\in(1,\infty)$ is a constant that must be large enough to enable the proof of Lemma~\ref{FI2}.
$\|\cdot\|_{\AA}^*$ is defined as the dual norm of the atomic norm defined in~\eqref{atomic_norm}, which is given as
\begin{equation}
\begin{aligned}
\|\Q\|_{\AA}^* &\triangleq \sup_{\|\X\|_{\AA}\leq 1} \langle \Q,\X \rangle_{\RRR}\\
&=\sup_{\tau\in[0,1),\|\h\|_2=1} \left\langle  \Q,\h\a(\tau)^H  \right\rangle_{\RRR}\\
&=\sup_{\tau\in[0,1),\|\h\|_2=1} \left\langle  \Q \a(\tau),\h  \right\rangle_{\RRR}\\
&=\sup_{\tau\in[0,1)} \| \Q \a(\tau)\|_2.
\label{dual_atomic_norm}
\end{aligned}
\end{equation}
%\note{\SL{I moved the definition of dual norm here from the proof of Lemma 5.1.}}
In order to set $\lambda$, we first compute an upper bound for $\EEE_{\z}\|\BB^*(\z)\|_{\AA}^*$.

\subsection{Bounding \texorpdfstring{$\EEE_{\z}\|\BB^*(\bm{z})\|_{\AA}^*$}{$\EEE_z\|\BB^*(z)\|_{\AA}^*$} and \texorpdfstring{$\PPP\left(\|\BB^*(\bm{z})\|_{\AA}^* \leq \frac{\lambda}{\eta} \right)$}{$\PPP_z\left(\|\BB^*(z)\|_{\AA}^* \leq \frac{\lambda}{\eta} \right)$}}
\label{bound_regularization}

\begin{Lemma} \label{Lemma_EPBz}
Let $\z\in\CCC^N$ be a random vector with i.i.d.\ complex Gaussian entries from the distribution $\CN(0,\sigma^2)$. Define a linear operator $\BB^*: \CCC^N \rightarrow \CCC^{K\times N}$ as in~\eqref{def_B*}, i.e.,
\begin{align*}
\BB^*(\x) &= \summ  \x(m) \b_m \e_m^H,
\end{align*}
where $\b_m$ is the $m$-th column of a $K\times N$ matrix $\B^H$, the Hermitian matrix of $\B$. $\x(m)$ is the $m$th entry of $\x$ and $\e_m$ is the $(m+2M+1)$-th column of the $N\times N$ identity matrix $\I_N$.
%Assume that the entries of $\b_m$ satisfy the standard normal distribution for $m=-2M,\ldots,2M$.
Then, there exists a numerical constant $C\in(1,2)$ such that
\begin{align}
\EEE_{\z}\|\BB^*(\z)\|_{\AA}^* \leq C \sigma \|\B\|_F \sqrt{\log(N)}. \label{bound_EBz}
\end{align}
By setting the regularization parameter as $\lambda = 2\eta \sigma \|\B\|_F \sqrt{\log(N)}$, we have
\begin{align}
\PPP\left( \|\BB^*(\z)\|_{\AA}^* \leq \frac{\lambda}{\eta} \right)
\geq 1- c\frac{1}{N}
\label{bound_PBz}
\end{align}
with $c$ being some constant.
\end{Lemma}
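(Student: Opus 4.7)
The plan is to use the variational form $\|\BB^*(\z)\|_\AA^* = \sup_{\tau \in [0,1)} \|\BB^*(\z)\a(\tau)\|_2$ from~\eqref{dual_atomic_norm}. A direct computation gives
\begin{equation*}
\BB^*(\z)\a(\tau) \;=\; \summ \z(m)\, e^{i2\pi\tau m}\, \b_m,
\end{equation*}
which, conditionally on $\B$, is a zero-mean complex Gaussian vector in $\CCC^K$ with covariance $\sigma^2 \B^H \B$. At any fixed $\tau$ we therefore have $\EEE_{\z}\|\BB^*(\z)\a(\tau)\|_2^2 = \sigma^2 \tr(\B^H\B) = \sigma^2 \|\B\|_F^2$, and Jensen's inequality gives $\EEE_{\z}\|\BB^*(\z)\a(\tau)\|_2 \le \sigma \|\B\|_F$. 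The single-point mean already matches the target; the remaining task is to pay at most a $\sqrt{\log N}$ factor for the supremum over $\tau$.

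To that end, I would introduce a uniform grid $\Omega_L = \{\ell/L : 0 \le \ell < L\}$ with $L$ polynomial in $N$ (e.g.\ $L \asymp N^3$). For each fixed $\tau_\ell$, the map $\z \mapsto \|\BB^*(\z)\a(\tau_\ell)\|_2$ is Lipschitz on $\CCC^N$ with constant $\|\B\|_{\mathrm{op}} \le \|\B\|_F$, so Gaussian Lipschitz concentration combined with a union bound over the $L$ grid points yields
\begin{equation*}
\max_{\ell} \|\BB^*(\z)\a(\tau_\ell)\|_2 \;\le\; \sigma\|\B\|_F + \sigma\|\B\|_{\mathrm{op}}\sqrt{2\log(LN)}
\end{equation*}
with probability at least $1-O(1/N)$. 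To pass from the grid-maximum to the true supremum I would observe that $\tau \mapsto \|\BB^*(\z)\a(\tau)\|_2$ is Lipschitz in $\tau$ with constant at most $4\pi M\|\B\|_{\mathrm{op}}\|\z\|_2$, and use the standard bound $\|\z\|_2 \le C\sigma\sqrt{N}$ holding with probability $1-O(1/N)$. For $L$ polynomial of sufficiently high degree, the off-grid correction is dominated by $\sigma\|\B\|_F$ and can be folded into the leading term. Assembling these high-probability events (and integrating the tail, or truncating outside a polynomially large event) yields~\eqref{bound_EBz}.

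The tail bound~\eqref{bound_PBz} then follows from one further application of Gaussian concentration, this time applied directly to the supremum. Because a pointwise supremum of $L$-Lipschitz functions is itself $L$-Lipschitz, the map $\z \mapsto \|\BB^*(\z)\|_\AA^*$ is $\|\B\|_{\mathrm{op}}$-Lipschitz with $\|\B\|_{\mathrm{op}} \le \|\B\|_F$, so
\begin{equation*}
\PPP_{\z}\!\left(\|\BB^*(\z)\|_\AA^* > \EEE_{\z}\|\BB^*(\z)\|_\AA^* + \sigma\|\B\|_F\sqrt{2\log N}\right) \;\le\; \frac{1}{N}.
\end{equation*}
Inserting the bound~\eqref{bound_EBz} with a constant $C < 2$ and the choice $\lambda/\eta = 2\sigma\|\B\|_F\sqrt{\log N}$ then delivers the claimed $1 - cN^{-1}$ probability.

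The main obstacle I anticipate is tightening constants so that $C$ lies strictly inside $(1,2)$: each ingredient---the single-point mean $\sigma\|\B\|_F$, the union bound over the grid, the Lipschitz off-grid correction, and the replacement of $\|\B\|_{\mathrm{op}}$ by $\|\B\|_F$---contributes a multiplicative constant, and one must choose the grid size and track the $\sqrt{\log(LN)}$ factors carefully so that their aggregate stays below $2$ while leaving enough slack for the additional $\sqrt{2\log N}$ appearing in the concentration step used for~\eqref{bound_PBz}.
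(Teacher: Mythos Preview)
Your overall strategy is sound and genuinely different from the paper's. The paper also discretizes $[0,1)$, but it (i) controls the off-grid error via Bernstein's inequality for trigonometric polynomials applied to the \emph{squared} quantity $\ZZ_N(e^{i2\pi\tau})=\|\BB^*(\z)\a(\tau)\|_2^2$, which yields a \emph{multiplicative} correction $(1-2\pi N/L)^{-1}$ and hence allows $L\asymp N$; and (ii) bounds $\EEE_{\z}\max_l \ZZ_N$ by summing over the $K$ coordinates and using chi-squared tails for each, which produces the $\|\B\|_F$ dependence with leading constant tending to $1$. Your route---Lipschitz in $\tau$ for the discretization plus Gaussian Lipschitz concentration in $\z$---is more elementary and would transfer to non-Gaussian noise with sub-Gaussian tails; its cost is that the additive off-grid term forces $L\gtrsim N^{3/2}$, so your constant in~\eqref{bound_EBz} settles near $\sqrt{\log(LN)/\log N}\approx\sqrt{5/2}$ rather than near $1$.

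There is, however, a real gap in your argument for~\eqref{bound_PBz}. You propose to apply Gaussian concentration to the supremum around its mean: with Lipschitz constant $\|\B\|_{\mathrm{op}}\le\|\B\|_F$, the tail at deviation $t$ is $\exp\{-t^2/(\sigma^2\|\B\|_F^2)\}$, so reaching the threshold $\lambda/\eta=2\sigma\|\B\|_F\sqrt{\log N}$ from a mean of $C\sigma\|\B\|_F\sqrt{\log N}$ gives failure probability at most $N^{-(2-C)^2}$. Since your own first part forces $C>1$ (indeed $C\gtrsim 1.5$ with your discretization), $(2-C)^2<1$ and you do not obtain the claimed $cN^{-1}$ rate; this cannot be repaired just by tracking constants, because for rank-one $\B$ one has $\|\B\|_{\mathrm{op}}=\|\B\|_F$ and the loss is intrinsic. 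The paper sidesteps this entirely: for~\eqref{bound_PBz} it does \emph{not} go through the mean but applies a direct union bound over the $L\times K$ grid-coordinate pairs using the Gaussian tail $\PPP(|\WW_{l,k}|\ge\sqrt{N}\tilde\sigma_k\sigma\beta)\le 2e^{-\beta^2}$ with $\beta\asymp 2\sqrt{\log N}$, which yields $O(KN^{-2})\le cN^{-1}$. You can fix your argument the same way: reuse your high-probability grid bound directly for the tail rather than layering a second concentration step on top of~\eqref{bound_EBz}.
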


\begin{proof}

The dual norm defined in~\eqref{dual_atomic_norm} implies that
\begin{align*}
\left(\|\BB^*(\z)\|_{\AA}^*\right)^2 &= \sup_{\tau\in[0,1)} \| \BB^*(\z) \a(\tau) \|_2^2\\
&= \sup_{\tau\in[0,1)} \sumk \left| \summ \z(m) \b_m(k) e^{i2\pi \tau m}  \right|^2\\
&= \sup_{\tau\in[0,1)} \sumk \sum_{m,n=-2M}^{2M} \z(m)\z(n)^H\b_m(k)\b_n(k)^H e^{i 2\pi\tau(m-n)}\\
&=\sup_{\tau\in[0,1)} \ZZ_N(e^{i 2\pi \tau}),
\end{align*}
where $\b_m(k)$ is the $k$-th entry of $\b_m$. We have defined a polynomial $\ZZ_N(e^{i 2\pi \tau})$ as
\begin{align*}
\ZZ_N(e^{i 2\pi \tau}) \triangleq \sumk \sum_{m,n=-2M}^{2M} \z(m)\z(n)^H\b_m(k)\b_n(k)^H e^{i 2\pi\tau(m-n)}.
\end{align*}

Note that we have
\begin{align*}
&\ZZ_N(e^{i 2\pi \tau_1})-\ZZ_N(e^{i 2\pi \tau_2})\\
\leq &\left| e^{j2\pi \tau_1}- e^{j2\pi \tau_2} \right| \sup_{\tau\in [0,1)} \ZZ_N'(e^{i 2\pi \tau})\\
=&2|\sin \pi(\tau_1-\tau_2)| \sup_{\tau\in [0,1)} \ZZ_N'(e^{i 2\pi \tau})\\
\leq & 2\pi |\tau_1-\tau_2| \sup_{\tau\in [0,1)} \ZZ_N'(e^{i 2\pi \tau})\\
\leq& 2\pi N|\tau_1-\tau_2| \sup_{\tau\in [0,1)} \ZZ_N(e^{i 2\pi \tau})
\end{align*}
for any $\tau_1, \tau_2\in[0,1)$. The first inequality follows from the mean value theorem while the last inequality follows from Bernstein's inequality for polynomials~\cite{schaeffer1941inequalities}.

Let $\tau_2$ take any of the values $0, \frac 1 L, \ldots, \frac{L-1}{L}$, which gives us
\begin{align*}
\sup_{\tau\in[0,1)}\ZZ_N(e^{i 2\pi \tau}) \leq \max_{l=0,\ldots,L-1} \ZZ_N(e^{i 2\pi l/L} )+\frac{2\pi N}{L} \sup_{\tau\in [0,1)} \ZZ_N\left(e^{i 2\pi \tau} \right).
\end{align*}
Then, we upper bound $\left(\|\BB^*(\z)\|_{\AA}^*\right)^2$ with
\begin{equation}
\begin{aligned}
\left(\|\BB^*(\z)\|_{\AA}^*\right)^2&=\sup_{\tau\in[0,1)} \ZZ_N (e^{i 2\pi \tau} ) \\
&\leq \left(1-\frac{2 \pi N}{L}\right)^{-1}\max_{l=0,\ldots,L-1} \ZZ_N(e^{i 2\pi l/L} )\\
&\leq \left(1+\frac{4 \pi N}{L}\right)\max_{l=0,\ldots,L-1} \ZZ_N\left(e^{i 2\pi l/L} \right)
\label{Bz}
\end{aligned}
\end{equation}
if $L\geq 4\pi N$. It follows that
\begin{align*}
\|\BB^*(\z)\|_{\AA}^* \leq \left(1+\frac{4 \pi N}{L}\right)^{\frac{1}{2}}\left[\max_{l=0,\ldots,L-1} \ZZ_N\left(e^{i 2\pi l/L} \right)\right]^{\frac{1}{2}}
\end{align*}
and
\begin{align}
\EEE_{\z}\|\BB^*(\z)\|_{\AA}^*\leq \left(1+\frac{4 \pi N}{L}\right)^{\frac{1}{2}}\left\{ \EEE_{\z}\left[\max_{l=0,\ldots,L-1} \ZZ_N\left(e^{i 2\pi l/L} \right) \right] \right\}^{\frac{1}{2}}.
\label{EBz}
\end{align}

Observe that, conditioned on $\{\b_m\}$,
\begin{equation}
\begin{aligned}
&\EEE_{\z} \left[\max_{l=0,\ldots,L-1} \ZZ_N\left(e^{i 2\pi l/L} \right)\right] \\
= & \EEE_{\z} \left[\max_{l=0,\ldots,L-1} \sumk \left| \summ \z(m) \b_m(k) e^{i 2\pi (l/L) m }        \right|^2  \right] \\
\leq & \sumk \EEE_{\z} \left[\max_{l=0,\ldots,L-1} \left| \summ \z(m) \b_m(k) e^{i 2\pi (l/L) m }        \right|^2  \right] \\
\triangleq& \frac 1 2 \sigma^2 N \sumk \EEE_{\z}  \left[\max_{l=0,\ldots,L-1} 2 |u_{k,l}|^2  \right],
\label{EZN}
\end{aligned}
\end{equation}
where $u_{k,l}$ is a complex Gaussian random variable and defined as
\begin{align}
u_{k,l} \triangleq \frac{1}{\sigma \sqrt{N}} \summ \z(m) \b_m(k) e^{i 2\pi (l/L)m }
\label{def_ukl}
\end{align}
for $k=1,\ldots,K,~l=0,\ldots,L-1$. Note that the expectation and variance of $u_{k,l}$ are given as
\begin{align*}
\EEE_{\z}(u_{k,l}) &= \frac{1}{\sigma \sqrt{N}} \summ \EEE_{\z}[ \z(m) ]  \b_m(k) e^{i 2\pi (l/L)m } = 0,\\
\change{Var}(u_{k,l}) &= \frac{1}{\sigma^2 N} \summ \change{Var}[ \z(m)] |\b_m(k)|^2 \\
&=\frac{1}{ N} \summ  |\b_m(k)|^2\\
&\triangleq \sigmat_k^2
\end{align*}
since $\z(m)\sim \CN(0,\sigma^2)$.
Therefore, conditioned on $\{\b_m\}$, the complex Gaussian random variable $u_{k,l}$ defined in~\eqref{def_ukl} satisfies $\CN(0,\sigmat_k^2)$. Let $u_{k,l}^r$ and $u_{k,l}^i$ denote the real part and imaginary part of $u_{k,l}$, i.e., $u_{k,l} = u_{k,l}^r+i u_{k,l}^i$. Then, we have
\begin{align*}
u_{k,l}^r \sim \NN(0, \frac 1 2 \sigmat_k^2),~\text{and}~u_{k,l}^i \sim \NN(0, \frac 1 2 \sigmat_k^2),
\end{align*}
which implies that
\begin{align*}
\frac{2|u_{k,l}|^2}{\sigmat_k^2}  = \left(\frac{\sqrt{2}u_{k,l}^r}{\sigmat_k}\right)^2 +\left(\frac{\sqrt{2}u_{k,l}^i}{\sigmat_k}\right)^2
\end{align*}
is a chi-squared random variable with two degrees of freedom since both $\frac{\sqrt{2}u_{k,l}^r}{\sigmat_k}$ and $\frac{\sqrt{2}u_{k,l}^i}{\sigmat_k}$ satisfy standard normal distribution.
Using to the properties of the chi-square distribution, we have
\begin{equation}
\begin{aligned}
&\EEE_{\z}\left[\max_{l=0,\ldots,L-1}  2|u_{k,l}|^2   \right] \\
=& \int_0^{\infty} \PPP\left\{  \max_{l=0,\ldots,L-1}  2|u_{k,l}|^2  \geq t   \right\} dt\\
=& \int_0^{\deltat_k} \PPP\left\{  \max_{l=0,\ldots,L-1}  2|u_{k,l}|^2  \geq t   \right\} dt + \int_{\deltat_k}^{\infty} \PPP\left\{  \max_{l=0,\ldots,L-1}  2|u_{k,l}|^2  \geq t   \right\} dt\\
\leq & \deltat_k + L\int_{\deltat_k}^{\infty} \PPP\left\{ \frac{  2|u_{k,l}|^2}{\sigmat_k^2}  \geq \frac{t}{\sigmat_k^2}  \right\} dt \\
=& \deltat_k + L\int_{\deltat_k}^{\infty} e^{-\frac{t}{2\sigmat_k^2}} dt \\
=& \deltat_k + 2L\sigmat_k^2e^{-\frac{\deltat_k}{2\sigmat_k^2}}.
\label{Eukl}
\end{aligned}
\end{equation}

Choosing $\deltat_k = 2 \sigmat_k^2\log(L)$ and $L = 4\pi N\log(N)$, together with inequalities~\eqref{EBz}, \eqref{EZN}, and~\eqref{Eukl}, we finally obtain
\begin{align*}
\EEE_{\z}\|\BB^*(\z)\|_{\AA}^*
&\leq \left(1+\frac{4 \pi N}{L}\right)^{\frac{1}{2}}\left\{ \EEE_{\z}\left[\max_{l=0,\ldots,L-1} \ZZ_N\left(e^{i 2\pi l/L} \right) \right] \right\}^{\frac{1}{2}}\\
&\leq  \left(1+\frac{4 \pi N}{L}\right)^{\frac{1}{2}} \left\{\frac 1 2 \sigma^2 N \sumk \EEE_{\z}  \left[\max_{l=0,\ldots,L-1} 2 |u_{k,l}|^2  \right] \right\}^{\frac{1}{2}}\\
&\leq  \left(1+\frac{1}{\log(N)}\right)^{\frac{1}{2}} \left\{ \sigma^2 N \sumk [ \sigmat_k^2\log(4\pi N\log(N))+ \sigmat_k^2 ]  \right\}^{\frac{1}{2}}\\
&= \left(1+\frac{1}{\log(N)}\right)^{\frac{1}{2}} \sigma \sqrt{N [ \log(N) + \log(4\pi\log(N))+1]\sumk \sigmat_k^2 }\\
&= \left(1+\frac{1}{\log(N)}\right)^{\frac{1}{2}} \sigma \|\B\|_F \sqrt{ \log(N) + \log(4\pi\log(N))+1 }\\
&\leq C \sigma \|\B\|_F \sqrt{\log(N)},
\end{align*}
where $C$ is a numerical constant that belongs to the interval $(1,2)$ when $N$ is large. Note that the last equality follows from the fact that
\begin{align*}
\sumk \sigmat_k^2 = \sumk \frac{1}{ N} \summ  |\b_m(k)|^2 = \frac 1 N \|\B\|_F^2.
%\label{normB}
\end{align*}
This completes the proof for inequality~\eqref{bound_EBz}.

Next, we can set the regularization parameter $\lambda$ as
\begin{align*}
\lambda =2 \eta \sigma \|\B\|_F \sqrt{\log(N)}
\end{align*}
for some constant $\eta\in(1,\infty)$ and continue to prove inequality~\eqref{bound_PBz}.
It follows from~\eqref{Bz} that
\begin{equation}
\begin{aligned}
\left(\|\BB^*(\z)\|_{\AA}^*\right)^2 &\leq \left(1-\frac{2\pi N}{L} \right)^{-1} \max_{l=0,\ldots,L-1} \ZZ_N\left(e^{i 2\pi l/L} \right)\\
%&= \left(1+4\pi \right) \left[\max_{l=0,\ldots,N-1}  \sumk \left| \summ \z(m) \b_m(k) e^{i 2\pi (l/N) m}  \right|^2 \right] \\
&\triangleq \left(1-\frac{2\pi N}{L} \right)^{-1} \max_{l=0,\ldots,L-1} \sumk \left| \WW_{l,k}  \right|^2,
\label{Bz_temp1}
\end{aligned}
\end{equation}
where $ \WW_{l,k} \triangleq \summ \z(m) \b_m(k) e^{i 2\pi (l/L) m}$ is a set of complex Gaussian variables with mean 0 and variance $N\sigmat_k^2 \sigma^2$ since $\z(m)\sim\CN(0,\sigma^2)$ and $\B$ is fixed. Then, we have
\begin{align}
\PPP \left( | \WW_{l,k}|\geq \sqrt{N} \sigmat_k \sigma \beta \right) \leq 2 e^{-\beta^2}
\label{gaussian_tail}
\end{align}
for any $\beta > 1/\sqrt{2\pi}$~\cite{heckel2018generalized}.
As a consequence, we have
\begin{align*}
\PPP\left( \|\BB^*(\z)\|_{\AA}^*\geq \frac{\lambda}{\eta} \right)
&=\PPP\left\{ \left(\|\BB^*(\z)\|_{\AA}^*\right)^2\geq \frac{\lambda^2}{\eta^2} \right\}\\
&\leq\PPP\left(  \max_{l=0,\ldots,L-1} \sumk \left| \WW_{l,k}  \right|^2  \geq  4 \left(1-\frac{2\pi N}{L} \right) N \sumk \sigmat_k^2 \sigma^2  \log(N)  \right)\\
%&\leq L\PPP\left( \sumk \left| \WW_{l,k}  \right|^2  \geq\sumk \sigmat_k^2  4 \left(1-\frac{2\pi N}{L} \right) N  \sigma^2  \log(N)  \right)\\
&\leq LK\PPP\left( \left| \WW_{l,k}  \right|^2  \geq 4 \sigmat_k^2   \left(1-\frac{2\pi N}{L} \right) N  \sigma^2  \log(N)  \right)\\
&= LK\PPP\left( \left| \WW_{l,k}  \right|  \geq \sqrt{N} \sigmat_k \sigma 2 \sqrt{\left(1-\frac{2\pi N}{L} \right)    \log(N) } \right)
\end{align*}
where the first inequality follows by plugging in~\eqref{Bz_temp1}, $\lambda = 2\eta \sigma \|\B\|_F \sqrt{\log(N)}$, and $\|\B\|_F^2 = N \sumk \sigmat_k^2$. The second  inequality comes from the union bound. By letting $\beta = 2 \sqrt{\left(1-\frac{2\pi N}{L} \right) \log(N)}$ and $L = 8\pi N$, we finally obtain
\begin{align*}
\PPP\left( \|\BB^*(\z)\|_{\AA}^*\geq \frac{\lambda}{\eta} \right)
&\leq 2LKe^{-4\left(1-\frac{2\pi N}{L} \right) \log(N)}\\
&=16\pi KN^{-2} \leq c \frac{1}{N},
\end{align*}
with some numerical constant $c$. Here, the first inequality follows from~\eqref{gaussian_tail}.

\end{proof}

\subsection{Bounding \texorpdfstring{$\frac 1 N \|\xh-\xs\|_2^2$}{$\frac 1 N \|\hat{x}-x^\star\|_2^2$}} \label{bound_error_pr}

Now, we can set the regularization parameter as
$\lambda =2 \eta \sigma \|\B\|_F \sqrt{\log(N)}$ for some constant $\eta\in(1,\infty)$ such that
\begin{align}
\|\BB^*(\z)\|_{\AA}^* \leq \frac{\lambda}{\eta}
\label{Bzle}
\end{align}
holds with probability at least $1-c\frac{1}{N}$, as is shown in Lemma~\ref{Lemma_EPBz}.

With some fundamental computations based on convex analysis, we have the following lemma that  provides optimality conditions for $\Xh$ to be the solution of the atomic norm regularized least-squares problem~\eqref{atomic_denoising}.
\begin{Lemma}
(Optimality Conditions): $\Xh$ is the solution of the atomic norm regularized least-squares problem~\eqref{atomic_denoising} if and only if
\begin{enumerate}
\item $\|\BB^*(\y-\xh)\|_{\AA}^*\leq \lambda$,
\item $\langle  \Xh,\BB^*(\y-\xh) \rangle_{\RRR}=\lambda\|\Xh\|_{\AA}$.
\end{enumerate}
\label{Lemma_optcon}
\end{Lemma}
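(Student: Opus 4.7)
The plan is to derive the stated conditions from standard convex subdifferential calculus applied to the objective $f(\X) = g(\X) + \lambda\|\X\|_{\AA}$, where $g(\X) = \tfrac{1}{2}\|\y - \BB(\X)\|_2^2$ is smooth and convex and the atomic norm regularizer is convex. Viewing $\CCC^{K\times N}$ as a real inner-product space under $\langle \cdot,\cdot\rangle_{\RRR}$, the function $f$ is convex, so $\Xh$ solves~\eqref{atomic_denoising} if and only if $\zero \in \partial f(\Xh)$, which by the sum rule becomes
\begin{align*}
\BB^*(\y - \xh) \in \lambda\, \partial \|\Xh\|_{\AA},
\end{align*}
after computing $\nabla g(\Xh) = -\BB^*(\y - \BB(\Xh)) = -\BB^*(\y - \xh)$ using the adjoint~\eqref{def_B*}.

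Next, I would invoke the general subdifferential characterization of a norm in terms of its dual: for any norm $\|\cdot\|$ with dual $\|\cdot\|^*$ on a real inner-product space, one has $\Q \in \partial \|\X\|$ if and only if (i) $\|\Q\|^* \leq 1$ and (ii) $\langle \Q, \X\rangle_{\RRR} = \|\X\|$. Specializing this to the atomic norm $\|\cdot\|_{\AA}$ and its dual $\|\cdot\|_{\AA}^*$ from~\eqref{dual_atomic_norm}, and rescaling by $\lambda$, the inclusion $\BB^*(\y - \xh)/\lambda \in \partial \|\Xh\|_{\AA}$ translates immediately into the two stated conditions $\|\BB^*(\y - \xh)\|_{\AA}^* \leq \lambda$ and $\langle \Xh, \BB^*(\y - \xh)\rangle_{\RRR} = \lambda\|\Xh\|_{\AA}$. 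The converse direction follows identically by running the same equivalences backwards.

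The only step that demands any care is the subdifferential characterization of a norm itself, since we are treating complex matrices as a real inner-product space; this is standard but I would verify it in one line. If $\Q$ satisfies (i) and (ii), then for any $\X'$ the subgradient inequality $\|\X'\|_{\AA} \geq \|\Xh\|_{\AA} + \langle \Q, \X' - \Xh\rangle_{\RRR}$ follows directly from the dual-norm inequality $\langle \Q, \X'\rangle_{\RRR} \leq \|\Q\|_{\AA}^*\|\X'\|_{\AA}$ combined with $\langle \Q, \Xh\rangle_{\RRR} = \|\Xh\|_{\AA}$. Conversely, picking the test points $\X' = 0$ and $\X' = 2\Xh$ in the subgradient inequality forces $\langle \Q, \Xh\rangle_{\RRR} = \|\Xh\|_{\AA}$, after which picking $\X'$ with $\|\X'\|_{\AA} \leq 1$ maximizing $\langle \Q, \X'\rangle_{\RRR}$ forces $\|\Q\|_{\AA}^* \leq 1$. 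No real obstacle arises; this lemma is purely a packaging step needed to set up the subsequent error-bound arguments.
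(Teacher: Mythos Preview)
Your proposal is correct and matches the paper's approach: the paper does not give a detailed proof but merely states that the lemma follows ``with some fundamental computations based on convex analysis,'' which is precisely the subdifferential/sum-rule argument you have written out. Your verification of the norm subdifferential characterization is sound and fills in exactly the routine details the paper omits.
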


Define a vector-valued representing measure for the true data matrix $\Xs$ as
\begin{align*}
\bmu(\tau)\triangleq\sumj c_j \h_j\delta (\tau-\tau_j)
\end{align*}
with $\tau\in [0,1),~\|\h_j\|_2=1$, that is, we have
\begin{align*}
\X^{\star}=\sumj c_j \h_j \a(\tau_j)^H=\int_0^1\bmu(\tau) \a(\tau)^Hd\tau.
\end{align*}
%and $\|\Xs\|_{\AA} = \|\bmu\|_{2,\change{TV}}$, where $\|\bmu\|_{2,\change{TV}}$ is an extension of the traditional total variation norm.
Similarly, we can also define a representation measure $\bmuh$ for the recovered data matrix $\Xh$ and represent it as
\begin{align*}
\Xh = \int_0^1\bmuh(\tau) \a(\tau)^Hd\tau.
\end{align*}
Then, a difference measure can be defined as
\begin{align*}
\bnu \triangleq \bmuh-\bmu,
\end{align*}
which implies that we can represent the recovery error as
\begin{align*}
\e \triangleq \xh-\xs = \BB(\Xh-\Xs)=\BB\left(\int_0^1\bnu(\tau) \a(\tau)^Hd\tau\right).
\end{align*}

Define the $j$-th near region corresponding to $\tau_j$ and the far region as \begin{align*}
N_j&\triangleq\{\tau:d(\tau,\tau_j)\leq 0.16/N\},~j = 1, \ldots, J,\\
F&\triangleq[0,1) /\cup_{j=1}^J N_j,  %\label{NFregion},
\end{align*}
where $d(\tau,\tau_j) \triangleq |\tau-\tau_j|$ denotes the wrap-around distance on the unit circle. Define
\begin{align*}
\E \triangleq \Xh-\Xs = \int_0^1\bnu(\tau) \a(\tau)^Hd\tau.
\end{align*}
It follows that $\e = \BB(\E)$ and we can then bound $\|\e\|_2^2$ as
\begin{equation}
\begin{aligned}
\|\e\|_2^2
&=|\langle \BB(\E),\BB(\E) \rangle|\\
%&=|\langle \BB^*\BB(\E),\E \rangle|\\
&=\left|\left\langle \BB^*\BB(\E),\int_0^1\bnu(\tau)\a(\tau)^Hd\tau \right\rangle\right|\\
&=\left| \int_0^1 \left\langle \BB^*\BB(\E),\bnu(\tau)\a(\tau)^H\right\rangle d\tau\right|\\
&= \left| \int_0^1  \bnu(\tau)^H  \BB^*\BB(\E) \a(\tau)  d\tau\right|\\
&=  \left| \int_0^1  \bnu(\tau)^H  \bxi(\tau)  d\tau\right|\\
&\leq  \left| \int_F \bnu(\tau)^H  \bxi(\tau)  d\tau\right|+\sumj  \left| \int_{N_j} \bnu(\tau)^H  \bxi(\tau)  d\tau\right|.
\label{Eb0}
\end{aligned}
\end{equation}
Here, we have defined a vector-valued error function $\bxi(\tau)\triangleq \BB^*\BB(\E) \a(\tau) = \BB^*(\e)\a(\tau)$.

With a little abuse of notation, we define
\begin{align*}
\left\|\bxi(\tau)\right\|_{2,\infty}\triangleq\sup_{\tau\in[0,1)} \left\|\bxi(\tau)\right\|_2. %\label{xi}
\end{align*}
By using the optimality conditions in Lemma~\ref{Lemma_optcon} and the assumption that the bound condition in~\eqref{Bzle} holds, we have
\begin{align*}
\|\bxi(\tau)\|_{2,\infty}&= \sup_{\tau\in[0,1)}  \| \BB^*(\e) \a(\tau)\|_2\\
%&= \sup_{\tau\in[0,1)}  \| \BB^*\BB(\Xh-\Xs) \a(\tau)\|_2\\
%&= \sup_{\tau\in[0,1)}  \| \BB^*(\xh-\xs) \a(\tau)\|_2\\
&= \sup_{\tau\in[0,1)}  \| \BB^*(\xh-\y+\z) \a(\tau)\|_2\\
%&= \sup_{\tau\in[0,1)}  \| \BB^*(\xh-\y ) \a(\tau) +\BB^*(\z) \a(\tau)\|_2\\
&\leq \sup_{\tau\in[0,1)}  \| \BB^*(\xh-\y ) \a(\tau)\|_2 + \sup_{\tau\in[0,1)}  \|\BB^*(\z) \a(\tau)\|_2\\
&=\|\BB^*(\y-\xh )\|_{\AA}^*+\|\BB^*(\z)\|_{\AA}^*\\
& \leq 2\lambda.
%\label{bound_xi}
\end{align*}

To bound the MSE $\frac 1 N \|\e\|_2^2$, we need the following three key lemmas.
\begin{Lemma}
\label{UBE}
Observe that each entry of $\bxi(\tau)= \BB^*\BB(\E) \a(\tau)$ is an order-$N$ trigonometric polynomial. We have
\begin{align*}
\|\e\|_2^2\leq \left\|\bxi(\tau)\right\|_{2,\infty}\left(  \int_F \|\bnu(\tau)\|_2d\tau+I_0+I_1+I_2  \right) %\label{EI}
\end{align*}
with
\begin{align*}
&I_0^j=\left\|  \int_{N_j} \bnu(\tau)d\tau\right\|_2,\\
&I_1^j=N\left\|  \int_{N_j}(\tau-\tau_j) \bnu(\tau)d\tau\right\|_2,\\
&I_2^j=\frac{N^2}{2}\int_{N_j} (\tau-\tau_j)^2\|\bnu(\tau)\|_2d\tau,\\
&I_l=\sumj I_l^j, ~\change{for}~ l=0,1,2.
\end{align*}
\end{Lemma}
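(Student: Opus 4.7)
The plan is to start from the bound already obtained in~\eqref{Eb0}, which reads $\|\e\|_2^2 \leq \left|\int_F \bnu(\tau)^H \bxi(\tau) d\tau\right| + \sum_j \left|\int_{N_j} \bnu(\tau)^H \bxi(\tau) d\tau\right|$, and to control each piece separately so as to factor out the common constant $\|\bxi(\tau)\|_{2,\infty}$. The far-region integral is handled immediately: pointwise Cauchy--Schwarz gives $|\bnu(\tau)^H \bxi(\tau)| \leq \|\bnu(\tau)\|_2\|\bxi(\tau)\|_2 \leq \|\bxi\|_{2,\infty}\|\bnu(\tau)\|_2$, and integrating over $F$ produces the first term $\|\bxi\|_{2,\infty}\int_F \|\bnu(\tau)\|_2 d\tau$ exactly as claimed.

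For each near region $N_j$ I would Taylor-expand $\bxi$ around $\tau_j$ entrywise to second order, writing $\bxi(\tau) = \bxi(\tau_j) + (\tau-\tau_j)\bxi'(\tau_j) + \tfrac{1}{2}(\tau-\tau_j)^2 \bxi''(\widetilde{\tau})$ with $\widetilde{\tau}$ between $\tau$ and $\tau_j$. Substituting into $\int_{N_j} \bnu(\tau)^H \bxi(\tau)\, d\tau$, pulling the constant vectors $\bxi(\tau_j)$ and $\bxi'(\tau_j)$ outside the integral, and applying Cauchy--Schwarz together with the triangle inequality term by term yields the three-term bound
$\|\bxi(\tau_j)\|_2\bigl\|\int_{N_j}\bnu\, d\tau\bigr\|_2 + \|\bxi'(\tau_j)\|_2 \bigl\|\int_{N_j}(\tau-\tau_j)\bnu\, d\tau\bigr\|_2 + \tfrac{1}{2}\|\bxi''(\tau)\|_{2,\infty}\int_{N_j}(\tau-\tau_j)^2\|\bnu(\tau)\|_2 d\tau$.
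The zeroth-order piece already matches $\|\bxi\|_{2,\infty} I_0^j$ upon using $\|\bxi(\tau_j)\|_2 \leq \|\bxi\|_{2,\infty}$.

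The final step is to convert $\|\bxi'(\tau_j)\|_2$ and $\|\bxi''(\tau)\|_{2,\infty}$ into multiples of $\|\bxi\|_{2,\infty}$ at cost of factors $N$ and $N^2$ respectively, which is where Bernstein's inequality enters. Since $\a(\tau)$ carries the frequencies $-2M,\ldots,2M$, each scalar entry of $\bxi(\tau) = \BB^*\BB(\E)\a(\tau)$ is a trigonometric polynomial of order at most $2M \lesssim N$, so the scalar Bernstein inequality gives an $\ell^\infty$ bound on each derivative entrywise. To promote this to the vector $\ell^2$ norm, I would apply the scalar version to the scalar trigonometric polynomial $\phi_{\v}(\tau) := \langle \bxi(\tau), \v\rangle$ for a fixed unit vector $\v$, then choose $\v = \bxi'(\tau)/\|\bxi'(\tau)\|_2$ (respectively $\v = \bxi''(\tau)/\|\bxi''(\tau)\|_2$) to conclude $\|\bxi'(\tau)\|_2 \leq N \|\bxi\|_{2,\infty}$ and $\|\bxi''(\tau)\|_2 \leq N^2 \|\bxi\|_{2,\infty}$, with the $2\pi$-factors absorbed into the overall constant of the main theorem. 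Substituting these into the near-region bound and summing over $j$ reassembles precisely $\|\bxi\|_{2,\infty}(I_0+I_1+I_2)$, and adding the far-region term completes the proof. The only point that requires any thought is the vector-valued Bernstein step; the reduction to the scalar version via $\phi_{\v}$ makes it routine, so I do not expect genuine obstacles.
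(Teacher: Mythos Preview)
Your proposal is essentially the paper's own proof: both start from~\eqref{Eb0}, handle the far region by pointwise Cauchy--Schwarz, Taylor-expand on each $N_j$, and upgrade scalar Bernstein to a vector $\ell_2$ bound by pairing with a unit vector (your $\phi_{\v}(\tau)=\langle\bxi(\tau),\v\rangle$ is exactly the paper's $\gamma(\tau)=\u^H\bxi(\tau)$).

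One small technical slip worth flagging: the Lagrange form $\bxi(\tau)=\bxi(\tau_j)+(\tau-\tau_j)\bxi'(\tau_j)+\tfrac12(\tau-\tau_j)^2\bxi''(\widetilde\tau)$ does \emph{not} hold for vector-valued functions with a single intermediate point $\widetilde\tau$; entrywise expansion gives a different $\widetilde\tau_k$ per coordinate, and then $\bigl(\sum_k|\bxi_k''(\widetilde\tau_k)|^2\bigr)^{1/2}$ is not obviously bounded by $\|\bxi''\|_{2,\infty}$. The paper avoids this by defining the remainder $\r(\tau)=\bxi(\tau)-\bxi(\tau_j)-(\tau-\tau_j)\bxi'(\tau_j)$ and bounding $\|\r(\tau)\|_2=\sup_{\|\u\|_2=1}|\u^H\r(\tau)|$ directly via the scalar Taylor expansion of $\gamma=\u^H\bxi$ (where the intermediate point may depend on $\u$, but that is harmless once you take the sup). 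Your own $\phi_{\v}$ device fixes this in the same way---just apply it to the remainder rather than to $\bxi''$ after the fact.
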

The proof of Lemma \ref{UBE} is given in  Appendix \ref{proofUBE}.

\begin{Lemma}
\label{I0I1}
For some numerical constants $C_0$ and $C_1$, we have that
\begin{align}
I_0\leq C_0\left(J \sqrt{\frac{1}{NK} \log\left(\frac{K(J+1)}{\delta}\right)} \|\bxi(\tau)\|_{2,2} +I_2+\int_F \|\bnu(\tau)\|_2 d\tau \right),\label{I02F} \\
I_1\leq C_1\left(J \sqrt{\frac{1}{NK} \log\left(\frac{K(J+1)}{\delta}\right)} \|\bxi(\tau)\|_{2,2} +I_2+\int_F \|\bnu(f)\|_2 df \right), \label{I12F}
\end{align}
hold with probability at least $1-2\delta$ when provided with
%$N \geq CK \log\left(\frac{K(J+1)}{\delta}\right)$
$N\geq C\mu J^2 K \log\left( \frac{NJK}{\delta} \right)$. Here, $\|\cdot\|_{2,2} \triangleq \left( \int_0^1 \|\cdot\|_2^2 d\tau \right)^{\frac 1 2}$ is defined as the $2,2$ norm.
\end{Lemma}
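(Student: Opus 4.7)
The plan is to bound $I_0$ and $I_1$ by constructing a pair of vector-valued trigonometric polynomials $\q_0(\tau),\q_1(\tau)\in\CCC^{K}$ of degree at most $N$ that serve as simultaneous extraction kernels for all $J$ near-region integrals. Let $\v_{0,j},\v_{1,j}\in\CCC^{K}$ be unit vectors aligned with $\int_{N_j}\bnu(\tau)d\tau$ and $\int_{N_j}(\tau-\tau_j)\bnu(\tau)d\tau$ respectively, so that $I_0=\sum_j \v_{0,j}^{H}\int_{N_j}\bnu(\tau)d\tau$ and $I_1=N\sum_j \v_{1,j}^{H}\int_{N_j}(\tau-\tau_j)\bnu(\tau)d\tau$. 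The polynomial $\q_0$ is designed to satisfy $\q_0(\tau_j)=\v_{0,j}$ and $\q_0'(\tau_j)=\zero$ for every $j$, while $\q_1$ satisfies $\q_1(\tau_j)=\zero$ and $\q_1'(\tau_j)=N\v_{1,j}$. The crucial structural requirement coupling the construction to the denoising problem is that each polynomial admits a representation $\q_l(\tau)=\BB^{*}\BB(\M_l)\a(\tau)$ for some $\M_l\in\CCC^{K\times N}$, so that $\int_0^1\q_l(\tau)^H\bnu(\tau)d\tau=\langle\BB(\M_l),\e\rangle$ and can be bounded in terms of $\|\bxi\|_{2,2}$.

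To build $\q_l$, I would follow the vector-valued dual polynomial template from~\cite{yang2016super,chi2016guaranteed,li2017atomicIEEE}, expanding
\[
\q_l(\tau)=\sum_{k=1}^{J}\left(\balpha_k^{(l)}\kappa(\tau,\tau_k)+\bbeta_k^{(l)}\kappa'(\tau,\tau_k)\right),
\]
where $\kappa(\tau,\tau')$ is a random kernel arising from $\BB^{*}\BB$ applied to the atoms $\a(\tau')$ in each waveform direction, shaped so that its deterministic counterpart is the Cand\`es--Fern\'andez-Granda squared-Fej\'er-type kernel. The coefficients $\{\balpha_k^{(l)},\bbeta_k^{(l)}\}\subset\CCC^K$ are determined by a $2JK\times 2JK$ linear system encoding the interpolation constraints at $\{\tau_j\}$. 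To prove solvability together with the bounds $\|\q_l\|_\infty\lesssim 1$ and $\|\q_l''\|_\infty\lesssim N^{2}$, I would show that $\kappa$ and its derivatives concentrate uniformly around their expectations via a matrix Bernstein bound for the rank-one random sum representation of $\BB^{*}\BB=\summ (\b_m\b_m^{H})\X(\e_m\e_m^{H})$, exploiting isotropy $\EEE\b_m\b_m^{H}=\I_K$ and $\mu$-incoherence $\max_k|\b_m(k)|^{2}\leq\mu$. Combined with the minimum separation $1/N$, a Schur complement argument first shows the deterministic interpolation system is well-conditioned; perturbing to the random system then incurs the sample-complexity cost $N\geq C\mu J^2 K\log(NJK/\delta)$ and the failure probability $2\delta$.

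With the polynomials in hand, the extraction identity
\[
I_0=\int_0^1\q_0(\tau)^H\bnu(\tau)d\tau-\int_F\q_0(\tau)^H\bnu(\tau)d\tau-\sum_{j=1}^{J}\int_{N_j}(\q_0(\tau)-\v_{0,j})^H\bnu(\tau)d\tau
\]
drives the estimate. The first term equals $\langle\BB(\M_0),\e\rangle$ which, after a Cauchy--Schwarz step that passes through $\|\bxi\|_{2,2}$ and tracking the Frobenius norm of $\M_0$ through the coefficient bounds inherited from the kernel concentration, contributes $J\sqrt{\log(K(J+1)/\delta)/(NK)}\,\|\bxi\|_{2,2}$. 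The second term is bounded by $\|\q_0\|_\infty\int_F\|\bnu(\tau)\|_2 d\tau\lesssim\int_F\|\bnu(\tau)\|_2 d\tau$. Each summand of the third term is handled by the second-order Taylor remainder of $\q_0$ at $\tau_j$ (using $\q_0(\tau_j)=\v_{0,j}$, $\q_0'(\tau_j)=\zero$): it is dominated by $\tfrac{1}{2}\|\q_0''\|_\infty\int_{N_j}(\tau-\tau_j)^2\|\bnu(\tau)\|_2 d\tau\lesssim I_2^j$, and summing over $j$ yields $I_2$. Combining the three contributions produces~\eqref{I02F}; an entirely parallel argument with $\q_1$ in place of $\q_0$ gives~\eqref{I12F}.

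The main obstacle is establishing the uniform concentration of $\kappa(\tau,\tau')$ and its derivatives over $(\tau,\tau')\in[0,1)^{2}$, together with a quantitative invertibility/conditioning bound for the $2JK\times 2JK$ interpolation matrix. A covering/chaining argument on the torus combined with matrix Bernstein on the rank-one summands of $\BB^{*}\BB$ is what drives the $\OO(\mu J^2 K\log(NJK/\delta))$ sample complexity and is where the random-subspace hypotheses feed into the final bound. Once these random-kernel estimates are in place, assembling~\eqref{I02F} and~\eqref{I12F} reduces to routine bookkeeping of Taylor remainders, kernel derivative bounds, and Cauchy--Schwarz-style estimates of the boundary terms.
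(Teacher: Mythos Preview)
Your proposal follows essentially the same route as the paper: construct a vector-valued interpolating polynomial with $\q_0(\tau_j)=\v_{0,j}$, $\q_0'(\tau_j)=\zero$ (and the analogous constraints for $\q_1$), then split $I_l$ via the extraction identity into a global integral, a far-region piece bounded by $\|\q_l\|_\infty\int_F\|\bnu\|_2$, and a near-region piece handled by the second-order Taylor remainder. Your parametrisation $\q_l=\BB^*\BB(\M_l)\a(\tau)$ is equivalent to the paper's $\QQ(\tau)=\BB^*(\q)\a(\tau)$, since $\q=\BB(\M_l)$ ranges over all of $\CCC^N$.

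The one place your sketch is thin is the global term, and this is where the paper's argument is least routine. You correctly reduce $\int_0^1\q_l^H\bnu\,d\tau=\langle\BB(\M_l),\e\rangle=\langle\M_l,\BB^*(\e)\rangle$, which by Parseval equals $\int_0^1(\M_l\a(\tau))^H\bxi(\tau)\,d\tau$, so Cauchy--Schwarz gives $\|\M_l\|_F\|\bxi\|_{2,2}$. But the object to control is $\|\M_l\a(\cdot)\|_{2,2}$, \emph{not} $\|\q_l\|_{2,2}$: one carries the weight $\b_m/\|\b_m\|_2^2$ in its $m$-th column, the other carries $\b_m$. The paper isolates this as a separate lemma (its Lemma~\ref{Q22}): with the natural choice of $\M_l$ one obtains a modified kernel $\Kt_M(\tau)=\tfrac{1}{M}\summ g_M(m)e^{i2\pi\tau m}\tfrac{\b_m\b_m^H}{\|\b_m\|_2^2}$, and bounding $\|\M_l\|_F$ then hinges on the \emph{second} isotropy hypothesis $\EEE\bigl(\tfrac{\b}{\|\b\|_2}\tfrac{\b^H}{\|\b\|_2}\bigr)=\tfrac{1}{K}\I_K$ together with a dedicated matrix Bernstein bound on the block matrices $\K=[\Kt_M(\tau-\tau_1)\ \cdots\ \Kt_M(\tau-\tau_J)]$ and $\K'$. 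Your sketch invokes only $\EEE\b_m\b_m^H=\I_K$ and $\mu$-incoherence; those suffice for the sup-norm control of $\q_l$ and its derivatives (the interpolation side), but the factor $J\sqrt{\tfrac{1}{NK}\log\tfrac{K(J+1)}{\delta}}$ in the statement comes precisely from this second concentration step, so be explicit that it is a separate argument and that it needs the normalised isotropy assumption.
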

The proof of Lemma \ref{I0I1} is given in Appendix \ref{proofI0I1}.

\begin{Lemma}
\label{FI2}
There exists a numerical constant $C$ such that
\begin{align}
\int_F \|\bnu(\tau)\|_2d\tau +I_2 \leq C J \sqrt{\frac{1}{NK} \log\left(\frac{K(J+1)}{\delta}\right)} \|\bxi(\tau)\|_{2,2}
\label{eqFI2}
\end{align}
holds with probability at least $1-2\delta$ for some sufficiently large $\eta>1$.
%\note{\SL{Note that we don't need $\eta>1$ here now.}}
\end{Lemma}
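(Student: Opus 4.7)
The plan is to couple the optimality conditions of Lemma~\ref{Lemma_optcon} with a dual polynomial interpolating $\h_j$ at each true frequency $\tau_j$, adapting the arguments of~\cite{tang2015near} and of~\cite[Theorem III.6]{li2017atomicIEEE} to the present random-subspace setting. The novelty here is that the polynomial must be realized as $Q(\tau) = \BB^*(\q_0)\a(\tau)$ for some $\q_0\in\CCC^N$, which is where the randomness of $\BB$ enters. Under $N \geq C\mu J^2 K \log(NJK/\delta)$, a construction adapted from~\cite{yang2016super, chi2016guaranteed} produces, on an event of probability at least $1-\delta$, such a $Q$ satisfying $Q(\tau_j) = \h_j$, $Q'(\tau_j) = \zero$, and the profile $\|Q(\tau)\|_2 \leq 1 - C_a N^2 d(\tau,\tau_j)^2$ on $N_j$ together with $\|Q(\tau)\|_2 \leq 1 - C_b$ on $F$, along with a size control on $\q_0$ that will be used in the last step.

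Next, by the optimality of $\Xh$ in~\eqref{atomic_denoising} evaluated at the feasible point $\Xs$, one has $\tfrac{1}{2}\|\e\|_2^2 + \lambda(\|\Xh\|_{\AA} - \|\Xs\|_{\AA}) \leq \text{Re}\langle\e,\z\rangle$. Polar-decomposing $\bmuh$ and using $\|Q(\tau)\|_2 \leq 1$ with the quadratic/gap profile above, together with $\text{Re}\int\langle\bmu,Q\rangle\,d\tau = \|\Xs\|_{\AA}$, gives
\begin{align*}
\|\Xh\|_{\AA} - \|\Xs\|_{\AA} \ \geq \ \text{Re}\int \langle\bnu,Q\rangle\,d\tau + C_b \int_F \|\bnu\|_2\,d\tau + 2 C_a I_2.
\end{align*}
Chaining these two inequalities and using the adjoint identities $\text{Re}\langle\e,\z\rangle = \text{Re}\langle\E, \BB^*(\z)\rangle$ and $\text{Re}\int\langle\bnu, Q\rangle\,d\tau = \text{Re}\langle \E, \BB^*(\q_0)\rangle = \text{Re}\langle\e, \q_0\rangle$ collapses the right-hand side into the clean form
\begin{align*}
\lambda\bigl(C_b \int_F \|\bnu\|_2\,d\tau + 2 C_a I_2\bigr) + \tfrac{1}{2}\|\e\|_2^2 \ \leq \ \text{Re}\langle\e,\,\z - \lambda\q_0\rangle.
\end{align*}

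Finally, to convert this into the stated bound of order $\lambda J\sqrt{\log(K(J+1)/\delta)/(NK)}\,\|\bxi(\tau)\|_{2,2}$, I would apply Cauchy--Schwarz, $\text{Re}\langle\e, \z-\lambda\q_0\rangle \leq \|\e\|_2 \|\z - \lambda\q_0\|_2$, and combine two high-probability ingredients: (i) the identity $\|\bxi(\tau)\|_{2,2}^2 = \|\BB^*(\e)\|_F^2 = \summ |\e(m)|^2 \|\b_m\|_2^2$ together with the isotropy-driven concentration $\|\b_m\|_2^2 \approx K$, giving $\|\e\|_2 \leq C \|\bxi(\tau)\|_{2,2}/\sqrt{K}$, and (ii) a bound $\|\z - \lambda\q_0\|_2 \leq C \lambda J \sqrt{K\log(K(J+1)/\delta)/N}$ obtained from the size control on $\|\q_0\|_2$ from Step~1 together with a sub-Gaussian tail for $\z$, with $\eta$ chosen large enough that the noise-only contribution is dominated. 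Multiplying these two bounds, dropping the nonnegative term $\tfrac{1}{2}\|\e\|_2^2$ on the left, and dividing by $\lambda \min(C_b, 2C_a)$ gives the claim. The main obstacle is ingredient~(ii): obtaining the factor $J$ rather than some $N$-dependent factor requires exploiting the detailed structure of $\q_0$ from the dual certificate construction, not merely the generic bound $\|\BB^*(\z)\|_{\AA}^* \leq \lambda/\eta$ of Lemma~\ref{Lemma_EPBz}; this is precisely where the requirement that $\eta>1$ be sufficiently large enters. A union bound over the dual-certificate event and the noise-concentration event delivers the overall probability $1-2\delta$.
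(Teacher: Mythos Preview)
Your overall architecture through the displayed inequality
\[
\lambda\Bigl(C_b \int_F \|\bnu\|_2\,d\tau + 2 C_a I_2\Bigr) + \tfrac{1}{2}\|\e\|_2^2 \ \leq \ \text{Re}\langle\e,\,\z - \lambda\q_0\rangle
\]
is sound and matches the paper's spirit. The gap is precisely where you flag it: ingredient~(ii) cannot be obtained in the form you claim, and the obstruction is the noise term, not the $\q_0$ term.

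Concretely, applying Cauchy--Schwarz to $\text{Re}\langle\e,\z\rangle$ costs you $\|\z\|_2$, which concentrates around $\sigma\sqrt{N}$. Your target bound $\|\z-\lambda\q_0\|_2 \le C\lambda J\sqrt{K\log(K(J+1)/\delta)/N}$, after substituting $\lambda = 2\eta\sigma\|\B\|_F\sqrt{\log N}$ with $\|\B\|_F^2 \asymp NK$, reduces to $\sqrt{N} \lesssim \eta J K \sqrt{\log N \cdot \log(K(J+1)/\delta)}$. This is an \emph{upper} bound on $N$; it fails once $N$ exceeds a constant multiple of $J^2K^2$ times logarithmic factors, whereas the lemma must hold for all $N$ satisfying the sample-complexity \emph{lower} bound $N \ge C\mu J^2 K\log(NJK/\delta)$. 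No fixed choice of $\eta$ can rescue this, so the direct Cauchy--Schwarz route on $\langle\e,\z\rangle$ is genuinely too lossy. (The $\q_0$ contribution, by contrast, is harmless: the paper's Lemma~\ref{Q22} gives $\|\QQt(\tau)\|_{2,2}\le CJ\sqrt{\log(\cdot)/(NK)}$, which after undoing $\BB^*(\BB\BB^*)^{-1}$ translates to $\|\q_0\|_2 \lesssim J\sqrt{\log(\cdot)/N}$.)

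The paper avoids this loss by \emph{not} using $\|\z\|_2$ at all. It writes $|\langle\e,\z\rangle| = \bigl|\int_0^1 \bnu(\tau)^H \BB^*(\z)\a(\tau)\,d\tau\bigr|$ and runs the same near/far decomposition as in Lemma~\ref{UBE}, but with $\BB^*(\z)\a(\tau)$ in place of $\bxi(\tau)$; the resulting prefactor is $\|\BB^*(\z)\|_{\AA}^* \le \lambda/\eta$ from Lemma~\ref{Lemma_EPBz}, not $\|\z\|_2$. This yields a self-referential bound $|\langle\e,\z\rangle| \le C\tfrac{\lambda}{\eta}\bigl(\int_F\|\bnu\|_2 + I_0 + I_1 + I_2\bigr)$, and Lemma~\ref{I0I1} converts $I_0,I_1$ into the target $J\sqrt{\log(\cdot)/(NK)}\,\|\bxi\|_{2,2}$ plus more copies of $\int_F\|\bnu\|_2 + I_2$. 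Combining this upper bound with the lower bound you already derived on $\|\PP_{\TT^c}(\bnu)\|_{2,\mathrm{TV}} - \|\PP_{\TT}(\bnu)\|_{2,\mathrm{TV}}$ (the paper handles the term $\int_0^1 \bnu^H Q\,d\tau = \langle\q_0,\e\rangle$ separately via $\langle \BB^*(\BB\BB^*)^{-1}\q_0,\BB^*(\e)\rangle$, Parseval, and Lemma~\ref{Q22}) produces coefficients $(C_a - C/\eta)$ and $(C_b - C/\eta)$ in front of $I_2$ and $\int_F\|\bnu\|_2$, which become positive once $\eta$ is a large enough absolute constant. That bootstrap is the missing mechanism in your plan.
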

The proof of Lemma \ref{FI2} is given in Appendix \ref{proofFI2}.

As a consequence of the above three lemmas, we have
\begin{align}
\|\e\|_2^2 \leq & C \lambda J \sqrt{\frac{1}{NK} \log\left(\frac{K(J+1)}{\delta}\right)} \|\bxi(\tau)\|_{2,2} .
\label{exiQ}
\end{align}
Note that
\begin{align*}
 \|\bxi(\tau)\|_{2,2}^2 &= \int_0^1  \|\bxi(\tau)\|_2^2 d\tau
   = \int_0^1  \|\BB^*(\e) \a(\tau)\|_2^2 d\tau\\
& = \langle \BB^*(\e) \a(\tau),\BB^*(\e) \a(\tau)  \rangle
   = \langle \BB^*(\e) ,\BB^*(\e)  \rangle \\
& = \langle \e ,\BB \BB^*(\e)  \rangle = \summ \|\b_m\|_2^2 |\e(m)|^2\\
& \leq \max_{-2M\leq m \leq 2M} \|\b_m\|_2^2 \|\e\|_2^2
\end{align*}
where the fourth equality follows from Parseval's theorem and $\e(m)$ is the $m-$th entry of $\e$. It follows that
\begin{align}
 \|\bxi(\tau)\|_{2,2} \leq \max_{-2M\leq m \leq 2M} \|\b_m\|_2 \|\e\|_2
 \label{xi22}.
\end{align}
Finally, plugging~\eqref{xi22} into~\eqref{exiQ}, we have that
\begin{align*}
\frac 1 N \|\e\|_2^2 & \leq C \lambda^2 \frac 1 N \max_{-2M\leq m \leq 2M} \|\b_m\|_2^2 \frac{J^2}{NK} \log\left(\frac{K(J+1)}{\delta}\right)\\
%& \leq C \eta^2 \sigma^2 \|\B\|_F^2 \frac 1 N \log(N) \max_{-2M\leq m \leq 2M} \|\b_m\|_2^2   \frac {J^2}{NK} \log\left(\frac{K(J+1)}{\delta}\right)\\
& \leq C \eta^2 \sigma^2 \|\B\|_F^2 \max_{-2M\leq m \leq 2M} \|\b_m\|_2^2  \frac{J^2}{N^2K} \log(N) \log\left(\frac{K(J+1)}{\delta}\right)\\
& \leq C \eta^2 \sigma^2 \mu^2 \frac{J^2K}{N} \log(N) \log\left(\frac{JK}{\delta}\right)\\
& \leq C \eta^2 \sigma^2 \mu^2 \frac{J^2K}{N} \log(N) \log\left(JKN\right)
\end{align*}
holds with probability at least $1-cN^{-1}$ when provided with $N\geq C\mu J^2 K \log\left( \frac{NJK}{\delta} \right)$. Here, the last two inequalities follow from the incoherence property~\eqref{incoprop} and by setting $\delta = N^{-1}$.
%and finish the proof of Theorem~\ref{THM_MSE}.

Next, we explain the reason why we use $N\geq C\mu J^2 K \log\left( \frac{NJK}{\delta} \right)$ instead of the lower bound provided in paper~\cite{yang2016super}, which considers the noiseless counterpart of this work.
%$N$ needs to be large enough in order to successfully recover $\Xs$ from $\BB(\Xs)$.
Particularly,~\cite{yang2016super} requires $N$ to satisfy
\begin{align*}
N \geq C\mu JK \log\left( \frac{NJK}{\delta} \right) \log^2\left( \frac{NK}{\delta}  \right)
\end{align*}
if all the $\h_j$ are i.i.d.\ symmetric random samples from the complex unit sphere, namely, $\EEE \h_j \h_j^H = \frac 1 K \I_K$. In order to drop this randomness assumption on $\h_j$ since we never use it in our proof, we make a slight modification of the proof in paper~\cite{yang2016super}. Note that the authors in~\cite{yang2016super} only use the randomness assumption on $\h_j$ in Lemmas~11 and 13. Therefore, we only need to bound $\|\I_1^l(\tau_d)\|_2$ and $\|\I_2^l(\tau_d)\|_2$ in Lemmas~11 and 13 without the randomness assumption on $\h_j$.

In this part, we use the same notation as paper~\cite{yang2016super}. Readers can refer to paper~\cite{yang2016super} for detailed definition of all variables. Inspired by the proof of~\cite[Lemma 5]{chi2016guaranteed}, we have
\begin{align*}
\sup_{\tau_d \in \Omega_{\change{Grid}}}\|\I_1^l(\tau_d)\|_2 &= \sup_{\tau_d \in \Omega_{\change{Grid}}}\|(\V_l(\tau_d) - \EEE \V_l(\tau_d)  )^H \L \h  \|_2\\
&\leq \sup_{\tau_d \in \Omega_{\change{Grid}}} \| \V_l(\tau_d) - \EEE \V_l(\tau_d)  )^H \L \| \|\h\|_2\\
&\leq 4\sqrt{J} \varepsilon_2,
\end{align*}
which is conditioned on $\EE_3 \bigcap \EE_{1,\varepsilon_1}$. Here, $\EE_3$ and $\EE_{1,\varepsilon_1}$ are two events defined in~\cite{yang2016super}. The last inequality follows from $\sup_{\tau_d \in \Omega_{\change{Grid}}} \| \V_l(\tau_d) - \EEE \V_l(\tau_d)  )^H \L \| \leq 4 \varepsilon_2 $ on the event $\EE_3$ and $\|\h\|_2 = \sqrt{J}$ with $\h = [\h_1^H~\cdots~\h_J^H]^H$. Then, we can obtain $\sup_{\tau_d \in \Omega_{\change{Grid}}}\|\I_1^l(\tau_d)\|_2  \leq \varepsilon_4$ by setting $\varepsilon_2 \leq \frac{\varepsilon_4}{4\sqrt{J}}$.

Getting rid of the conditional probability, we have
\begin{align*}
\PPP\left( \sup_{\tau_d \in \Omega_{\change{Grid}}} \| \I_1^l(\tau_d) \|_2 \geq \varepsilon_4,~l=0,1,2,3 \right) \leq 4| \Omega_{\change{Grid}}|\delta_2 + \PPP(\EE^c_{1,\varepsilon_1}).
\end{align*}
It is shown in paper~\cite{yang2016super} that the first term $4| \Omega_{\change{Grid}}|\delta_2 \leq \delta$ and the second term $\PPP(\EE^c_{1,\varepsilon_1}) \leq \delta$ when provided
\begin{align*}
N \geq \frac{640 \cdot 4^{2l} \mu JK}{3\varepsilon_2^2} \log\left(  \frac{4|\Omega_{\change{Grid}}|(2JK+K)}{\delta}  \right)
\end{align*}
and
\begin{align*}
N \geq \frac{80 \mu JK}{\varepsilon_1^2} \log\left( \frac{4JK}{\delta} \right),
\end{align*}
respectively. Thus, for some constant $C$, we have
\begin{align*}
\PPP\left(  \sup_{\tau_d \in \Omega_{\change{Grid}}} \| \I_1^l(\tau_d) \|_2 \geq \varepsilon_4,~l=0,1,2,3   \right) \leq 2\delta
\end{align*}
provided
\begin{align*}
N \geq C\mu JK \max\left\{ \frac{J}{\varepsilon_4^2}\log\left( \frac{|\Omega_{\change{Grid}}|JK}{\delta}  \right),\log\left( \frac{JK}{\delta} \right)   \right\}.
\end{align*}
Note that we set $\varepsilon_1 = \frac{1}{4}$ and absorb all of the constants into one here.

Similarly, conditioned on $\EE_{1,\varepsilon_1}$, we have
\begin{align*}
\sup_{\tau_d \in \Omega_{\change{Grid}}}\|\I_2^l(\tau_d)\|_2 &= \sup_{\tau_d \in \Omega_{\change{Grid}}}\|[\EEE \V_l(\tau_d)]^H (\L - \L' \otimes \I_K) \h  \|_2\\
&\leq \sup_{\tau_d \in \Omega_{\change{Grid}}} \|[\EEE \V_l(\tau_d)]^H (\L - \L' \otimes \I_K)\| \|\h\|_2\\
&\leq C\sqrt{J} \varepsilon_1
\end{align*}
for some numerical constant $C$.
Then, we can obtain $\sup_{\tau_d \in \Omega_{\change{Grid}}}\|\I_2^l(\tau_d)\|_2  \leq \varepsilon_5$ by setting $\varepsilon_1 \leq \frac{\varepsilon_5}{C\sqrt{J}}$.

Getting rid of the conditional probability, we have
\begin{align*}
\PPP\left( \sup_{\tau_d \in \Omega_{\change{Grid}}} \| \I_2^l(\tau_d) \|_2 \geq \varepsilon_5,~l=0,1,2,3 \right) \leq  \PPP(\EE^c_{1,\varepsilon_1}) \leq \delta
\end{align*}
when provided
\begin{align*}
N \geq C\mu J^2K\frac{1}{\varepsilon_5^2} \log\left( \frac{JK}{\delta} \right).
\end{align*}

Now, we have dropped the randomness assumption on $\h_j$ that is used in Lemmas~11 and 13 of paper~\cite{yang2016super}. We can follow the remaining proof of~\cite{yang2016super} and finally get
\begin{align}
N\geq C\mu J^2 K \log\left( \frac{NJK}{\delta} \right).
\label{N_final}
\end{align}
Define $\TT \triangleq \{\tau_1, \tau_2,\cdots,\tau_J \}$ as the true frequency set. The above bound on $N$ can guarantee that the $\ell_2$ norm of the dual polynomial $\QQ(\tau)$ constructed in~\cite{yang2016super} is strictly less than 1 when $\tau \notin \TT$, which is used in the proof of Lemma~\ref{I0I1}.

%\SL{[Mike: Clarify: What conclusion holds when $N$ satisfies this bound? Shuang: Can we talk in person about this when you are back? I feel this part (giving a bound for N) is a little bit abrupt.]}

\section{Conclusion}
\label{conc}

In this work, we recover a signal that consists of a superposition of complex exponentials with unknown waveform modulations from its noisy measurements by solving an atomic norm regularized least-squares problem. We analyze the mean square error (MSE) and provide a theoretical result to bound the MSE in terms of the  noise variance, the total number of uniform samples, the number of true frequencies, and the dimension of the subspace in which the unknown waveform modulations live. Meanwhile, we conduct several numerical experiments to support the theory.
One of the experiments indicates that there is a room to improve the MSE bound and make it scale linearly with the number of true frequencies. We leave this for our future work.

\section*{Acknowledgement}

The authors would like to thank Jonathan Helland at the Colorado School of Mines for some helpful discussions on atomic norm denoising. The authors would also like to thank the anonymous reviewers for their constructive comments and suggestions which greatly improve the quality of this paper.
This work was supported by NSF grant CCF-1409258, NSF grant CCF-1464205, and NSF grant CCF-1704204.

%\SL{[Notes to Mike: Reference [27] is a technique report. I would prefer to keep it here since we did cite this report in our work. I cite this report because it contains the proof details while the published version not. BTW, its published version is also cited as [15]. For reference [20], it was accepted but has not been published yet.]}

\bibliographystyle{ieeetr}
\bibliography{SS}

\newpage
\appendix

%\section{Appendix}
%\label{appe}

\section{Proof of Lemma~\ref{UBE}}
\label{proofUBE}

Let $\u\in\CCC^K$ be any vector with $\|\u\|_2=1$. Define a trigonometric polynomial
\begin{align*}
\gamma (\tau)\triangleq \u^H\bxi(\tau)
\end{align*}
with degree $N$. Then, we have the following two inequalities
\begin{align*}
\sup_{\tau\in[0,1)} |\gamma'(\tau)| &\leq N\sup_{\tau\in[0,1)} |\gamma(\tau)|, \\%\label{Bern1}\\
\sup_{\tau\in[0,1)} |\gamma''(\tau)| &\leq N^2\sup_{\tau\in[0,1)} |\gamma(\tau)|,%\label{Bern2}
\end{align*}
which follow from the Bernstein's inequality for polynomials~\cite{schaeffer1941inequalities}.
As a consequence, we have
\begin{align*}
\sup_{\tau\in[0,1)} \| \bxi'(\tau) \|_2&=\sup_{\tau\in[0,1),\u} |\u^H\bxi'(\tau)|\\
&=\sup_{\tau\in[0,1),\u} |\gamma'(\tau)|\\
&\leq N\sup_{\tau\in[0,1),\u} |\gamma(\tau)|\\
&=N\sup_{\tau\in[0,1),\u} |\u^H\bxi(\tau)|\\
&=N\sup_{\tau\in[0,1)} \| \bxi(\tau) \|_2.
\end{align*}
Therefore, we obtain an upper bound on $\|\bxi'(\tau)\|_{2,\infty}$:
\begin{align*}
\|\bxi'(\tau)\|_{2,\infty} \leq  N\left\|\bxi(\tau)\right\|_{2,\infty}. %\label{Bern11}
\end{align*}
With a similar argument, we also have
\begin{align}
\|\bxi''(\tau)\|_{2,\infty}&\leq N^2\left\|\bxi(\tau)\right\|_{2,\infty}. \label{Bern22}
\end{align}

The Taylor expansion of $\gamma(\tau)$ at $\tau_j$ is
\begin{align*}
\gamma(\tau)=\gamma(\tau_j)+(\tau-\tau_j)\gamma'(\tau_j)+\frac{1}{2}(\tau-\tau_j)^2\gamma''(\widetilde{\tau}_j)
\end{align*}
with some $\widetilde{\tau}_j\in N_j$.
Now, by using the inequality~\eqref{Bern22}, we obtain
\begin{align*}
&\sup_{\u}|\gamma(\tau)-\gamma(\tau_j)-(\tau-\tau_j)\gamma'(\tau_j)|\\
=&\frac{1}{2}(\tau-\tau_j)^2 \sup_{\u}  |\gamma''(\widetilde{\tau}_j)|\\
=&\frac{1}{2}(\tau-\tau_j)^2 \sup_{\u}  |\u^H\bxi''(\widetilde{\tau}_j)|\\
\leq & \frac{1}{2}(\tau-\tau_j)^2 \|\bxi''(\tau)  \|_{2,\infty}\\
\leq & \frac{N^2}{2}(\tau-\tau_j)^2 \|\bxi(\tau)  \|_{2,\infty}.
\end{align*}
Defining a function $\r(\tau)$ as
\begin{align*}
\r(\tau)=\bxi(\tau)-\bxi(\tau_j)-(\tau-\tau_j)\bxi'(\tau_j),
\end{align*}
we note that
\begin{align*}
&\sup_{\u}|\gamma(\tau)-\gamma(\tau_j)-(\tau-\tau_j)\gamma'(\tau_j)|\\
=&\sup_{\u}|\langle\bxi(\tau)-\bxi(\tau_j)-(\tau-\tau_j)\bxi'(\tau_j),\u \rangle|\\
=& \|\bxi(\tau)-\bxi(\tau_j)-(\tau-\tau_j)\bxi'(\tau_j)\|_2\\
=&\|\r(\tau)\|_2.
\end{align*}
Then, we have
\begin{align*}
\|\r(\tau)\|_2\leq \frac{N^2}{2}(\tau-\tau_j)^2 \|\bxi(\tau)  \|_{2,\infty}.
\end{align*}

Now, we can bound the second term in~\eqref{Eb0} as follows
\begin{align*}
&\sumj  \left| \int_{N_j} \bnu(\tau)^H  \bxi(\tau)  d\tau\right|\\
=&\sumj  \left| \int_{N_j} \bnu(\tau)^H  \left[\bxi(\tau_j)+(\tau-\tau_j)\bxi'(\tau_j) +\r(\tau) \right] d\tau\right|\\
\leq & \sumj \left| \int_{N_j} \bnu(\tau)^H \bxi(\tau_j) d\tau\right| + \sumj  \left| \int_{N_j} \bnu(\tau)^H  (\tau-\tau_j)\bxi'(\tau_j) d\tau\right| + \sumj  \left| \int_{N_j} \bnu(\tau)^H \r(\tau) d\tau\right|\\
\leq & \sumj  \|\bxi(\tau_j)\|_2 \left\| \int_{N_j} \bnu(\tau)  d\tau\right\| + \sumj \|\bxi'(\tau_j)\|_2  \left\| \int_{N_j}  (\tau-\tau_j) \bnu(\tau)  d\tau\right\| + \sumj   \int_{N_j} \|\bnu(\tau)\|_2\| \r(\tau)\|_2 d\tau \\
\leq & \left\|\bxi(\tau)\right\|_{2,\infty} \left[ \sumj\left\| \int_{N_j} \bnu(\tau)  d\tau\right\|+   \sumj N \left\| \int_{N_j}  (\tau-\tau_j) \bnu(\tau)  d\tau\right\|  + \sumj \frac{N^2}{2} \int_{N_j} (\tau-\tau_j)^2      \|\bnu(\tau)\|_2  d\tau \right]\\
=&\left\|\bxi(\tau)\right\|_{2,\infty} (I_0 + I_1 + I_2),
\end{align*}
where $I_l, l=0,1,2$ are defined in Lemma~\ref{UBE}. Here, we have plugged in $\bxi(\tau)=\bxi(\tau_j)+(\tau-\tau_j)\bxi'(\tau_j)+\r(\tau)$ to get the first equality. On the other hand, the first term in~\eqref{Eb0} can be bounded as
\begin{align*}
&\left| \int_F \bnu(\tau)^H  \bxi(\tau)  d\tau\right|\\
\leq &  \int_F \left| \bnu(\tau)^H  \bxi(\tau) \right| d\tau\\
\leq & \int_F \|\bnu(\tau)\|_2 \left\|\bxi(\tau)\right\|_2 d\tau\\
\leq &  \left\|\bxi(\tau)\right\|_{2,\infty} \int_F \|\bnu(\tau)\|_2  d\tau
\end{align*}
by using the Cauchy-Schwarz inequality.

Finally, the square error $\|\e\|_2^2$ can be upper bounded as
\begin{align*}
\|\e\|_2^2 \leq \left\|\bxi(\tau)\right\|_{2,\infty} \left( \int_F \|\bnu(\tau)\|_2  d\tau + I_0 + I_1 + I_2   \right)
\end{align*}
and we finish the proof of Lemma~\ref{UBE}.

\section{Proof of Lemma~\ref{I0I1}}
\label{proofI0I1}

To prove Lemma~\ref{I0I1}, we need the following two theorems, which are multiple measurement vector (MMV) random extensions of~\cite[Theorems 4, 5]{tang2015near} and are proved in Appendix~\ref{proof_dual_thm1} and~\ref{proof_dual_thm2}.
\begin{Theorem}
\label{dualstability}
Define a $K$ dimensional unit ball $\HH=\{\h\in \CCC^K: \| \h \|_2=1\}$.
For any $\tau_1, \tau_2, \ldots, \tau_J$ satisfying the minimum separation condition~\eqref{minsepcon}, there exists a dual certificate $\q$ such that the corresponding vector-valued trigonometric polynomial $\QQ(\tau)=\BB^*(\q)\a(\tau)$ satisfies the following properties for some $\q\in \CCC^{N}$ provided that $N\geq C\mu J^2 K \log\left( \frac{NJK}{\delta} \right)$.
\begin{enumerate}
\item For each $j=1,\ldots,J$, $\QQ(\tau_j)=\h_j$ with $\h_j \in \HH$.
\item In each near region $N_j= \{ \tau:d(\tau,\tau_j)<0.16/N  \}$, there exist constants $C_a$ and $C'_a$ such that
\begin{align}
\| \QQ(\tau) \|_2&\leq 1-\frac{C_a}{2}N^2(\tau-\tau_j)^2 \label{SN1}\\
\left\| \h_j-\QQ(\tau) \right\|_2 &\leq \frac{C'_a}{2}N^2(\tau-\tau_j)^2. \label{SN2}
\end{align}
\item In the far region $\tau \in F=[0,1)/\cap_{j=1}^J N_j$, there exists a constant $C_b>0$ such that
\begin{align}
\|\QQ(\tau)\|_2\leq 1-C_b. \label{SF}
\end{align}
\end{enumerate}
\end{Theorem}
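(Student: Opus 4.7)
The plan is to construct an explicit dual certificate $\q \in \CCC^N$ following the template of the Candès--Fernandez-Granda (CFG) construction, lifted to the vector-valued setting and randomized through $\BB$, as was done for the noiseless problem in~\cite{yang2016super}. Writing out the adjoint gives $\QQ(\tau) = \BB^*(\q)\a(\tau) = \sum_m q(m)\,\b_m\, e^{i2\pi\tau m}$, a $K$-dimensional trigonometric polynomial of degree $2M$. The first design choice is to parameterize $q(m) = \b_m^H \w(m)$ for a vector-valued sequence $\w(m)\in\CCC^K$ to be chosen; the isotropy property~\eqref{isotprop} then gives
\begin{align*}
\EEE\,\QQ(\tau) \;=\; \sum_m \EEE[\b_m\b_m^H]\,\w(m)\,e^{i2\pi\tau m} \;=\; \sum_m \w(m)\,e^{i2\pi\tau m} \;\triangleq\; \bar\QQ(\tau),
\end{align*}
so the expected polynomial is an arbitrary trigonometric polynomial whose Fourier coefficients we choose.

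Second, I would reduce the choice of $\bar\QQ$ to the CFG template. Fix the squared Fejér-type kernel $G$ used in~\cite{candes2014towards} and seek $\bar\QQ$ in the form
\begin{align*}
\bar\QQ(\tau) \;=\; \sum_{j=1}^J \bigl[\, G(\tau - \tau_j)\,\bm{\alpha}_j \;+\; G'(\tau - \tau_j)\,\bm{\beta}_j\,\bigr], \qquad \bm{\alpha}_j,\bm{\beta}_j \in \CCC^K,
\end{align*}
and impose the interpolation and derivative conditions $\bar\QQ(\tau_j) = \h_j$ and $\bar\QQ'(\tau_j) = 0$ for all $j$. Because of the minimum-separation condition~\eqref{minsepcon} and the rapid decay of $G$ and $G'$, the resulting $2J\times 2J$ block system (with each block being a scalar times $\I_K$) is strictly diagonally dominant, hence invertible, and yields $\bm{\alpha}_j \approx \h_j$ and $\bm{\beta}_j \approx 0$ with quantitative bounds. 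The standard CFG analysis then gives that $\bar\QQ$ satisfies the deterministic analogues of~\eqref{SN1}--\eqref{SF}, together with control on $\|\bar\QQ^{(l)}\|_2$ for $l=0,1,2$.

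Third — and this is the main work — I would show that the random polynomial $\QQ(\tau)$ concentrates uniformly around $\bar\QQ(\tau)$, along with its first two derivatives, on the whole torus. Writing
\begin{align*}
\QQ^{(l)}(\tau) - \bar\QQ^{(l)}(\tau) \;=\; \summ (\b_m\b_m^H - \I_K)\,\w(m)\,(i2\pi m)^l e^{i2\pi\tau m},
\end{align*}
I would apply a vector Bernstein inequality to bound each term at a fixed $\tau$ using the incoherence bound~\eqref{incoprop} to control the summand norms, and then apply a union bound over a fine grid together with Bernstein's polynomial inequality (exactly as in Lemma~\ref{Lemma_EPBz}) to upgrade to a uniform bound. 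Following the same bookkeeping as Lemmas~11 and~13 of~\cite{yang2016super}, but replacing the probabilistic estimate $\|\h\|_2\sim 1$ (valid when $\h_j$ is random on the sphere) by the deterministic estimate $\|\h\|_2 = \sqrt{J}$, yields uniform control of $\|\QQ^{(l)}(\tau) - \bar\QQ^{(l)}(\tau)\|_2$ by a small constant provided that $N \geq C\mu J^2 K \log(NJK/\delta)$. The extra factor of $J$ in the sample complexity is precisely the price of removing the randomness assumption on $\h_j$, as acknowledged in the paragraph immediately preceding the theorem.

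Fourth and last, I would finish by combining the deterministic CFG bounds on $\bar\QQ$ with the concentration bounds on $\QQ - \bar\QQ$. A small perturbation argument (solving an $O(J)$-dimensional system for corrected coefficients $\bm{\alpha}_j,\bm{\beta}_j$ using the actual random values $\QQ(\tau_j), \QQ'(\tau_j)$ in place of the target values) restores the exact interpolation $\QQ(\tau_j) = \h_j$ and $\QQ'(\tau_j) = 0$. The near-region quadratic bounds~\eqref{SN1}--\eqref{SN2} then follow from a Taylor expansion around $\tau_j$ together with the uniform bound on $\|\QQ''\|_{2,\infty}$, and the far-region bound~\eqref{SF} follows because $\|\bar\QQ(\tau)\|_2 \leq 1 - 2C_b$ on $F$ and $\|\QQ - \bar\QQ\|_{2,\infty} \leq C_b$. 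The main obstacle is the uniform concentration step, where one must simultaneously control three derivatives of a vector-valued random polynomial without the $\h$-randomness that would otherwise allow a $\sqrt{J}$ improvement.
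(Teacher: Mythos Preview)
Your proposal is correct and follows essentially the same route as the paper, which in turn defers almost entirely to the construction in~\cite{yang2016super}: build $\QQ$ from the random matrix kernel $\K_M(\tau)=\frac{1}{M}\summ g_M(m)e^{i2\pi\tau m}\b_m\b_m^H$ with coefficients chosen so that $\QQ(\tau_j)=\h_j$ and $\QQ'(\tau_j)=\zero$, compare to the deterministic Fej\'er-kernel polynomial $\QQb$, and transfer the CFG bounds via uniform concentration of $\QQ^{(l)}-\QQb^{(l)}$, with the $\sqrt{J}$ loss coming exactly where you identify it (Lemmas~11 and~13 of~\cite{yang2016super}). Your two-step presentation (fix deterministic coefficients, then correct) is equivalent to the paper's one-step solve of the random interpolation system.

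One point to tighten: your last paragraph says \eqref{SN1} follows from ``a Taylor expansion around $\tau_j$ together with the uniform bound on $\|\QQ''\|_{2,\infty}$.'' That alone gives only $\|\QQ(\tau)\|_2\le 1+\tfrac{C}{2}N^2(\tau-\tau_j)^2$, the wrong sign. What you actually need is the curvature estimate $\frac{d^2}{d\tau^2}\|\QQ(\tau)\|_2\le -C_aN^2$ on $N_j$, which the paper obtains from the sharper inequality $\|\QQ'(\tau)\|_2^2+\operatorname{Re}\{\QQ''(\tau)^H\QQ(\tau)\}\le -CN^2$ (inherited from~\cite{yang2016super}). This does follow from your concentration bounds on $\QQ,\QQ',\QQ''$ together with the corresponding CFG estimate for $\bar\QQ$, but it is a genuinely stronger statement than a uniform norm bound on $\QQ''$; make sure to state and use it explicitly.
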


\begin{Theorem}
\label{dualstability1}
Define a $K$ dimensional unit ball $\HH=\{\h\in \CCC^K: \| \h \|_2=1\}$.
For any $\tau_1, \tau_2, \ldots, \tau_J$ satisfying the minimum separation condition~\eqref{minsepcon},  there exists a vector-valued trigonometric polynomial $\QQ_1(\tau)=\BB^*(\q_1)\a(\tau)$ that satisfies the following properties for some $\q_1\in \CCC^{N}$ provided that $N\geq C\mu J^2 K \log\left( \frac{NJK}{\delta} \right)$.
\begin{enumerate}
\item In each near region $N_j= \{ \tau:d(\tau,\tau_j)<0.16/N  \}$, there exists a constant $C_a^1$ such that
\begin{align}
\left\| \h_j(\tau-\tau_j)-\QQ_1(
\tau) \right\|_2 &\leq \frac{C_a^1}{2}N(\tau-\tau_j)^2. \label{SN21}
\end{align}
\item In the far region $\tau \in F=[0,1)/\cap_{j=1}^J N_j$, there exists a constant $C_b^1>0$ such that
\begin{align}
\|\QQ_1(\tau)\|_2\leq \frac{C_b^1}{N}. \label{SF1}
\end{align}
\end{enumerate}
\end{Theorem}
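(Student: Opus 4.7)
The plan is to construct $\QQ_1(\tau)$ by mirroring the construction of $\QQ(\tau)$ in Theorem~\ref{dualstability} (and in~\cite{yang2016super}), but swapping the roles of value and derivative at each node $\tau_j$. Specifically, I would write
\begin{align*}
\QQ_1(\tau) = \sum_{j=1}^J \bPsi_0(\tau-\tau_j)\,\balpha_j + \sum_{j=1}^J \bPsi_1(\tau-\tau_j)\,\bbeta_j,
\end{align*}
where $\bPsi_0$ and $\bPsi_1$ are the random, subspace-dependent kernel and derivative-of-kernel building blocks that already appear in the construction of $\QQ$. These blocks are expressible in the form $\BB^*(\q)\a(\tau)$, so that $\QQ_1 = \BB^*(\q_1)\a(\tau)$ for an appropriate $\q_1 \in \CCC^N$ obtained from $\balpha_j,\bbeta_j \in \CCC^K$. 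I would then impose the interpolation conditions $\QQ_1(\tau_j) = \zero$ and $\QQ_1'(\tau_j) = \h_j$ for each $j = 1, \dots, J$; in other words, $\QQ_1$ vanishes at the nodes but has prescribed first derivative $\h_j$ there, which is exactly the opposite of the pattern in Theorem~\ref{dualstability}.

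Next I would solve for the coefficients. Stacking the $2J$ equations yields a block-structured linear system whose coefficient matrix has the same Gram-like form as the system underlying Theorem~\ref{dualstability}; under the isotropy and incoherence properties on $\{\b_m\}$, this block matrix concentrates around an identity-like target when $N \geq C\mu J^2 K \log(NJK/\delta)$. The existing concentration estimates used for Theorem~\ref{dualstability} (and the modified Lemmas~11 and~13 discussed at the end of Section~\ref{bound_error_pr}) therefore give invertibility via a Neumann series, together with quantitative bounds $\|\balpha_j\|_2, \|\bbeta_j\|_2 = \OO(1/N)$. The $1/N$ scale is the correct one because the derivative condition $\QQ_1'(\tau_j) = \h_j$ has unit magnitude while the natural scale of $\bPsi_1'$ is $N$.

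With the coefficients in hand, the far-region estimate~\eqref{SF1} follows from the off-diagonal decay of $\bPsi_0$ and $\bPsi_1$, combined with $\|\balpha_j\|_2, \|\bbeta_j\|_2 = \OO(1/N)$, giving $\|\QQ_1(\tau)\|_2 \leq C_b^1/N$ on $F$. For the near-region estimate~\eqref{SN21}, I would Taylor-expand $\QQ_1$ at $\tau_j$,
\begin{align*}
\QQ_1(\tau) = \QQ_1(\tau_j) + (\tau-\tau_j)\QQ_1'(\tau_j) + \tfrac{1}{2}(\tau-\tau_j)^2 \QQ_1''(\widetilde{\tau}_j) = (\tau-\tau_j)\h_j + \tfrac{1}{2}(\tau-\tau_j)^2 \QQ_1''(\widetilde{\tau}_j),
\end{align*}
and then apply the vector-valued Bernstein inequality from Appendix~\ref{proofUBE}. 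Since $\|\QQ_1\|_{2,\infty} = \OO(1/N)$ from the bounds above, Bernstein yields $\|\QQ_1''\|_{2,\infty} \leq N^2 \|\QQ_1\|_{2,\infty} = \OO(N)$, which is exactly the $\tfrac{C_a^1}{2}N(\tau-\tau_j)^2$ remainder required by~\eqref{SN21}.

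The main obstacle is the concentration step used to solve the interpolation system: one must show that the same random block matrix analyzed for Theorem~\ref{dualstability} is well conditioned under the mildly stronger scaling of the unknowns induced by the derivative normalization. This amounts to re-running the estimates of Lemmas~11 and~13 of~\cite{yang2016super} (with the modification described in Section~\ref{bound_error_pr} that removes the randomness assumption on $\h_j$) for $\QQ_1$ rather than $\QQ$. Once that is in place, the kernel-decay and Bernstein steps are essentially bookkeeping, and the extra $1/N$ factors in~\eqref{SN21} and~\eqref{SF1} are a direct consequence of normalizing the derivative rather than the value at the nodes.
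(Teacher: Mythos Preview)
Your construction and interpolation conditions ($\QQ_1(\tau_j)=\zero$, $\QQ_1'(\tau_j)=\h_j$) match the paper's exactly, and the overall logic (solve the block system by concentration, then kernel decay for the far region, Taylor expansion for the near region) is right. Two points deserve correction or comment.

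\textbf{The coefficient bound on $\bbeta_j$ is too weak.} You claim $\|\balpha_j\|_2,\|\bbeta_j\|_2=\OO(1/N)$, but the argument actually needs the sharper $\|\bbeta_j\|_2=\OO(1/N^2)$. With only $\OO(1/N)$ on $\bbeta_j$, the derivative-kernel term contributes $\sum_j\|\bPsi_1(\tau-\tau_j)\|\,\|\bbeta_j\|_2=\OO(N)\cdot\OO(1/N)=\OO(1)$, which kills both the far-region bound~\eqref{SF1} and the global $\|\QQ_1\|_{2,\infty}=\OO(1/N)$ you feed into Bernstein. The $\OO(1/N^2)$ scale comes not from the heuristic ``$\bPsi_1'$ has scale $N$'' but from the structure of the Schur complement of the interpolation system: writing the system with blocks $\Db_0,\Db_1,\Db_2$ built from $\KK_M,\KK_M',\KK_M''$, one has $\|\Sb^{-1}\|_\infty\leq C/N^2$ where $\Sb=\Db_2-\Db_1\Db_0^{-1}\Db_1$ (these are the bounds (B.7)--(B.8) in~\cite{candes2013super}). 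Once you record $\|\balpha_j\|_2\leq C/N$ and $\|\bbeta_j\|_2\leq C/N^2$, both~\eqref{SF1} and your Bernstein step go through.

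\textbf{A technical simplification you are missing.} The paper does not work directly with the random $\QQ_1$ for the pointwise bounds. Instead it introduces a \emph{deterministic} twin $\QQb_1$ built from the squared Fej\'er kernel $\KK_M$ with the same interpolation constraints, proves~\eqref{SN21} and~\eqref{SF1} for $\QQb_1$ using the classical Cand\`es--Fernandez-Granda kernel estimates, and then transfers to $\QQ_1$ via the uniform smallness of $\|\QQ_1-\QQb_1\|_2$ already established in~\cite{yang2016super}. This cleanly separates the concentration step (which gives closeness of the random and deterministic polynomials on a grid, then everywhere by Lipschitz) from the kernel-decay step (which is purely deterministic). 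Your plan to bound $\|\QQ_1''\|_{2,\infty}$ via Bernstein applied to the random $\QQ_1$ is valid in principle, but it presupposes a uniform-in-$\tau$ bound on $\|\QQ_1\|_2$ that you have not yet justified for the random kernel; routing through $\QQb_1$ avoids this extra work.
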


Next, we define a dual certificate as follows:
\begin{Definition}\label{dual_cert} (Dual Certificate):
Define a vector $\q\in \CCC^N$ as a dual certificate for $\xs$ if $\q$ makes the corresponding trigonometric polynomial
\begin{align*}
\QQ(\tau) = \BB^*(\q)\a(\tau) = \summ \q(m) \b_m \e_m^H \a(\tau) = \summ \q(m) e^{i 2\pi m \tau} \b_m
\end{align*}
satisfy
\begin{align}
\QQ(\tau_j) &= \h_j, \forall \tau_j \in \TT, \label{dual_cert1} \\
\|\QQ(\tau)\|_2 &< 1, \forall \tau \notin \TT, \label{dual_cert2}
\end{align}
where $\TT \triangleq \{\tau_1, \tau_2,\cdots,\tau_J \}$ is defined as a set containing all the true frequencies.
\end{Definition}

Note that
\begin{equation}
\begin{aligned}
I_0 &= \sumj \left\| \int_{N_j} \v(\tau) d\tau   \right\|_2\\
&=\sumj \int_{N_j} \v(\tau)^Hd\tau \frac{\int_{N_j} \v(\widehat{\tau})d\widehat{\tau}}{\|\int_{N_j} \v(\widehat{\tau})d\widehat{\tau}\|_2}\\
%&=\sumj \int_{N_j} \v(\tau)^H \frac{\int_{N_j} \v(\widehat{\tau})d\widehat{\tau}}{\|\int_{N_j} \v(\widehat{\tau})d\widehat{\tau}\|_2}  d\tau\\
&=\sumj \int_{N_j} \v(\tau)^H \QQ(\tau) d\tau+  \sumj \int_{N_j} \v(\tau)^H \left[     \frac{\int_{N_j} \v(\widehat{\tau})d\widehat{\tau}}{\|\int_{N_j} \v(\widehat{\tau})d\widehat{\tau}\|_2} -\QQ(\tau) \right]  d\tau\\
& \leq \left|\int_0^1 \v(\tau)^H \QQ(\tau) d\tau  \right| + \left|\int_F \v(\tau)^H \QQ(\tau) d\tau  \right|+  \sumj \int_{N_j} \v(\tau)^H \left[     \frac{\int_{N_j} \v(\widehat{\tau})d\widehat{\tau}}{\|\int_{N_j} \v(\widehat{\tau})d\widehat{\tau}\|_2} -\QQ(\tau) \right]  d\tau\\
&\leq \left|\int_0^1 \v(\tau)^H \QQ(\tau) d\tau  \right| + \int_F \|\v(\tau)\|_2 d\tau + C_a'I_2, \label{I02F_tempt1}
\end{aligned}
\end{equation}
where the last inequality follows from $\|\QQ(\tau)\|_2 \leq 1$ and
\begin{align*}
&\sumj \int_{N_j} \v(\tau)^H \left[     \frac{\int_{N_j} \v(\widehat{\tau})d\widehat{\tau}}{\|\int_{N_j} \v(\widehat{\tau})d\widehat{\tau}\|_2} -\QQ(\tau) \right]  d\tau\\
\leq & \sumj \int_{N_j} \|\v(\tau)\|_2 \left\|    \frac{\int_{N_j} \v(\widehat{\tau})d\widehat{\tau}}{\|\int_{N_j} \v(\widehat{\tau})d\widehat{\tau}\|_2} -\QQ(\tau) \right\|_2 d\tau\\
\leq &  \sumj \int_{N_j}  \frac{C'_a}{2}N^2(\tau-\tau_j)^2 \|\v(\tau)\|_2   d\tau\\
= & C'_a I_2
\end{align*}
by using inequality~\eqref{SN2} and the fact that $\frac{\int_{N_j} \v(\widehat{\tau})d\widehat{\tau}}{\|\int_{N_j} \v(\widehat{\tau})d\widehat{\tau}\|_2}$ belongs to $\HH$.

Recall that the linear operator $\BB: \CCC^{K\times N} \rightarrow \CCC^N$ and its adjoint operator $\BB^*: \CCC^N \rightarrow \CCC^{K\times N}$ are defined as in~\eqref{def_B} and~\eqref{def_B*}. Then, we have $\BB \BB^*: \CCC^N \rightarrow \CCC^N$ and $(\BB \BB^*)^{-1}: \CCC^N \rightarrow \CCC^N$ given as
\begin{align*}
\BB\BB^*(\x) &= \change{diag}\left(\left[\|\b_{-2M}\|_2^2,\cdots,\|\b_0\|_2^2,\cdots,\|\b_{2M}\|_2^2 \right]\right)\x,\\
(\BB\BB^*)^{-1}(\x) &= \change{diag}\left(\left[\|\b_{-2M}\|_2^{-2},\cdots,\|\b_0\|_2^{-2},\cdots,\|\b_{2M}\|_2^{-2} \right]\right)\x.
\end{align*}

To get~\eqref{I02F}, we still need to bound the first term in~\eqref{I02F_tempt1}. In particular, we have
\begin{align*}
\left|\int_0^1 \v(\tau)^H \QQ(\tau) d\tau  \right|
=&\left|\int_0^1 \v(\tau)^H \BB^*(\q) \a(\tau) d\tau  \right| \\
%=&\left|\int_0^1 \left\langle   \BB^*(\q) , \v(\tau) \a(\tau)^H \right\rangle d\tau  \right| \\
=&\left| \left\langle   \BB^*(\q) ,  \int_0^1\v(\tau) \a(\tau)^Hd\tau  \right\rangle  \right| \\
=&\left| \left\langle  \q ,  \e  \right\rangle  \right|   =\left| \left\langle   (\BB\BB^*)^{-1}(\q) ,  \BB\BB^*(\e)  \right\rangle  \right|\\
=&\left| \left\langle   \BB^*(\BB\BB^*)^{-1}(\q) ,  \BB^*(\e)  \right\rangle  \right|\\
=&\left| \left\langle   \BB^*(\BB\BB^*)^{-1}(\q) ,  \BB^*\BB(\E)  \right\rangle  \right|.\\
\end{align*}
Define a new polynomial $\QQt(\tau) \triangleq \BB^*(\BB\BB^*)^{-1}(\q)\a(\tau)$ and recall that $\bxi(\tau)= \BB^*\BB(\E) \a(\tau)$. With Parseval's theorem, we obtain
\begin{align*}
\left|\int_0^1 \v(\tau)^H \QQ(\tau) d\tau  \right|
=&\left| \left\langle   \QQt(\tau),  \bxi(\tau)  \right\rangle  \right|  \\
=&\left| \int_0^1     \bxi(\tau)^H \QQt(\tau) d\tau \right|  \\
\leq & \int_0^1 \left\|\bxi(\tau)\right\|_2\| \QQt(\tau)\|_2 d\tau \\
\leq & \left( \int_0^1 \|\bxi(\tau)\|_2^2 d\tau \right)^{\frac 1 2}  \left( \int_0^1 \|\QQt(\tau)\|_2^2 d\tau \right)^{\frac 1 2}\\
\triangleq & \|\bxi(\tau)\|_{2,2} \|\QQt(\tau)\|_{2,2},
\end{align*}
where the last inequality follows from the Cauchy-Schwarz inequality. Here, we define the $2,2$-norm of $\QQt(\tau)$ and $\bxi(\tau)$ as
\begin{align*}
\|\QQt(\tau)\|_{2,2} &\triangleq \left( \int_0^1 \|\QQt(\tau)\|_2^2 d\tau \right)^{\frac 1 2},\\
\|\bxi(\tau)\|_{2,2} &\triangleq \left( \int_0^1 \|\bxi(\tau)\|_2^2 d\tau \right)^{\frac 1 2}.
\end{align*}
It follows that
\begin{align}
\left|\int_0^1 \v(\tau)^H \QQ(\tau) d\tau  \right| \leq  \|\bxi(\tau)\|_{2,2} \|\QQt(\tau)\|_{2,2}.
\label{I02F_first}
\end{align}

The following lemma gives an upper bound for $\|\QQt(\tau)\|_{2,2}$ and is proved in Appendix~\ref{proofQ22}.
\begin{Lemma}
\label{Q22}
Define two events
\begin{align*}
\EE_{\K} &\triangleq  \left\{ \left\| \K-\EEE \K\right\|^2\leq C\frac{J}{NK}\log\left(\frac{K(J+1)}{\delta}\right)\right\},\\
\EE_{\K'} &\triangleq  \left\{ \left\| \K'-\EEE \K'\right\|^2\leq C\frac{JN}{K}\log\left(\frac{K(J+1)}{\delta}\right)\right\}
\end{align*}
with $\PPP(\EE_{\K} )\geq 1-\delta$ and $\PPP(\EE_{\K'} )\geq 1-\delta$.
$\K$ and $\K'$ are two block matrices defined in Appendix~\ref{proofQ22}.
Conditioned on the above two events $\EE_{\K}$ and $\EE_{\K'}$,
the $2,2$ norm of $\QQt(\tau)$ can be bounded as
\begin{align}
\|\QQt(\tau)\|_{2,2} \leq C J \sqrt{\frac{1}{NK} \log\left(\frac{K(J+1)}{\delta}\right)}
\label{eqQ22}
\end{align}
for some numerical constant $C$ provided that $N \geq CK \log\left(\frac{K(J+1)}{\delta}\right)$.
\end{Lemma}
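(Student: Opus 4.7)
The plan is to reduce $\|\QQt(\tau)\|_{2,2}^2$ to a weighted sum of squared Fourier coefficients of $\q$ via Parseval, and then bound that sum using the explicit form of the dual certificate provided by the kernel-based construction underlying Theorem~\ref{dualstability} (adapted from \cite{yang2016super}).

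Since $\int_0^1 \a(\tau)\a(\tau)^H\,d\tau = \I_N$, Parseval's identity yields
\[
\|\QQt(\tau)\|_{2,2}^2 = \|\BB^*(\BB\BB^*)^{-1}(\q)\|_F^2 = \summ \frac{|\q(m)|^2}{\|\b_m\|_2^2},
\]
where the last equality uses that $\BB^*(\BB\BB^*)^{-1}(\q)$ is the $K\times N$ matrix whose $m$-th column equals $\q(m)\b_m/\|\b_m\|_2^2$. This reduces the task to bounding the weighted coefficient sum on the right.

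Next, I would invoke the explicit form of the dual polynomial, $\QQ(\tau) = \sumj[\KK(\tau,\tau_j)\balpha_j + \KK'(\tau,\tau_j)\bbeta_j]$, where $\KK(\tau,\tau') = \summ g_M(m)\,\b_m\b_m^H\,e^{i2\pi m(\tau-\tau')}$ is a random matrix-valued Fejer-type kernel and $\balpha_j,\bbeta_j\in\CCC^K$ are interpolation coefficients obtained by solving a block linear system whose coefficient matrix is assembled from $\K,\K'$. Reading off Fourier coefficients gives
\[
\q(m) = g_M(m)\,\b_m^H\!\sumj e^{-i2\pi m\tau_j}\balpha_j + g_M'(m)\,\b_m^H\!\sumj e^{-i2\pi m\tau_j}\bbeta_j.
\]
Conditioned on $\EE_\K\cap\EE_{\K'}$, the block kernels concentrate around their well-conditioned expectations, yielding $\|\balpha_j\|_2 = O(1)$ and $\|\bbeta_j\|_2 = O(1/N)$, with $\balpha_j,\bbeta_j$ effectively decoupled from the randomness in any single $\b_m$. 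The crucial refinement is then to replace the crude Cauchy--Schwarz bound $|\b_m^H\v|^2 \leq \|\b_m\|_2^2\|\v\|_2^2 \approx K\|\v\|_2^2$ by the tighter subgaussian estimate $|\b_m^H\v|^2 \leq C\|\v\|_2^2\log(K(J+1)/\delta)$, which holds uniformly in $m$ by a union bound using the isotropy of $\b_m$; this is precisely where the factor of $K$ is saved. Combining this with the triangle bounds $\|\sumj e^{-i2\pi m\tau_j}\balpha_j\|_2^2 \leq CJ^2$ and $\|\sumj e^{-i2\pi m\tau_j}\bbeta_j\|_2^2 \leq CJ^2/N^2$, the Fejer-kernel sums $\summ g_M(m)^2 \leq C/N$ and $\summ g_M'(m)^2 \leq CN$, and the lower bound $\|\b_m\|_2^2 \geq cK$ in the denominator, delivers the claimed $\summ |\q(m)|^2/\|\b_m\|_2^2 \leq CJ^2/(NK)\log(K(J+1)/\delta)$.

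The main obstacle will be the decoupling step, since $\balpha_j,\bbeta_j$ in fact depend on all of $\{\b_m\}$ through the random kernel matrices. I would handle this by splitting $\balpha_j = \EEE\balpha_j + \Delta_j$, applying the subgaussian concentration to the independent part $\EEE\balpha_j$, and absorbing the residual $\Delta_j$ through the spectral bounds on $\K - \EEE\K$ and $\K' - \EEE\K'$ encoded in $\EE_\K,\EE_{\K'}$. The hypothesis $N \geq CK\log(K(J+1)/\delta)$ is precisely what keeps these residuals small enough for the argument to close.
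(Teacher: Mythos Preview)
Your Parseval reduction and the explicit form of $\q(m)$ are correct, and this is essentially where the paper's argument begins as well (written there in the equivalent form $\|\QQt(\tau)\|_{2,2}^2 = \int_0^1\|\K\balpha + \K'\bbeta\|_2^2\,d\tau$ after recognizing that $\QQt(\tau)$ is built from the \emph{normalized} kernel $\Kt_M(\tau) = \frac{1}{M}\summ g_M(m)e^{i2\pi\tau m}\frac{\b_m}{\|\b_m\|_2}\frac{\b_m^H}{\|\b_m\|_2}$). The gap is in your mechanism for saving the factor of $K$. The subgaussian bound $|\b_m^H\v|^2 \leq C\|\v\|_2^2\log(K(J+1)/\delta)$ is not implied by the hypotheses: isotropy $\EEE\b\b^H = \I_K$ plus incoherence $\max_k|\b(k)|^2\leq\mu$ give only the deterministic estimate $|\b_m^H\v| \leq \sqrt{\mu K}\,\|\v\|_2$, which is Cauchy--Schwarz again. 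Even if you added a subgaussian assumption, the union bound you invoke is over $N$ indices $m$, so the logarithm would be $\log(N/\delta)$ rather than the $\log(K(J+1)/\delta)$ in the statement; and your decoupling split $\balpha_j = \EEE\balpha_j + \Delta_j$ does not close the loop, since $|\b_m^H\Delta_j|$ still couples $\b_m$ to a vector depending on all of $\{\b_n\}$.

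The paper's route avoids all three obstacles. The $1/K$ does not come from a tail bound on $\b_m^H\v$ but from the second half of the isotropy assumption, $\EEE\bigl(\frac{\b}{\|\b\|_2}\frac{\b^H}{\|\b\|_2}\bigr) = \frac{1}{K}\I_K$: this forces $\EEE\K = \frac{1}{K}[\KK_M(\tau-\tau_j)\I_K]_{j=1}^J$, and it is also what makes the variance proxy in the matrix Bernstein inequality for $\K-\EEE\K$ scale like $J/(NK)$ rather than $J/N$. That Bernstein bound is precisely what the events $\EE_\K,\EE_{\K'}$ in the lemma statement encode. On $\EE_\K\cap\EE_{\K'}$ one has $\int_0^1\|\K\|^2\,d\tau \leq CJ\log(K(J+1)/\delta)/(NK)$ and the analogous bound for $\K'$; combined with $\|\balpha\|_2 \leq C\sqrt{J}$ and $\|\bbeta\|_2 \leq C\sqrt{J}/N$ (which follow from invertibility of the random system matrix $\D$, already established in \cite{yang2016super}, and absorb the coupling you were worried about), the conclusion follows deterministically. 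In short, the events $\EE_\K,\EE_{\K'}$ are not auxiliary inputs to a decoupling argument --- they \emph{are} the argument.
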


By plugging~\eqref{I02F_first} and~\eqref{eqQ22} into~\eqref{I02F_tempt1}, one can bound $I_0$ as
\begin{align}
I_0 &\leq C_0 \left( J \sqrt{\frac{1}{NK} \log\left(\frac{K(J+1)}{\delta}\right)} \|\bxi(\tau)\|_{2,2}+ \int_F \|\v(\tau)\|_2 d\tau + I_2 \right)
\end{align}
conditioned on the two events $\EE_{\K}$ and $\EE_{\K'}$ and provided that $N \geq CK \log\left(\frac{K(J+1)}{\delta}\right)$.

Similar to~\eqref{I02F_tempt1}, we can divide $I_1$ into the following three parts
\begin{equation}
\begin{aligned}
I_1 &= N\sumj \left\| \int_{N_j}(\tau-\tau_j) \v(\tau) d\tau   \right\|_2\\
&=N\sumj \int_{N_j}(\tau-\tau_j) \v(\tau)^Hd\tau \frac{\int_{N_j} (\widehat{\tau}-\tau_j)  \v(\widehat{\tau})d\widehat{\tau}}{\|\int_{N_j} (\widehat{\tau}-\tau_j) \v(\widehat{\tau})d\widehat{\tau}\|_2}\\
&=N\sumj \int_{N_j} \v(\tau)^H (\tau-\tau_j) \frac{\int_{N_j} (\widehat{\tau}-\tau_j)  \v(\widehat{\tau})d\widehat{\tau}}{\|\int_{N_j} (\widehat{\tau}-\tau_j) \v(\widehat{\tau})d\widehat{\tau}\|_2}d\tau\\
&=N\sumj \int_{N_j} \v(\tau)^H \QQ_1(\tau) d\tau+ N \sumj \int_{N_j} \v(\tau)^H \left[  (\tau-\tau_j) \frac{\int_{N_j} (\widehat{\tau}-\tau_j)  \v(\widehat{\tau})d\widehat{\tau}}{\|\int_{N_j} (\widehat{\tau}-\tau_j) \v(\widehat{\tau})d\widehat{\tau}\|_2} -\QQ_1(\tau) \right]  d\tau\\
& \leq N\!\left|\!\int_0^1\!\!\! \v(\tau)^H \!\QQ_1(\tau) d\tau  \right| \!+\! N\left|\int_F\!\!\! \v(\tau)^H\! \QQ_1(\tau) d\tau  \right| \!+\!  N \!\sumj \!\int_{N_j} \!\!\!\!\v(\tau)^H \!\!\left[ \! (\tau \!-\!\tau_j) \frac{\int_{N_j} (\widehat{\tau}\!-\!\tau_j)  \v(\widehat{\tau})d\widehat{\tau}}{\|\!\int_{N_j} (\widehat{\tau}\!-\!\tau_j) \v(\widehat{\tau})d\widehat{\tau}\|_2} \!-\!\QQ_1(\tau)\! \right] \! d\tau\\
&\leq N\left|\int_0^1 \v(\tau)^H \QQ_1(\tau) d\tau  \right| + C_b^1 \int_F \|\v(\tau)\|_2 d\tau + C_a^1I_2, \label{I12F_tempt1}
\end{aligned}
\end{equation}
where the last inequality follows from~\eqref{SN21} and~\eqref{SF1}. Then, we are left with bounding the first term in~\eqref{I12F_tempt1}.

With a similar trick that used for $I_0$, we have
\begin{align*}
\left|\int_0^1 \v(\tau)^H \QQ_1(\tau) d\tau  \right| \leq \|\bxi(\tau)\|_{2,2} \|\QQt_1(\tau)\|_{2,2}.
\end{align*}
The vector-valued polynomial $\QQt_1(\tau)$ shares the same form as in~\eqref{QKab}, namely,
\begin{align*}
\QQt_1(\tau)=\sumj \Kt_M(\tau-\tau_j)\balpha_j^1 +\sumj \Kt_M'(\tau-\tau_j)\bbeta_j^1,
\end{align*}
with coefficient vectors $\balpha^1 = [{\balpha_1^1}^H~\cdots~{\balpha_J^1}^H]^H$ and $\bbeta^1 = [{\bbeta_1^1}^H~\cdots~{\bbeta_J^1}^H]^H$ satisfying
\begin{align*}
\|\balpha^1\|_2 \leq C_{\alpha^1} \frac{\sqrt{J}}{N},~\|\bbeta^1\|_2 \leq C_{\beta^1} \frac{\sqrt{J}}{N^2},
\end{align*}
which can be verified with a similar trick used in~\eqref{bound_abfinal}.

Similar to Lemma~\ref{Q22}, we can bound $\|\QQt_1(\tau)\|_{2,2}$ as
\begin{align*}
\|\QQt_1(\tau)\|_{2,2} \leq C \frac{J}{N } \sqrt{\frac{1}{NK} \log\left( \frac{K(J+1)}{\delta}  \right)}.
\end{align*}
with probability as least $1-\delta$.
Finally, one can bound $I_1$ as
\begin{align*}
I_1 &\leq C_1 \left( J \sqrt{\frac{1}{NK} \log\left(\frac{K(J+1)}{\delta}\right)} \|\bxi(\tau)\|_{2,2} + \int_F \|\v(\tau)\|_2 d\tau + I_2 \right)
\end{align*}
conditioned on the two events $\EE_{\K}$ and $\EE_{\K'}$ and provided that $N \geq CK \log\left(\frac{K(J+1)}{\delta}\right)$.
Then, we finish the proof of Lemma~\ref{I0I1}.

\section{Proof of Lemma \ref{FI2}}
\label{proofFI2}

Define $\PP_{\TT}(\bnu)$ as the projection of $\bnu(\tau)$ on the true frequency set $\TT \triangleq \{\tau_1, \tau_2,\cdots,\tau_J \}$. Set $\QQ(\tau)$ as the dual polynomial in Thereom~\ref{dualstability}. Denote $\|\cdot\|_{2,\change{TV}}$ as an extension of the traditional TV norm, i.e.,
\begin{equation}
\begin{aligned}
\|\PP_{\TT}(\bnu) \|_{2,\change{TV}}=&\int_0^1 \PP_{\TT}(\bnu^H) \QQ(\tau) d\tau \\
=&\int_{\TT} \bnu(\tau)^H \QQ(\tau) d\tau\\
\leq& \left| \int_0^1 \bnu(\tau)^H \QQ(\tau) d\tau  \right| + \left|\int_{\TT^c} \bnu(\tau)^H \QQ(\tau) d\tau \right|\\
\leq&  C J\! \sqrt{\!\frac{1}{NK} \log \! \left(\!\frac{K(J+1)}{\delta}\!\right)}\|\bxi(\tau)\|_{2,2}\!+\!\!\sum_{\tau_j\in \TT} \left| \int_{N_j/\{\tau_j\}} \!\!\! \bnu(\tau)^H \!\QQ(\tau) d\tau \right| + \left| \int_F\!\! \bnu(\tau)^H\! \QQ(\tau) d\tau \right|,
\label{PTV}
\end{aligned}
\end{equation}
where $\TT^c$ is defined as the complement set of $\TT$ on $[0,1)$. Note that the integration over the far region $F$ can be bounded with
\begin{equation}
\begin{aligned}
\left| \int_F \bnu(\tau)^H \QQ(\tau) d\tau \right|
&\leq \int_F \|\bnu(\tau)\|_2 \|\QQ(\tau)\|_2 d\tau\\
&\leq (1-C_b) \int_F \| \bnu(\tau) \|_2df
\label{PTV_F}
\end{aligned}
\end{equation}
by using~\eqref{SF}.
On the other hand, we can bound the integration over $N_j/\{\tau_j\}$ with
\begin{equation}
\begin{aligned}
\left| \int_{N_j/\{\tau_j\}} \bnu(\tau)^H \QQ(\tau) d\tau \right| &\leq \int_{N_j/\{\tau_j\}} \|\bnu(\tau)\|_2 \|\QQ(\tau)\|_2 d\tau \\
&\leq  \int_{N_j/\{\tau_j\}} \left(1-\frac{1}{2}N^2C_a(\tau-\tau_j)^2  \right) \|\bnu(\tau)\|_2  d\tau \\
&\leq  \int_{N_j/\{\tau_j\}}  \|\bnu(\tau)\|_2  d\tau -C_a I_2^j,
\label{PTV_N}
\end{aligned}
\end{equation}
where the second inequality follows from~\eqref{SN1}. Hence, $\|\PP_{\TT}(\bnu) \|_{2,\change{TV}}$ can be bounded with
\begin{align*}
\|\PP_{\TT}(\bnu) \|_{2,TV}&\leq CJ \sqrt{\frac{1}{NK} \log\left(\frac{K(J+1)}{\delta}\right)} \|\bxi(\tau)\|_{2,2}+\!\!\sum_{\tau_j\in \TT}  \int_{N_j/\{\tau_j\}}\!\!\!\!  \|\bnu(\tau)\|_2  d\tau -C_aI_2+(1\!-\!C_b)\!\! \int_F\!\! \| \bnu(\tau) \|_2d\tau\\
&= C J \sqrt{\frac{1}{NK} \log\left(\frac{K(J+1)}{\delta}\right)}\|\bxi(\tau)\|_{2,2} +\|\PP_{\TT^c}(\bnu) \|_{2,\change{TV}}-C_aI_2-C_b \int_F \| \bnu(\tau) \|_2d\tau
\end{align*}
by plugging~\eqref{PTV_F} and~\eqref{PTV_N} into~\eqref{PTV}. It follows that
\begin{align}
\|\PP_{\TT^c}(\bnu) \|_{2,\change{TV}}-\|\PP_{\TT}(\bnu) \|_{2,\change{TV}}\geq C_aI_2+C_b \int_F \| \bnu(\tau) \|_2d\tau- C J \sqrt{\frac{1}{NK} \log\left(\frac{K(J+1)}{\delta}\right)}\|\bxi(\tau)\|_{2,2}.
\label{TTCL}
\end{align}

As in Lemma~\ref{Lemma_optcon}, denote $\Xh$ as the solution of the atomic norm regularized least-squares problem~\eqref{atomic_denoising}. Then, we have
\begin{align*}
\frac{1}{2}\| \y-\BB(\Xh) \|_2^2+\lambda\| \Xh \|_{\AA}\leq \frac{1}{2}\| \y-\BB(\X^\star) \|_2^2+\lambda\| \X^\star \|_{\AA}.
\end{align*}
By some elementary calculations, we can obtain
\begin{align*}
\lambda\|\Xh\|_{\AA} &\leq \lambda\|\Xs\|_{\AA} + \frac 1 2 \left[\| \y-\BB(\X^\star) \|_2^2- \| \y-\BB(\Xh) \|_2^2\right]\\
&=\lambda\|\Xs\|_{\AA} +  \frac 1 2 \left[ \|\z\|_2^2 - \| \BB(\Xs)+\z-\BB(\Xh) \|_2^2 \right]\\
&= \lambda\|\Xs\|_{\AA} +  \frac 1 2 \left[ 2\left\langle\BB(\Xh-\Xs),\z\right\rangle_{\RRR} - \| \BB(\Xs-\Xh) \|_2^2 \right]\\
&\leq \lambda\|\Xs\|_{\AA} + \left\langle\BB(\Xh-\Xs),\z\right\rangle_{\RRR}\\
&=\lambda\|\Xs\|_{\AA} + \left\langle\e,\z\right\rangle_{\RRR},
\end{align*}
where the first equality follows from $\y = \BB(\Xs)+\z$. Then, we have
\begin{align*}
\|\Xh\|_{\AA} \leq \|\Xs\|_{\AA} + \frac{1}{\lambda} |\langle\e,\z \rangle|,
\end{align*}
which immediately results in
\begin{align}
\|\bmuh\|_{2,\change{TV}} \leq \|\bmu\|_{2,\change{TV}} + \frac{1}{\lambda} |\langle\e,\z \rangle|
\label{mu_2TV}
\end{align}
due to $\|\Xh\|_{\AA} = \|\bmuh\|_{2,\change{TV}}$ and $\|\Xs\|_{\AA} = \|\bmu\|_{2,\change{TV}}$.

Recall that in Lemma~\ref{UBE}, we have shown
\begin{align*}
\|\e\|_2^2 &=\left| \int_0^1  \bnu(\tau)^H \bxi(\tau)  d\tau \right|\\
&\leq  \left\|\bxi(\tau)\right\|_{2,\infty} \left( \int_F \|\bnu(\tau)\|_2  d\tau + I_0 + I_1 + I_2   \right).
\end{align*}
With a similar technique, we can bound the inner product $ |\langle\e,\z \rangle|$ by
\begin{equation}
\begin{aligned}
 |\langle\e,\z \rangle|&= |\langle\BB(\E),\z \rangle| = |\langle \E, \BB^*(\z) \rangle|\\
 &=\left|\left\langle \int_0^1 \bnu(\tau)\a(\tau)^Hd \tau, \BB^*(\z) \right\rangle\right|\\
 &=\left| \int_0^1  \bnu(\tau)^H \BB^*(\z) \a(\tau)  d\tau \right|\\
 &\leq \left\|\BB^*(\z)\a(\tau)\right\|_{2,\infty} \left( \int_F \|\bnu(\tau)\|_2  d\tau + I_0 + I_1 + I_2   \right)\\
 &= \|\BB^*(\z)\|_{\AA}^* \left( \int_F \|\bnu(\tau)\|_2  d\tau + I_0 + I_1 + I_2   \right)\\
 &\leq C\frac{\lambda}{\eta} \left( J \sqrt{\frac{1}{NK} \log\left(\frac{K(J+1)}{\delta}\right)}  \|\bxi(\tau)\|_{2,2} +I_2 +  \int_F \|\bnu(\tau)\|_2  d\tau \right)
 \label{abs_ez}
\end{aligned}
\end{equation}
if $ \|\BB^*(\z)\|_{\AA}^*\leq \frac{\lambda}{\eta}$. Here, we also use~Lemma \ref{I0I1} in the last inequality. Substituting~\eqref{abs_ez} into~\eqref{mu_2TV} leads to
\begin{align*}
& \|\bmu\|_{2,\change{TV}} +  C\frac{1}{\eta} \left(  J \sqrt{\frac{1}{NK} \log\left(\frac{K(J+1)}{\delta}\right)} \|\bxi(\tau)\|_{2,2}  +I_2 +  \int_F \|\bnu(\tau)\|_2  d\tau \right) \\
\geq & \|\bmuh\|_{2,\change{TV}} = \|\bmu+\bnu\|_{2,\change{TV}} \\
%=& \|\bmu +\PP_{\TT}(\bnu) + \PP_{\TT^c}(\bnu) \|_{2,\change{TV}} \\
=&\|\bmu +\PP_{\TT}(\bnu)  \|_{2,\change{TV}}  +\| \PP_{\TT^c}(\bnu) \|_{2,\change{TV}} \\
\geq & \|\bmu\|_{2,\change{TV}} -\|\PP_{\TT}(\bnu)  \|_{2,\change{TV}}  +\| \PP_{\TT^c}(\bnu) \|_{2,\change{TV}},
\end{align*}
which further implies
\begin{align}
\| \PP_{\TT^c}(\bnu) \|_{2,\change{TV}} - \|\PP_{\TT}(\bnu)  \|_{2,\change{TV}} \leq
C \frac{1}{\eta} \left( J \sqrt{\frac{1}{NK} \log\left(\frac{K(J+1)}{\delta}\right)}  \|\bxi(\tau)\|_{2,2} +I_2 +  \int_F \|\bnu(\tau)\|_2  d\tau \right).
\label{TTCU}
\end{align}
Combining~\eqref{TTCL} and~\eqref{TTCU}, we get
\begin{align*}
C(1+\eta^{-1})J \sqrt{\frac{1}{NK} \log\left(\frac{K(J+1)}{\delta}\right)}  \|\bxi(\tau)\|_{2,2} \geq (C_b-C\eta^{-1})\int_F \|\bnu(\tau)\|_2  d\tau+ (C_a-C\eta^{-1})I_2
\end{align*}
Finally, we can obtain~\eqref{eqFI2} with large enough $\eta$ and finish the proof of Lemma~\ref{FI2}.
%\SL{[Mike: Should Theorem 3.1 say ``for sufficiently large $\eta$"? Shuang: Yes, we did mention this in Theorem 3.1.]}

\section{Proof of Theorem \ref{dualstability}}
\label{proof_dual_thm1}

We use the dual polynomial constructed in~\cite{yang2016super}, namely,
\begin{align}
\QQ(\tau)=\sumj \K_M(\tau-\tau_j)\balpha_j +\sumj \K_M'(\tau-\tau_j)\bbeta_j
\label{dual_rand}
\end{align}
with
\begin{align}
\K_M(\tau)\triangleq \frac 1 M \summ  g_M(m) e^{i2\pi \tau m}  \b_m\b_m^H
\label{kernb}
\end{align}
being the random matrix kernel and $\balpha = [\balpha_1^H~\cdots~\balpha_J^H]^H$,  $\bbeta = [\bbeta_1^H~\cdots~\bbeta_J^H]^H$ being the coefficients that are selected such that
\begin{align*}
\QQ(\tau_j) &= \h_j,\\
\QQ'(\tau_j) &= \zero.
\end{align*}
Then, we can ensure that the first and third statements are satisfied due to the construction of this dual polynomial, when $N$ satisfies the lower bound given in~\eqref{N_final}. Then, we are left with proving the second statement.

For all $\tau_j \in \TT$, we have $\|\QQ(\tau_j)\|_2 = 1$ and
\begin{align*}
\frac{d\|\QQ(\tau)\|_2}{d\tau} |_{\tau = \tau_j} = \|\QQ(\tau_j)\|_2^{-1} \langle \QQ'(\tau_j),\QQ(\tau_j) \rangle_{\RRR} = 0
\end{align*}
due to $\QQ'(\tau_j) = \zero$. Furthermore, for all $\tau\in N_j$, we have
\begin{align*}
\frac{d^2\|\QQ(\tau)\|_2}{d\tau^2} &= -\|\QQ(\tau)\|_2^{-3} \langle \QQ'(\tau),\QQ(\tau) \rangle_{\RRR}^2 + \|\QQ(\tau)\|_2^{-1} \left( \|\QQ'(\tau)\|_2^2+\change{Re}\{ \QQ''(\tau)^H \QQ(\tau) \} \right)\\
&\leq -CN^2,
\end{align*}
%\note{\SL{Correct??}}
where the last inequality follows from $ \|\QQ'(\tau)\|_2^2+\change{Re}\{ \QQ''(\tau)^H \QQ(\tau) \} \leq -CN^2$ and $\|\QQ(\tau)\|_2\leq 1$ for some numerical constant $C$~\cite{yang2016super}. The Taylor expansion of $\QQ(\tau)$ at $\tau_j$ gives
\begin{align*}
\|\QQ(\tau)\|_2 &= \|\QQ(\tau_j)\|_2 + (\tau-\tau_j) \frac{d\|\QQ(\tau)\|_2}{d\tau} |_{\tau = \tau_j} + \frac{1}{2} (\tau-\tau_j)^2 \frac{d^2\|\QQ(\tau)\|_2}{d\tau^2} |_{\tau = \taut} \\
&\leq 1-CN^2(\tau-\tau_j)^2
\end{align*}
with some $\taut \in N_j$. Then, setting $C = \frac{1}{2}C_\alpha$, we can obtain~\eqref{SN1}.

Next, we continue to prove~\eqref{SN2}. Recall the dual polynomial constructed in~\cite{yang2014exact}, namely,
\begin{align}
\QQb(\tau) = \sumj \KK_M(\tau-\tau_j) \balphab_j + \sumj \KK_M'(\tau-\tau_j) \bbetab_j,
\label{dual_zai}
\end{align}
where
\begin{align*}
\KK_M(\tau) \triangleq \left[  \frac{\sin(\pi(M+1)\tau)}{(M+1)\sin(\pi\tau)} \right]^4 = \frac{1}{M} \summ  g_M(m) e^{i2\pi \tau m}
\end{align*}
is the squared Fej\'er kernel and $\balphab = [\balphab_1^H~\cdots~\balphab_J^H]^H$,  $\bbetab = [\bbetab_1^H~\cdots~\bbetab_J^H]^H$ are the coefficients that are selected such that
\begin{align*}
\QQb(\tau_j) &= \h_j,\\
\QQb'(\tau_j) &= \zero.
\end{align*}
With the help of the above dual polynomial~\eqref{dual_zai}, we have
\begin{align*}
\| \h_j-\QQ(\tau) \|_2 &= \| (\h_j-\QQb(\tau) ) - (\QQ(\tau) - \QQb(\tau)) \|_2\\
&\leq \| \h_j-\QQb(\tau) \|_2 + \| \QQ(\tau) - \QQb(\tau) \|_2 \\
&\leq C\| \h_j-\QQb(\tau) \|_2
\end{align*}
since $ \| \QQ(\tau) - \QQb(\tau) \|_2$ can be upper bounded with a very small number as is shown in Lemma~15 of paper~\cite{yang2016super}. %\note{\SL{Check!!}}

To obtain~\eqref{SN2}, we next bound $\| \h_j-\QQb(\tau) \|_2$ by following the proof strategies of Lemma~2.5 in~\cite{candes2013super}. Without loss of generality, we consider $\tau_j = 0$ and bound $\| \h_j-\QQb(\tau) \|_2$ in the interval $[0, 0.16/N]$. Define
\begin{align*}
\w(\tau) \triangleq \h_j-\QQb(\tau) = \w_R(\tau) + i \w_I(\tau),
\end{align*}
where $\w_R(\tau)$ and $\w_I(\tau)$ denote the real and imaginary part of $\w(\tau)$, respectively. Then, we have
\begin{align*}
\|\w_R''(\tau)\|_2 &= \left\|  \sumj \KK_M''(\tau-\tau_j) \change{Re}(\balphab_j) + \sumj \KK_M'''(\tau-\tau_j) \change{Re}(\bbetab_j)  \right\|_2\\
&\leq \max_{1\leq j \leq J} \|\balphab_j\|_2 \sum_{\tau_j\in\TT} |\KK''(\tau-\tau_j)| + \max_{1\leq j \leq J} \|\bbetab_j\|_2 \sum_{\tau_j\in\TT} |\KK'''(\tau-\tau_j)| \\
&\leq C \left( |\KK_M''(\tau)|+ \sum_{\tau_j\in \TT\backslash \{0\} } |\KK_M''(\tau-\tau_j)|  \right) + C \frac 1 N \left( |\KK_M'''(\tau)|+ \sum_{\tau_j\in \TT\backslash \{0\} } |\KK_M'''(\tau-\tau_j)|  \right)\\
&\leq CN^2,
\end{align*}
where we have used
\begin{align}
\|\balphab\|_{2,\infty} \triangleq \max_{1\leq j \leq J} \|\balphab_j\|_2 \leq C,~\text{and}~\|\bbetab\|_{2,\infty} \triangleq \max_{1\leq j \leq J} \|\bbetab_j\|_2 \leq C\frac1 N
\label{bound_coef_ab}
\end{align}
for the third line~\cite{yang2014exact, yang2016super}. The last line follows from equation~(2.25) and Lemma~2.7 of paper~\cite{candes2014towards}.

Then, in the interval $[0, 0.16/N]$, due to $\w_R(0) = \w_R'(0) = \w_I(0) = \w_I'(0) = 0$, we have
\begin{align*}
\|\w_R(\tau)\|_2 &= \left\|\w_R(0) + \tau \w_R'(0) + \frac{\tau^2}{2} \w_R''(\taut)\right\|_2 \\
&=\frac{\tau^2}{2} \|\w_R''(\taut)\|_2 \leq CN^2\tau^2
\end{align*}
with some $\taut \in [0, 0.16/N]$. Similarly, we can get
\begin{align*}
\|\w_I(\tau)\|_2 \leq CN^2\tau^2.
\end{align*}
It follows that
\begin{align*}
\|\h_j-\QQb(\tau)\|_2 = \|\w(\tau)\|_2 \leq \|\w_R(\tau)\|_2 + \|\w_I(\tau)\|_2 \leq CN^2\tau^2,
\end{align*}
which implies that
\begin{align*}
\|\h_j-\QQ(\tau)\|_2 \leq CN^2\tau^2
\end{align*}
and we finish the proof.

\section{Proof of Theorem \ref{dualstability1}}
\label{proof_dual_thm2}

In this section, we extend the proof of Lemma 2.7 in~\cite{candes2013super} to prove our Theorem~\ref{dualstability1}.
Define a vector-valued polynomial $\QQ_1(\tau)$ that shares the same form of $\QQ(\tau)$ as in~\eqref{dual_rand}, namely,
\begin{align}
\QQ_1(\tau)=\sumj \K_M(\tau-\tau_j)\balpha_j^1 +\sumj \K_M'(\tau-\tau_j)\bbeta_j^1,
\label{dual_rand1}
\end{align}
where the random matrix kernel $\K_M(\tau)$ is defined in~\eqref{kernb} and the coefficient vectors $\balpha^1 = [{\balpha_1^1}^H~\cdots~{\balpha_J^1}^H]^H$, $\bbeta^1 = [{\bbeta_1^1}^H~\cdots~{\bbeta_J^1}^H]^H$ are selected to satisfy
\begin{align*}
\QQ_1(\tau_j) &= \zero,\\
\QQ_1'(\tau_j) &= \h_j.
\end{align*}

Similar to Appendix~\ref{proof_dual_thm1}, we define another polynomial $\QQb_1(\tau)$ with the squared Fej\'er kernel $\KK_M(\tau)$, namely,
\begin{align}
\QQb_1(\tau) = \sumj \KK_M(\tau-\tau_j) \balphab_j^1 + \sumj \KK_M'(\tau-\tau_j) \bbetab_j^1,
\label{dual_fej1}
\end{align}
where $\balphab^1 = [{\balphab_1^1}^H~\cdots~{\balphab_J^1}^H]^H$,  $\bbetab^1 = [{\bbetab_1^1}^H~\cdots~{\bbetab_J^1}^H]^H$ are the coefficients that are selected such that
\begin{equation}
\begin{aligned}
\QQb_1(\tau_j) &= \zero,\\
\QQb_1'(\tau_j) &= \h_j.
\label{constraint_QQb}
\end{aligned}
\end{equation}
It can be seen that the polynomial $\QQb_1(\tau)$~\eqref{dual_fej1} is to $\QQ_1(\tau)$~\eqref{dual_rand1}  what $\QQb(\tau)$~\eqref{dual_zai} is to $\QQ(\tau)$~\eqref{dual_rand}. Therefore, we can show that  $\| \QQ_1(\tau) - \QQb_1(\tau) \|_2$ is upper bounded with a very small number when provided with~\eqref{N_final} by using a similar strategy to that in paper~\cite{yang2016super}. This further implies that
\begin{align*}
\|\QQ_1(\tau)\|_2 &\leq \|\QQb_1(\tau)\|_2 + \| \QQ_1(\tau) - \QQb_1(\tau) \|_2 \\
&\leq C  \|\QQb_1(\tau)\|_2,\\
\|\h_j(\tau-\tau_j) - \QQ_1(\tau)\|_2 &\leq  \|\h_j(\tau-\tau_j) - \QQb_1(\tau)\|_2 + \| \QQ_1(\tau) - \QQb_1(\tau) \|_2 \\
&\leq C \|\h_j(\tau-\tau_j) - \QQb_1(\tau)\|_2.
\end{align*}
Then, we only need to bound $\|\QQb_1(\tau)\|_2$ and $ \|\h_j(\tau-\tau_j) - \QQb_1(\tau)\|_2$.

Note that the constraints in~\eqref{constraint_QQb} can be expressed in the following matrix form
\begin{align*}
\left[\begin{array}{cc}
\Db_0 \otimes \I_K & \Db_1 \otimes \I_K\\
\Db_1 \otimes \I_K & \Db_2 \otimes \I_K
\end{array}\right]
\left[\begin{array}{c}
\balphab^1\\
\bbetab^1
\end{array}\right] =
\left[\begin{array}{c}
\zero\\
\h
\end{array}\right]
\end{align*}
with $(\Db_l)_{sj} = \KK_M^l(t_s-t_j)$ and $\h = [\h_1^H~\cdots~\h_J^H]^H$. Define
\begin{align*}
\Db = \left[\begin{array}{cc}
\Db_0 & \Db_1 \\
\Db_1 & \Db_2
\end{array}\right],
\end{align*}
It is shown in~\cite{candes2014towards} that $\Db$ is invertible, which implies that $\Db\otimes \I_K$ is also invertible and these coefficient vectors can be expressed as
\begin{align*}
\left[\begin{array}{c}
\balphab^1\\
\bbetab^1
\end{array}\right] &=
\left[\begin{array}{c}
-(\Db_0\otimes \I_K)^{-1}(\Db_1 \otimes \I_K)\\
\I_{JK}
\end{array}\right] \S^{-1} \h \\
&=\left(\left[\begin{array}{c}
-\Db_0^{-1}\Db_1\\
\I_{J}
\end{array}\right]  \otimes \I_K\right) \S^{-1} \h
\end{align*}
with
\begin{align*}
\S &\triangleq \Db_2\otimes \I_K-(\Db_1\otimes \I_K)(\Db_0\otimes \I_K)^{-1}(\Db_1\otimes \I_K)\\
&=(\Db_2-\Db_1\Db_0^{-1}\Db_1)\otimes \I_K \\
&\triangleq \Sb \otimes \I_K.
\end{align*}
Then, the coefficient vectors can be rewritten as
\begin{align*}
\left[\begin{array}{c}
\balphab^1\\
\bbetab^1
\end{array}\right] = \left(\left(\left[\begin{array}{c}
-\Db_0^{-1}\Db_1\\
\I_{J}
\end{array}\right]  \Sb^{-1} \right)\otimes \I_K\right) \h.
\end{align*}

Define $\|\A\|_{\infty} \triangleq \max_{i} \sum_j |a_{ij}|$ as the infinity norm of a matrix $\A$. Using a similar method to that in the proof of Lemma~5.3 in the technical report~\cite{yang2014exact_techrepo}, we can bound the $l_{2,\infty}$ norm of $\balphab^1$ and $\bbetab^1$ as
\begin{equation}
\begin{aligned}
\|\balphab^1\|_{2,\infty} & \triangleq \max_{1\leq j \leq J} \|\balphab_j^1\|_2 \leq \|\Db_0^{-1}\Db_1  \Sb^{-1} \otimes \I_K\|_{\infty} \|\h\|_{2,\infty}\\
&\leq \|\Db_0^{-1}\Db_1  \Sb^{-1} \|_{\infty} \leq C \frac 1N,\\
\|\bbetab^1\|_{2,\infty} & \triangleq \max_{1\leq j \leq J} \|\bbetab_j^1\|_2 \leq \|\Sb^{-1}\|_{\infty} \leq C\frac{1}{N^2},
\label{bound_coef_ab1}
\end{aligned}
\end{equation}
 where we have used the bounds (B.7) and (B.8) in paper~\cite{candes2013super}. It follows that
 \begin{align*}
\|\QQ_1(\tau)\|_2 &\leq C \|\QQb_1(\tau)\|_2\\
&\leq \|\balphab^1\|_{2,\infty} \sum_{\tau_j \in\TT} |\KK_M(\tau-\tau_j)| +  \|\bbetab^1\|_{2,\infty} \sum_{\tau_j \in\TT} |\KK_M'(\tau-\tau_j)| \\
&\leq C \frac{1}{N},
\end{align*}
where the last inequality follows from~\eqref{bound_coef_ab1} and (B.9) in paper~\cite{candes2013super}.

With the same method used to obtain~\eqref{SN2}, we can show that
\begin{align*}
\|\h_j\tau - \QQb_1(\tau)\|_2 \leq CN\tau^2
\end{align*}
when we consider $\tau_j = 0$ without loss of generality. Therefore, we obtain~\eqref{SN21} and finish the proof.

\section{Proof of Lemma \ref{Q22}}
\label{proofQ22}

Recall that the dual certificate $\q\in\CCC^N$ constructed in~\cite{yang2016super} satisfies the two conditions~\eqref{dual_cert1} and~\eqref{dual_cert2} that are required in Definition~\ref{dual_cert} when $N$ satisfies the lower bound given in~\eqref{N_final}. Therefore, we use the optimal $\q\in\CCC^N$ constructed in~\cite{yang2016super}, that is
\begin{align}
\q = \left(\W^H \W\right)^{-1} \left(\sumj \left[ \begin{array}{c}
\a(\tau_j)^H \e_{-2M}\b_{-2M}^H\\
\vdots\\
\a(\tau_j)^H \e_{2M}\b_{2M}^H
\end{array}\right]\balpha_j+\sumj \left[    \begin{array}{c}
i2\pi(-2M)\a(\tau_j)^H \e_{-2M}\b_{-2M}^H\\
\vdots\\
i2\pi(2M)\a(\tau_j)^H \e_{2M}\b_{2M}^H
\end{array}     \right]\bbeta_j   \right).
\label{opt_q}
\end{align}
Here, $\W=\change{diag}([w_{-2M},\cdots,w_0,\cdots,w_{2M}])$ denotes a weighting matrix and
\begin{align}
\left[
\begin{array}{c}
\balpha\\
\bbeta
\end{array}
\right] = \left[ \balpha_1^H~\cdots~\balpha_J^H~\bbeta_1^H~\cdots~\bbeta_J^H \right]^H,~\balpha_j,\bbeta_j\in\CCC^{K}
\label{defab}
\end{align}
are some coefficients that satisfy
\begin{align*}
\A\left(\W^H \W \right)^{-1} \A^H \left[
\begin{array}{c}
\balpha\\
\bbeta
\end{array}
\right]  = \c_h,
\end{align*}
where $\A\in\CCC^{2KJ \times N}$ and $\c_h\in \CCC^{2KJ}$ are given as
\begin{align*}
\A &= \left[\begin{array}{ccc}
\b_{-2M}\e_{-2M}^H \a(\tau_1)& \cdots &\b_{2M}\e_{2M}^H \a(\tau_1) \\
\vdots& \ddots & \vdots\\
\b_{-2M}\e_{-2M}^H \a(\tau_J)& \cdots &\b_{2M}\e_{2M}^H \a(\tau_J)\\
-i2\pi(-2M)\b_{-2M}\e_{-2M}^H \a(\tau_1)& \cdots &-i2\pi(2M)\b_{2M}\e_{2M}^H \a(\tau_1) \\
\vdots& \ddots & \vdots\\
-i2\pi(-2M)\b_{-2M}\e_{-2M}^H \a(\tau_J)& \cdots & -i2\pi(2M)\b_{2M}\e_{2M}^H \a(\tau_J)\\
\end{array}\right],\\
\c_h&=\left[\h_1^H~\cdots~\h_J^H~\zero_{K\times 1}^H~ \cdots~  \zero_{K\times 1}^H  \right]^H.
\end{align*}

Plugging in the optimal dual certificate $\q$ in~\eqref{opt_q}, the polynomial $\QQt(\tau)$ can be represented as
\begin{equation}
\begin{aligned}
\QQt(\tau) &= \BB^*(\BB\BB^*)^{-1}(\q) \a(\tau)\\
%&= \BB^*\left\{(\BB\BB^*)^{-1}\left[ \left(\W^H \W\right)^{-1} \left(\sumj \left[ \begin{array}{c}
%\a(\tau_j)^H \e_{-2M}\b_{-2M}^H\\
%\vdots\\
%\a(\tau_j)^H \e_{2M}\b_{2M}^H
%\end{array}\right]\balpha_j+\sumj \left[    \begin{array}{c}
%i2\pi(-2M)\a(\tau_j)^H \e_{-2M}\b_{-2M}^H\\
%\vdots\\
%i2\pi(2M)\a(\tau_j)^H \e_{2M}\b_{2M}^H
%\end{array}     \right]\bbeta_j   \right)\right]\right\} \a(\tau)\\
&= \BB^*\!\!\left\{\!(\BB\BB^*)^{-1} \!\!\left(\!\sumj \left[\!\! \begin{array}{c}
\frac{1}{w_{-2M}^2}e^{-i2\pi \tau_j(-2M)}\b_{-2M}^H\\
\vdots\\
\frac{1}{w_{2M}^2}e^{-i2\pi \tau_j(2M)}\b_{2M}^H
\end{array}\!\!\right]\!\balpha_j\!+\!\sumj \!\left[ \!\!   \begin{array}{c}
\frac{i2\pi(-2M)}{w_{-2M}^2}e^{-i2\pi \tau_j(-2M)}\b_{-2M}^H\\
\vdots\\
\frac{i2\pi(2M)}{w_{2M}^2}e^{-i2\pi \tau_j(2M)}\b_{2M}^H
\end{array}  \!\!   \right]\!\bbeta_j \!\!  \right)\!\!\right\} \!\a(\tau)\\
&=\BB^*\!\!\left(\!\sumj\!\! \left[\! \begin{array}{c}
\!\!\frac{1}{w_{\!-\!2M}^2 \|\b_{\!-\!2M}\|_2^2} e^{-i2\pi \tau_j(\!-\!2M)} \b_{\!-\!2M}^H \balpha_j\\
\vdots\\
\frac{1}{w_{2M}^2  \|\b_{2M}\|_2^2}e^{-i2\pi \tau_j(2M)} \b_{2M}^H \balpha_j
\end{array} \!\!\! \right]\! \right)\! \a(\tau)  \!+\!  \BB^*\!\!\left(\sumj \!\!\left[ \!\!\!   \begin{array}{c}
\frac{i2\pi(-2M)}{w_{\!-\!2M}^2  \|\b_{\!-\!2M}\|_2^2}e^{-i2\pi \tau_j(\!-\!2M)} \b_{\!-\!2M}^H\bbeta_j \\
\vdots\\
\frac{i2\pi(2M)}{w_{2M}^2  \|\b_{2M}\|_2^2}e^{-i2\pi \tau_j(2M)}\b_{2M}^H \bbeta_j
\end{array}  \!\!  \! \right] \! \right)\! \a(\tau)\\
%&= \summ \sumj \frac{1}{w_{m}^2 \|\b_{m}\|_2^2} e^{-i2\pi \tau_jm}  \b_{m}^H \balpha_j \b_m \e_m^H \a(\tau)+\summ \sumj \frac{i2\pi m}{w_{m}^2 \|\b_{m}\|_2^2} e^{-i2\pi \tau_jm}  \b_{m}^H \balpha_j \b_m \e_m^H \a(\tau)\\
&= \sumj \left(\summ \frac{1}{w_{m}^2 \|\b_{m}\|_2^2} e^{i2\pi (\tau-\tau_j)m}   \b_m \b_{m}^H  \right)\balpha_j + \sumj \left(\summ \frac{i2\pi m}{w_{m}^2 \|\b_{m}\|_2^2} e^{i2\pi (\tau-\tau_j)m}   \b_m \b_{m}^H  \right) \bbeta_j\\
&=\sumj \Kt_M(\tau-\tau_j)\balpha_j +\sumj \Kt_M'(\tau-\tau_j)\bbeta_j
\label{QKab}
\end{aligned}
\end{equation}
with
\begin{align*}
\Kt_M(\tau) &= \summ \frac{1}{w_{m}^2 \|\b_{m}\|_2^2}e^{i2\pi \tau m}\b_m \b_{m}^H,\\
\Kt_M'(\tau) &= \summ \frac{i2\pi m}{w_{m}^2 \|\b_{m}\|_2^2}e^{i2\pi \tau m}\b_m \b_{m}^H.
\end{align*}

As in~\cite{yang2016super}, by setting $w_m = \sqrt{\frac{M}{g_M(m)}}$, we have
\begin{align*}
\Kt_M(\tau) &= \frac 1 M \summ  g_M(m) e^{i2\pi \tau m}  \frac{\b_m}{\|\b_{m}\|_2} \frac{\b_m^H}{\|\b_{m}\|_2},\\
\Kt_M'(\tau) &= \frac 1 M \summ (i2\pi m)  g_M(m) e^{i2\pi \tau m}  \frac{\b_m}{\|\b_{m}\|_2} \frac{\b_m^H}{\|\b_{m}\|_2}.
\end{align*}
%Assume that the entries of $\b_m$ satisfy i.i.d.\ $\CN(0,1)$ and $\frac{\b_m}{\|\b_m\|_2}$ are normalized vectors that satisfy uniform distribution on the unit sphere. Then, we have
Recall that we assume $\frac{\b_m}{\|\b_m\|_2}$ satisfies~the isotropy property~\eqref{isotprop}, namely,
\begin{align*}
\EEE\left( \frac{\b_m}{\|\b_m\|_2} \frac{\b_m^H}{\|\b_m\|_2}\right) = \frac 1 K \I_K,
\end{align*}
which implies that
\begin{align*}
\EEE \Kt_M(\tau) &= \frac{1}{MK} \summ  g_M(m) e^{i2\pi \tau m} \I_K = \frac 1 K \KK_M(\tau) \I_K,\\
\EEE \Kt_M'(\tau) &= \frac{1}{MK} \summ (i2\pi m)  g_M(m) e^{i2\pi \tau m}  \I_K= \frac 1 K \KK_M'(\tau) \I_K,
\end{align*}
where $\KK_M(\tau) \triangleq \frac{1}{M} \summ  g_M(m) e^{i2\pi \tau m} $ is the squared Fej\'er kernel.
It follows that
\begin{align*}
\|\QQt(\tau)\|_{2,2}^2 &= \int_0^1 \|\QQt(\tau)\|_2^2 d\tau\\
&= \int_0^1 \left\|\sumj \Kt_M(\tau-\tau_j)\balpha_j +\sumj \Kt_M'(\tau-\tau_j)\bbeta_j\right\|_2^2 d\tau\\
& = \int_0^1 \| \K \balpha + \K' \bbeta\|_2^2 d\tau,
\end{align*}
where $\K \triangleq [ \Kt_M(\tau-\tau_1) ~\cdots~\Kt_M(\tau-\tau_J) ]$ and $\K' \triangleq [ \Kt_M'(\tau-\tau_1) ~\cdots~\Kt_M'(\tau-\tau_J) ]$ are two block matrices with size $K \times KJ$. $\balpha$ and $\bbeta$ are two coefficient vectors defined in~\eqref{defab}. It follows from~\cite{yang2016super} that
\begin{align}
\left\| \left[ \begin{array}{c}
\balpha\\
\sqrt{|\KK_M''(0)|} \bbeta
\end{array}
\right]\right\|_2 \leq \|\D^{-1}\| \|\c_h\|_2 \leq 2\|(\EEE\D)^{-1}\| \|\c_h\|_2 \leq C\sqrt{J},
\label{bound_ab}
\end{align}
where $\D$ denotes a system matrix and is defined as
\begin{align*}
\D \triangleq \left[\begin{array}{cc}
\D_0 & \frac{1}{\sqrt{|\KK_M''(0)|} } \D_1\\
- \frac{1}{\sqrt{|\KK_M''(0)|} } \D_1 & -\frac{1}{|\KK_M''(0)|} \D_2
\end{array}\right]
\end{align*}
with $[\D_l]_{sj} = \Kt_M^{(l)}(\tau_s-\tau_j)$, $l=0,1,2$.\footnote{Note that we use $\Kt_M^{(l)}(\tau)$ to denote the $l$-th order derivative of $\Kt_M(\tau)$, namely, $\Kt_M^{(0)}(\tau) = \Kt_M(\tau)$, $\Kt_M^{(1)}(\tau) = \Kt_M'(\tau)$, and $\Kt_M^{(2)}(\tau) = \Kt_M''(\tau)$.} To obtain~\eqref{bound_ab}, we have used $\|\D^{-1}\| \leq 2 \|(\EEE\D)^{-1}\|$~\cite[Lemma 7]{yang2016super}, $\|(\EEE\D)^{-1}\| \leq 1.568$~\cite[Lemma 4]{yang2016super} and $\|\c_h\|_2 = \sqrt{J}$.
The inequality in~\eqref{bound_ab} also implies that
\begin{align}
\|\balpha\|_2 \leq C_\alpha \sqrt{J},~\text{and}~\|\bbeta\|_2 \leq C_\beta \frac{\sqrt{J}}{N}
\label{bound_abfinal}
\end{align}
since $|\KK_M''(0)| = \frac{4\pi^2(M^2-1)}{3} \leq CN^2$.
Then, we have
\begin{equation}
\begin{aligned}
\|\QQt(\tau)\|_{2,2}^2 & = \int_0^1 \| \K \balpha + \K' \bbeta\|_2^2 d\tau\\
& \leq \int_0^1 \| \K \balpha \|_2^2 d\tau + \int_0^1 \| \K' \bbeta\|_2^2 d\tau + 2\int_0^1 \| \K \balpha \|_2 \|  \K' \bbeta \|_2 d\tau\\
& \leq \int_0^1 \| \K \|^2 \| \balpha \|_2^2 d\tau + \int_0^1 \| \K'  \|^2 \| \bbeta\|_2^2 d\tau + 2\int_0^1 \| \K \| \| \balpha \|_2 \|  \K'  \| \|\bbeta \|_2 d\tau\\
%& \leq C_\alpha^2 J \int_0^1 \| \K \|^2 d\tau + C_\beta^2 \frac{J}{N^2} \int_0^1 \| \K'  \|^2 d\tau + 2 C_\alpha C_\beta \frac J N \int_0^1 \| \K \|  \|  \K'  \|  d\tau\\
& \leq C_\alpha^2 J \int_0^1 \| \K \|^2 d\tau + C_\beta^2 \frac{J}{N^2} \int_0^1 \| \K'  \|^2 d\tau + 2 C_\alpha C_\beta \frac J N \left(\int_0^1 \| \K \|^2 d\tau\right)^{\frac 1 2} \left( \int_0^1 \|  \K'  \|^2  d\tau \right)^{\frac 1 2},
 \label{normQt21}
\end{aligned}
\end{equation}
where the last inequality follows from~\eqref{bound_abfinal} and the Cauchy-Schwarz inequality.
% while the third inequality is obtained by using the following bounds for coefficient vectors $\balpha$ and $\bbeta$~\cite{yang2016super}
%\begin{align*}
%\max_{1\leq j \leq J} \|\balpha_j\|_2 \leq C_{\alpha},~\max_{1\leq j \leq J} \|\bbeta_j\|_2 \leq \frac{C_{\beta}}{N}.
%\end{align*}
%\note{\SL{What we have is \eqref{bound_coef_ab}, not the above bound.}}

To bound $\|\QQt(\tau)\|_{2,2}^2$, we are left with bounding $\int_0^1  \| \K \|^2 d\tau$ and $\int_0^1  \| \K' \|^2 d\tau$. Note that
\begin{equation}
\begin{aligned}
\int_0^1  \| \K \|^2 d\tau
&=\int_0^1  \left\|(\K - \EEE \K) + \EEE\K\right\|^2 d\tau\\
&\leq \int_0^1  \left\|\K - \EEE \K \right\|^2 d\tau  + \int_0^1  \left\| \EEE\K\right\|^2 d\tau  + 2 \int_0^1  \left\|\K - \EEE \K \right\|   \left\| \EEE\K\right\|  d\tau\\
&\leq \int_0^1  \left\|\K - \EEE \K \right\|^2 d\tau  + \int_0^1  \left\| \EEE\K\right\|^2 d\tau  + 2 \left( \int_0^1  \left\|\K - \EEE \K \right\|^2 d\tau \right)^{\frac 1 2}  \left( \int_0^1  \left\| \EEE\K\right\|^2  d\tau \right)^{\frac{1}{2}},
\label{intKt}
\end{aligned}
\end{equation}
where the last inequality follows from the Cauchy-Schwarz inequality.

Denote $\lambda_{\max}(\X)$ as the maximum eigenvalue of a matrix $\X$. The second term in~\eqref{intKt} can be bounded with
\begin{equation}
\begin{aligned}
\int_0^1  \left\| \EEE\K\right\|^2 d\tau & = \int_0^1  \lambda_{\max}\left( \EEE\K \EEE \K^H \right) d\tau \\
& = \int_0^1  \lambda_{\max}\left( \sumj \EEE \Kt_M(\tau-\tau_j) \EEE \Kt_M(\tau-\tau_j)^H  \right) d\tau \\
& = \int_0^1  \lambda_{\max}\left( \sumj \frac{1}{K^2} |\KK_M(\tau-\tau_j)|^2 \I_K  \right) d\tau \\
& = \frac{1}{K^2} \sumj  \int_0^1 |\KK_M(\tau-\tau_j)|^2  d\tau \\
& = \frac{J}{K^2} \summ \frac{1}{M^2} g_M^2(m)\\
& \leq C \frac{J}{NK^2}
\label{EK2}
\end{aligned}
\end{equation}
for some numerical constant $C$. Here we use Parseval's theorem and $\|g_M\|_{\infty} = \sup_m |g_M(m)| \leq 1$~\cite{tang2013compressed} to get the last equality and inequality, respectively.

Next, we bound $\left\|\K - \EEE \K \right\|$ with the matrix Bernstein inequality~\cite{tropp2015introduction}. Define a set of independent zero mean random matrices $\S_m\in\CCC^{K\times KJ}, m=-2M,\ldots,2M$ with
\begin{align*}
\S_m \triangleq \frac 1 M g_M(m) e^{i 2 \pi \tau m} [e^{-i2\pi\tau_1m}~\cdots~e^{-i2\pi\tau_Jm} ] \otimes \left( \frac{\b_m}{\|\b_{m}\|_2} \frac{\b_m^H}{\|\b_{m}\|_2} - \frac 1 K \I_K \right),
\end{align*}
where ``$\otimes$" denotes the Kronecker product. Then, we have
\begin{align*}
\K-\EEE \K = \summ \S_m.
\end{align*}
To apply the matrix Bernstein inequality, we need to bound the spectral norm  $\|\S_m\| $ and the matrix variance statistic of the sum:
\begin{align*}
 \max\left\{\left\|\summ \EEE\left(\S_m \S_m^H\right) \right\| ,\left\|\summ \EEE\left(\S_m^H \S_m\right) \right\|    \right\},
 \end{align*}
which we tackle separately in the sequel.

Note that we can bound the spectral norm as
\begin{align*}
\|\S_m\| & = \|\S_m \S_m^H\|^{\frac 1 2} = \left\| \sumj \frac{1}{M^2} g_M^2(m) \left( \frac{1}{K^2}\I_K  +  \left( 1-\frac 2 K \right)  \frac{\b_m}{\|\b_{m}\|_2} \frac{\b_m^H}{\|\b_{m}\|_2}  \right) \right\|^{\frac 1 2}\\
& \leq  \frac{\sqrt{J}}{M} \left\| \frac{1}{K^2}\I_K  +  \left( 1-\frac 2 K \right)  \frac{\b_m}{\|\b_{m}\|_2} \frac{\b_m^H}{\|\b_{m}\|_2}  \right\|^{\frac 1 2}\\
& \leq C \frac{\sqrt{J}}{M}
\end{align*}
with some numerical constant $C$. Here, the last inequality follows from
\begin{equation}
\begin{aligned}
 &\left\| \frac{1}{K^2}\I_K  +  \left( 1-\frac 2 K \right)  \frac{\b_m}{\|\b_{m}\|_2} \frac{\b_m^H}{\|\b_{m}\|_2}  \right\|^{\frac 1 2}\\
 =&\begin{cases}
\|1-1\|^{\frac 1 2} = 0,~~&K=1,\\
 \left\| \frac 1 4 \I_K \right\|^{\frac 1 2}, &K = 2,\\
\sqrt{ 1-\frac 2 K} \left\| \frac{\b_m}{\|\b_{m}\|_2} \frac{\b_m^H}{\|\b_{m}\|_2} + \frac{1}{K^2-2K} \I_K \right\|^{\frac 1 2} \leq \sqrt{2}, &K\geq 3.
\label{normcase}
 \end{cases}
\end{aligned}
\end{equation}

On the other hand, we have
\begin{align*}
\left\|\summ \EEE\left(\S_m \S_m^H\right) \right\| & = \left\| \summ \sumj \frac{1}{M^2} g_M^2(m)\left( \frac 1 K-\frac{1}{K^2} \right)\I_K \right\|\\
& = \summ \sumj \frac{1}{M^2} g_M^2(m)\left( \frac 1 K-\frac{1}{K^2} \right) \\
& \leq C \frac{J}{MK}
\end{align*}
and
\begin{align*}
\left\|\summ \EEE\left(\S_m^H \S_m\right) \right\|
& = \left\|\summ \frac{1}{M^2} \left( \frac 1 K-\frac{1}{K^2} \right) g_M^2(m) \E_{m\tau} \otimes \I_K \right\|\\
& \leq C\frac{1}{MK} \|\E_{m\tau}\|  \leq C\frac{1}{MK} \|\E_{m\tau}\|_F\\
& =  C\frac{J}{MK}
\end{align*}
where $\E_{m\tau}$ is a $J\times J$ matrix with the $(k,l)-$th entry being $e^{i2\pi(\tau_k-\tau_l)m}$.
Therefore, the matrix variance statistic of the sum can be bounded with $C\frac{J}{MK}$.
Then, applying the matrix Bernstein inequality~\cite{tropp2015introduction} yields that
\begin{align*}
&\PPP\left\{\left\| \K-\EEE \K \right\|\geq t \right\} \\
\leq & (K+KJ) \change{exp}\left( \frac{-3t^2}{ 8\frac{CJ}{MK} }  \right)\\
\leq & K(J+1) \change{exp}\left( -C\frac{MKt^2}{J}  \right),
\end{align*}
for any $t\in[0,C\frac{\sqrt{J}}{K}]$. Set $t = C\sqrt{\frac{J}{MK}\log\left(\frac{K(J+1)}{\delta}\right) }$, which belongs to the interval $[0,C\frac{\sqrt{J}}{K}]$ if $M\geq CK \log\left(\frac{K(J+1)}{\delta}\right)$. Then, we have
\begin{align*}
\PPP\left\{\left\| \K-\EEE \K \right\|\geq C \sqrt{\frac{J}{MK}\log\left(\frac{K(J+1)}{\delta}\right)} \right\} \leq \delta,
\end{align*}
which immediately suggests that
%\begin{align}
%\PPP\left\{\left\| \K-\EEE \K\right\|^2\leq C\frac{J}{NK}\log\left(\frac{K(J+1)}{\delta}\right) \right\} \geq 1- \delta.
%\label{KEK2}
%\end{align}
the following event
\begin{align*}
\EE_{\K}\triangleq  \left\{ \left\| \K-\EEE \K\right\|^2\leq C\frac{J}{NK}\log\left(\frac{K(J+1)}{\delta}\right)\right\}
\end{align*}
holds with probability at least $1-\delta$ provided that $N\geq CK \log\left(\frac{K(J+1)}{\delta}\right)$.
Conditioned on event $\EE_{\K}$, one can show that
\begin{equation}
\begin{aligned}
\int_0^1  \| \K \|^2 d\tau &\leq \int_0^1  \left\|\K - \EEE \K \right\|^2 d\tau  + \int_0^1  \left\| \EEE\K\right\|^2 d\tau  + 2 \left( \int_0^1  \left\|\K - \EEE \K \right\|^2 d\tau \right)^{\frac 1 2}  \left( \int_0^1  \left\| \EEE\K\right\|^2  d\tau \right)^{\frac{1}{2}}\\
&\leq C\frac{J}{NK}\log\left(\frac{K(J+1)}{\delta}\right)  + C\frac{J}{NK^2} + C \sqrt{ \frac{J}{NK^2} \frac{J}{NK}\log\left(\frac{K(J+1)}{\delta}\right)  }  \\
&\leq C\frac{J}{NK}\log\left(\frac{K(J+1)}{\delta}\right)
\label{intKt_final}
\end{aligned}
\end{equation}
holds by using~\eqref{EK2} provided that $N \geq CK \log\left(\frac{K(J+1)}{\delta}\right)$.

Similar to~\eqref{intKt}, we note that
\begin{equation}
\begin{aligned}
\int_0^1  \| \K' \|^2 d\tau
&=\int_0^1  \left\|(\K' - \EEE \K') + \EEE\K'\right\|^2 d\tau\\
&\leq \int_0^1  \left\|\K' - \EEE \K' \right\|^2 \! d\tau \! +\! \int_0^1 \! \left\| \EEE\K'\right\|^2 \!d\tau \! + \!2 \left( \int_0^1\!  \left\|\K' - \EEE \K' \right\|^2\! d\tau \right)^{\frac 1 2} \!\! \left( \int_0^1 \! \left\| \EEE\K'\right\|^2  d\tau \right)^{\frac{1}{2}},
\label{intKtp}
\end{aligned}
\end{equation}
Using Parseval's identify, we can bound the second term in~\eqref{intKtp} as
\begin{align*}
\int_0^1  \left\| \EEE\K'\right\|^2 d\tau & = \int_0^1  \lambda_{\max}\left( \EEE\K' \EEE \K'^H \right) d\tau \\
& = \int_0^1  \lambda_{\max}\left( \sumj \frac{1}{K^2} |\KK_M'(\tau-\tau_j)|^2 \I_K  \right) d\tau \\
& = \frac{1}{K^2} \sumj  \int_0^1 |\KK_M'(\tau-\tau_j)|^2  d\tau \\
& = \frac{J}{K^2} \summ 4\pi^2 m^2 \frac{1}{M^2} g_M^2(m)\\
& \leq C \frac{JN}{K^2}
%\label{EKp2}
\end{align*}
with some numerical constant $C$.

Now, we bound $\left\|\K' - \EEE \K' \right\|$ with the matrix Bernstein inequality~\cite{tropp2015introduction}. Define a set of independent zero mean random matrices $\S_m'\in\CCC^{K\times KJ}, m=-2M,\ldots,2M$ with
\begin{align*}
\S_m' \triangleq \frac 1 M (i2\pi m) g_M(m) e^{i 2 \pi \tau m} [e^{-i2\pi\tau_1m}~\cdots~e^{-i2\pi\tau_Jm} ] \otimes \left( \frac{\b_m}{\|\b_{m}\|_2} \frac{\b_m^H}{\|\b_{m}\|_2} - \frac 1 K \I_K \right).
\end{align*}
Then, we have
\begin{align*}
\K'-\EEE \K' = \summ \S_m'.
\end{align*}
It can be seen that $\S_m'$ is also the first order derivative of $\S_m$ with respect to $\tau$. We can then bound its spectral norm as
\begin{align*}
\|\S_m'\| & = \|\S_m' \S_m'^H\|^{\frac 1 2} = \left\| \sumj 4\pi^2 m^2 \frac{1}{M^2} g_M^2(m) \left( \frac{1}{K^2}\I_K  +  \left( 1-\frac 2 K \right)  \frac{\b_m}{\|\b_{m}\|_2} \frac{\b_m^H}{\|\b_{m}\|_2}  \right) \right\|^{\frac 1 2}\\
& \leq  C\sqrt{J} \left\| \frac{1}{K^2}\I_K  +  \left( 1-\frac 2 K \right)  \frac{\b_m}{\|\b_{m}\|_2} \frac{\b_m^H}{\|\b_{m}\|_2}  \right\|^{\frac 1 2}\\
& \leq C \sqrt{J}
\end{align*}
with some numerical constant $C$. Here, the last inequality follows from~\eqref{normcase}.

Further, we have
\begin{align*}
\left\|\summ \EEE\left(\S_m' \S_m'^H\right) \right\| & = \left\| \summ \sumj 4\pi^2 m^2 \frac{1}{M^2} g_M^2(m)\left( \frac 1 K-\frac{1}{K^2} \right)\I_K \right\|\\
& = \summ \sumj  4\pi^2 m^2 \frac{1}{M^2} g_M^2(m)\left( \frac 1 K-\frac{1}{K^2} \right) \\
& \leq C \frac{JM}{K}
\end{align*}
and
\begin{align*}
\left\|\summ \EEE\left(\S_m'^H \S_m'\right) \right\|
& = \left\|\summ 4\pi^2 m^2 \frac{1}{M^2} \left( \frac 1 K-\frac{1}{K^2} \right) g_M^2(m) \E_{m\tau} \otimes \I_K \right\|\\
& \leq C\frac{M}{K} \|\E_{m\tau}\|  \leq C\frac{M}{K} \|\E_{m\tau}\|_F\\
& =  C\frac{JM}{K}
\end{align*}
where $\E_{m\tau}$ is a $J\times J$ matrix with the $(k,l)-$th entry being $e^{i2\pi(\tau_k-\tau_l)m}$.
Therefore, the matrix variance statistic of the sum can be bounded with $C\frac{JM}{K}$.
Then, we combine the above bounds and apply the matrix Bernstein inequality to obtain
\begin{align*}
&\PPP\left\{\left\| \K'-\EEE \K' \right\|\geq t \right\} \\
\leq & (K+KJ) \change{exp}\left( \frac{-3t^2}{ 8\frac{CJM}{K} }  \right)\\
\leq & K(J+1) \change{exp}\left( -C\frac{Kt^2}{JM}  \right),
\end{align*}
for any $t\in [0, C\frac{M\sqrt{J}}{K}]$. Set $t = C \sqrt{\frac {JM}{K} \log\left(\frac{K(J+1)}{\delta}\right)}$, which belongs to the interval $[0, C\frac{M\sqrt{J}}{K}]$ if $M\geq C K \log\left(\frac{K(J+1)}{\delta}\right)$. Then, we have
\begin{align*}
\PPP\left\{\left\| \K'-\EEE \K'\right\|\geq C \sqrt{\frac {JM}{ K} \log\left(\frac{K(J+1)}{\delta}\right)} \right\} \leq \delta
\end{align*}
and the following event
\begin{align*}
\EE_{\K'}\triangleq  \left\{ \left\| \K'-\EEE \K'\right\|^2\leq C\frac{JN}{K}\log\left(\frac{K(J+1)}{\delta}\right)\right\}
\end{align*}
holds with probability at least $1-\delta$ provided that $N\geq CK \log\left(\frac{K(J+1)}{\delta}\right)$.
Thus, one can show that
\begin{equation}
\begin{aligned}
\int_0^1  \| \K' \|^2 d\tau &\leq \int_0^1\!  \left\|\K' - \EEE \K' \right\|^2 d\tau \! + \!\int_0^1\!  \left\| \EEE\K'\right\|^2 d\tau \! +\! 2 \left( \int_0^1 \! \left\|\K' - \EEE \K' \right\|^2 d\tau \right)^{\frac 1 2}  \left( \int_0^1 \! \left\| \EEE\K'\right\|^2  d\tau \right)^{\frac{1}{2}}\\
&\leq C \frac{JN}{K}  \log\left(\frac{K(J+1)}{\delta}\right)  + C\frac{JN} {K^2} + 2 \sqrt{\frac{JN} {K^2}  \frac{JN}{K}  \log\left(\frac{K(J+1)}{\delta}\right)  }\\
&\leq C \frac{JN}{K}  \log\left(\frac{K(J+1)}{\delta}\right)
\label{intKtp_final}
\end{aligned}
\end{equation}
holds on the event $\EE_{\K'}$ provided that $N \geq CK \log\left(\frac{K(J+1)}{\delta}\right)$.

Plugging~\eqref{intKt_final} and~\eqref{intKtp_final} into~\eqref{normQt21}, we can bound $\|\QQt(\tau)\|_{2,2}^2 $
with
\begin{equation}
\begin{aligned}
\|\QQt(\tau)\|_{2,2}^2
&\leq C_\alpha^2 J \int_0^1 \| \K \|^2 d\tau + C_\beta^2 \frac{J}{N^2} \int_0^1 \| \K'  \|^2 d\tau + 2 C_\alpha C_\beta \frac J N \left(\int_0^1 \| \K \|^2 d\tau\right)^{\frac 1 2} \left( \int_0^1 \|  \K'  \|^2  d\tau \right)^{\frac 1 2}\\
&\leq C \frac{J^2}{NK}  \log\left(\frac{K(J+1)}{\delta}\right)
\label{normQt21_final}
\end{aligned}
\end{equation}
and finish the proof of Lemma~\ref{Q22} by taking square root on both sides of~\eqref{normQt21_final}.

\end{document}